\newtheorem{theorem}{Theorem}[section]
\newtheorem{lemma}[theorem]{Lemma}
\newtheorem{cor}[theorem]{Corollary}
\newtheorem{prop}[theorem]{Proposition}
\theoremstyle{definition}
\newtheorem{defn}[theorem]{Definition}
\newtheorem{example}[theorem]{Example}
\theoremstyle{remark}
\newtheorem{remark}[theorem]{Remark}
\numberwithin{equation}{section}
\DeclareMathAlphabet      {\mathbfit}{OML}{cmm}{b}{it}
\let\text=\mbox
\renewcommand{\a}{\alpha}
\renewcommand{\b}{\beta}
\renewcommand{\c}{\gamma}
\renewcommand{\d}{\delta}
\newcommand{\g}{\lambda}
\renewcommand{\o}{\omega}
\newcommand{\q}{\quad}
\newcommand{\s}{\sigma}
\newcommand{\cal}{\mathcal}
\newcommand{\M}{{\cal M}}
\newcommand{\ty}{\infty}
\newcommand{\f}{\varphi}
\newcommand{\ov}[1]{\overline{#1}}
\renewcommand{\O}{\Omega}
\newcommand{\pa}{\partial}
\newcommand{\st}{\subset}
\newcommand{\stq}{\subseteq}
\newcommand{\udesno}[1]{\unskip\nobreak\hfil\penalty50\hskip1em\hbox{}
             \nobreak\hfil{#1\unskip\ignorespaces}
                 \parfillskip=\z@ \finalhyphendemerits=\z@\par
                 \parfillskip=0pt plus 1fil}
\newcommand{\eR}{\mathbb{R}}
\newcommand{\eN}{\mathbb{N}}
\newcommand{\Ze}{\mathbb{Z}}
\newcommand{\Ce}{\mathbb{C}}
\newcommand{\re}{\mathop{\mathrm{Re}}}
\newcommand{\po}{{\mathop{\mathcal P}}}
\newcommand{\res}{\operatorname{res}}
\newcommand{\sideremark}[1]{\ifvmode\leavevmode\fi\vadjust{\vbox to0pt{\vss 
      \hbox to 0pt{\hskip\hsize\hskip1em           
 \vbox{\hsize2cm\tiny\raggedright\pretolerance10000
 \noindent #1\hfill}\hss}\vbox to8pt{\vfil}\vss}}}%
\newcommand{\D}{\mathrm{d}}
\newcommand{\I}{\mathbbm{i}}
\newcommand{\E}{\mathrm{e}}
\begin{document}

\title[Complex dimensions of fractals and meromorphic extensions]{Complex dimensions of fractals and meromorphic extensions of fractal zeta functions}%

\author{Michel L.\ Lapidus}
\address{Department of Mathematics,
University of California, Riverside, California 92521-0135 USA}
\email{lapidus@math.ucr.edu}

\thanks{\vskip1mm The research of Michel L.~Lapidus was partially supported by the National Science
Foundation under grants ~DMS-0707524 and DMS-1107750, as well as by the Institut des Hautes \' Etudes Scientifiques (IH\' ES) where the first author was a visiting professor in the Spring of 2012 while part of this research was completed. The research of Goran Radunovi\'c and Darko \v Zubrini\'c was supported in part by the Croatian Science Foundation under the project IP-2014-09-2285 and by the Franco-Croatian 
PHC-COGITO project.}

\author{Goran Radunovi\'c}
\address{Department of Applied Mathematics, Faculty of Electrical Engineering and Computing, University of Zagreb, Unska 3, 10000 Zagreb, Croatia}
\email{goran.radunovic@fer.hr}

\author{Darko \v Zubrini\'c}
\address{Department of Applied Mathematics, Faculty of Electrical Engineering and Computing, University of Zagreb, Unska 3, 10000 Zagreb, Croatia}
\email{darko.zubrinic@fer.hr}


\subjclass[2010]{Primary: 11M41, 28A12, 28A75, 28A80, 28B15, 42B20, 44A05, 30D30.
Secondary: 35P20, 40A10, 44A10, 45Q05.}

\renewcommand{\subjclassname}{%
\textup{2010} Mathematics Subject Classification}

\keywords{
zeta function, distance zeta function, tube zeta function, fractal set, 
fractal drum, box dimension, principal complex dimensions, Minkowski content, 
Minkowski measurable set, residue, Dirichlet series, Dirichlet integral, meromorphic extension, Sierpi\'nski carpet, $n$-th order Cantor set, generalized Cantor set, hyperfractal.}

\begin{abstract}
 We study 
meromorphic extensions of distance and tube zeta functions, as well as of geometric zeta functions of fractal strings.
The distance zeta function $\zeta_A(s):=\int_{A_\delta} d(x,A)^{s-N}\mathrm{d}x$, where $\d>0$ is fixed and $d(x,A)$ denotes the Euclidean distance from $x$ to $A$, has been introduced by the first author in 2009, extending the definition of the zeta function $\zeta_{\mathcal L}$ associated with
bounded fractal
strings $\mathcal L=(\ell_j)_{j\ge1}$ to arbitrary bounded subsets $A$ of the $N$-dimensional Euclidean space.
The abscissa of Lebesgue (i.e., absolute) convergence $D(\zeta_A)$ coincides with $D:=\overline\dim_BA$,  the upper box (or Minkowski) dimension of~$A$.
The (visible) complex dimensions of $A$ are the poles of the meromorphic continuation of the fractal zeta
function (i.e., the distance or tube zeta function) of $A$ to a suitable connected neighborhood of the ``critical line'' $\{\re s=D\}$.
We establish several meromorphic extension results, assuming some suitable information about the
second term of the asymptotic expansion of the tube function $|A_t|$ as $t\to0^+$, where $A_t$ is the Euclidean
$t$-neighborhood of $A$. 
We pay particular attention
to a class of Minkowski measurable sets, such that $|A_t|=t^{N-D}(\mathcal M+O(t^\c))$ as $t\to0^+$,
with $\c>0$, and to a class of Minkowski nonmeasurable sets, such that $|A_t|=t^{N-D}(G(\log t^{-1})+O(t^\c))$ as $t\to0^+$, where $G$ is a nonconstant periodic function and $\c>0$.
In both cases, we show that $\zeta_A$ can be meromorphically extended (at least) to the 
open right half-plane $\{\re s>D-\c\}$ and determine the corresponding visible complex dimensions. Furthermore, up to a multiplicative constant, the residue of $\zeta_A$ evaluated at $s=D$ is
shown to be equal to $\mathcal M$ (the Minkowski content of $A$) and to the mean value of $G$
(the average Minkowski content of $A$), respectively. 
Moreover, we construct a class of fractal strings with principal complex dimensions of any prescribed order, as well as with an infinite number of essential singularities on the critical line $\{\re s=D\}$.
Finally, using an appropriate quasiperiodic version of the above construction, with infinitely many
suitably chosen quasiperiods associated with a two-parameter family of generalized Cantor sets, 
we construct ``maximally-hyperfractal'' compact subsets of $\eR^N$, for $N\ge1$ arbitrary.
These are compact subsets of $\eR^N$ such that the corresponding fractal zeta functions have nonremovable
singularities at every point of the critical line $\{\re s=D\}$.
\end{abstract}

\maketitle

\tableofcontents

\section{Introduction}\label{intro}

\subsection{Motivations and goals}\label{motivation} This research is a continuation of our work,
initiated by the first author and started in \cite{dtzf} (see also \cite{fzf}), on extending the theory of zeta functions for fractal strings,
to fractal sets and arbitrary compact sets
in Euclidean spaces.
 The new zeta function on which it is based has been introduced
in 2009 by the first author; see its definition below in Eq.\  (\ref{z}). We denote this zeta function by $\zeta_A$ and refer to it as a ``distance zeta function". Here, by a fractal set,
we mean any (nonempty) bounded set $A$ of the Euclidean space $\eR^N$, with $N\geq 1$. Fractality refers to the fact that the notion of fractal dimension, and in particular, of the upper
Minkowski dimension of a bounded set (also called in the literature the upper box dimension, Bouligand dimension, or limit capacity, etc.) is a basic tool in the study of the properties of the associated zeta functions considered in this article, much as is the case in \cite{lapidusfrank12} for fractal strings (i.e., when $N=1$). 

More generally, as is shown throughout the higher-dimensional theory of complex dimensions developed in
[LapRa\v Zu1--8] and in the present paper, the notion of complex dimensions plays a key role in understanding
the geometric oscillations which are intrinsic to fractals. In fact, ``fractality'' is defined as the existence of a nonreal complex dimension (or else, the existence of a natural boundary for $\zeta_A$, beyond which $\zeta_A$ cannot be meromorphically extended). This is the same definition as in \cite[\S12.1 and \S13.4.3]{lapidusfrank12}, except for the fact that we now have in our possession a general definition and a well developed theory of fractal zeta functions, valid in any dimension; namely, the distance and tube zeta functions (see Definition \ref{defn} or Definition \ref{zeta_tilde}, respectively), which extend to arbitrary $N\ge 1$ and any bounded subset of $\eR^N$ the usual notion of geometric zeta function of a fractal string (see \S\ref{zeta_s}).
Accordingly, within the present higher-dimensional theory, the visible complex dimensions of a bounded subset $A$ of $\eR^N$ are now defined as the poles of the meromorphic continuation (when it exists) of $\zeta_A$ to a suitable connected open subset of $\Ce$; see Definition \ref{1.331/2}.

We next briefly discuss some aspects of the earlier work on fractal strings (or fractal sprays) motivating parts of the present work.

Along with several collaborators, the first author has undertaken since the early 1990s a systematic study of zeta functions associated with fractal strings and their counterparts in certain higher-dimensional situations, namely, fractal sprays;
see, in particular, [Lap1--2, HeLap, LapPo1--2, LapMa], 
and the books [Lap-vFr1--2]. 
The resulting developments have grown into a well-established theory 
of fractal strings, fractal sprays and their complex (fractal) dimensions,
and is today an active and rapidly growing area of research. 
For the theory of fractal strings and/or complex dimensions in a variety of situations, beside the aforementioned books and papers, see, for example, 
[DubSep, 
Es2,
EsLi1--2,
Fal2,
Fr,
KeKom, Kom,
LapPeWi,
LapRa\v Zu1--8,
L\'eMen,
MorSep,
MorSepVi,
Ol1--2,
Pe, 
PeWi,
Ra1--2,
RatWi,
Tep1--2], 
and the relevant references therein. In addition, we point out that Chapter 13 of \cite{lapidusfrank12} contains an exposition of several recent developments in the theory, prior to the present higher-dimensional theory of complex dimensions.

Throughout the monograph \cite{lapidusfrank12}, dealing with numerous aspects of the study of the geometric zeta functions of bounded fractal strings and their generalizations, 
it is assumed that the meromorphic extensions of the geometric zeta functions exist in a suitable region of the complex plane.
Furthermore,
the proof of \cite[Thm.\ 6.23]{lapidusfrank12}, dealing with the geometric zeta function of the $a$-string, shows that the construction of the meromorphic extensions can be quite delicate in certain situations. Moreover, the question of how far one can meromorphically extend the spectral zeta function associated with the (Dirichlet) Laplacian on a bounded open subset of $\eR^N$
is closely related to the value of the so-called inner Minkowski dimension of the boundary of the set; see [Lap1--2]
and \cite{brezish}. 
Therefore, we believe that the difficult problem of the existence and construction of meromorphic extensions of fractal zeta functions deserves a careful study.
In addition, it is an essential prerequisite for developing the (geometric) theory of complex dimensions in
$\eR^N$. Indeed, given a bounded set $A\st\eR^N$ and a connected open neighborhood $U$ of the critical line $\{\re s=D\}$, to which the distance (or the tube) zeta function of $A$, $\zeta_A$ (or $\tilde\zeta_A$),
can be meromorphically extended,\footnote{Here and in the rest of \S1.1, $D:=\ov\dim_BA$ denotes the upper Minkowski (or box) dimension of $A$; see Eq.\  \eqref{dim} in \S1.2 below. Also, the vertical line $\{\re s=D\}$ is referred to as the {\em critical line}.}
the {\em visible complex dimensions} of $A$ (relative to $U$)
are the poles located in $U$ of the (necessarily unique)
meromorphic extension of $\zeta_A$ (or $\tilde\zeta_A$).\footnote{Under mild assumptions, when $D<N$, the resulting visible complex dimensions are the same for either $\zeta_A$ or $\tilde\zeta_A$; see Remark \ref{1.3.121/2} in \S\ref{residues_m_distance} below.} 
Under the above hypotheses, the (upper) Minkowski dimension of $A$, $D:=\ov\dim_BA$, is a visible complex dimension of $A$ with maximal real part (since, according to a basic result in [LapRa\v Zu2] recalled below,
$\zeta_A$ and $\tilde\zeta_A$ are holomorphic in the open right half-plane $\{\re s>D\}$).

Another important (and related) motivation for the work in this paper is that the existence of a suitable meromorphic continuation (satisfying appropriate growth conditions) is a prerequisite for the development of fractal tube formulas (obtained in [LapRa\v Zu1,5]) 
in our general context (where we do not make any assumptions of self-similarity or ``self-alikeness'' and work with arbitrary bounded subsets of $\eR^N$, where $N\geq 1$)  and extending the corresponding fractal tube formulas obtained for fractal strings in \cite[Ch.\ 8]{lapidusfrank12} and for fractal sprays in \cite{lappewi1}.
In fact, it is noteworthy that some of the results obtained in \cite{cras2} show that our main result in this paper (Theorem \ref{measurable}) is, in some sense, optimal.

We certainly do not address or solve the general problem of meromorphic continuation here, but we consider two main classes of examples, in the Minkowski measurable case (Theorem \ref{measurable}) and in the non-Minkowski measurable case (Theorem \ref{nonmeasurable}), which are very often encountered in practice.
Using number-theoretic techniques, we also construct compact sets in $\eR^N$ whose associated fractal zeta functions have singularities at every point of the critical line.

We note that the results presented in this paper can be extended to the broader and more flexible framework of relative fractal drums, as described in our next paper \cite{rfds}, as well as for certain classes of fractals obeying a nonstandard power law (see [LapRa\v Zu5,8]). 
The interested reader will find in [LapRa\v Zu1,3,5--6] 
additional examples of the determination of the meromorphic extensions and the associated visible complex dimensions of bounded sets (and, more generally, of relative fractal drums) in $\eR^N$.
\medskip

We close this part of the introduction by mentioning that, as is well known, establishing the existence of the meromorphic continuations of arithmetic zeta functions and related Dirichlet series is one of the main challenges of analytic number theory; see, e.g., [Edw, Ser, ParsSh1--2, Es1, Lap-vFr2] and the relevant references therein. In the case of dynamical (or Ruelle) zeta functions, the counterpart of this problem has been tackled by W.\ Parry and M.\ Pollicott [ParrPol1--2], D.\ Ruelle [Rue1--3], and many other researchers. It would be interesting to suitably adapt the techniques developed in those references to our present geometric setting in order to enable us to deal with a broader class of situations. Indeed, the results obtained in the present paper certainly do not provide the last word concerning the meromorphic continuations of fractal zeta functions of bounded sets (or, more generally, of relative fractal drums). This problem is left for a future work by either the authors or the interested readers.
\medskip


Let us now briefly describe the contents of the paper:\newline
\q In \S\ref{notation}, we provide the necessary background material concerning the Minkowski (or box) dimension and the Minkowski content, along with some basic notation.

In \S\ref{ch_distance}, we recall from \cite{dtzf} 
the definition (Definition \ref{defn}) and some of the key properties (Theorem \ref{an}) of the distance zeta function, $\zeta_A$, of a bounded set $A\st\eR^N$, along with the definition of the abscissae of convergence, holomorphic continuation and meromorphic continuation of $\zeta_A$, denoted by
$D(\zeta_A)$, $D_{\rm hol}(\zeta_A)$ and $D_{\rm mer}(\zeta_A)$, respectively, as well as the notion of visible complex dimensions. We also briefly recall the definition of a fractal string $\mathcal L$ and the connection between the geometric zeta function of $\mathcal L$, denoted by $\zeta_{\mathcal L}$, and the distance zeta function of the boundary of any geometric realization of $\mathcal L$ by a bounded open subset of $\eR$.

In \S\ref{residues_m}, we first recall (in \S\ref{residues_m_distance}) the definition of the tube zeta function, $\tilde\zeta_A$, 
a counterpart of the distance zeta function $\zeta_A$ which is expressed in terms of the volume of the tubular neighborhoods of $A$  (see Definition \ref{zeta_tilde}),
and then provide the functional equation connecting these two fractal zeta functions, $\tilde\zeta_A$ and $\zeta_A$ (see Theorem \ref{equr} and Eq.\  \eqref{equ_tilde}).

The rest of \S\ref{residues_m_distance} is a brief overview of the results about 
the residues of fractal zeta functions, primarily those computed at the value of the fractal dimension, more precisely, at the value of
the upper Minkowski dimension $D$ of the corresponding fractal set $A$.
Namely, the residue at $D$ is closely related to the Minkowski content of the fractal,
provided the given fractal set $A$ is Minkowski measurable (see \S1.2) and the zeta function $\zeta_A$ or $\tilde\zeta_A$ has a meromorphic extension to a connected open neighborhood of $D$ in the complex plane.
If $A$ is not necessarily Minkowski measurable but is Minowski nondegenerate (i.e., $0<\M_*^D<\M^{*D}<\ty$, where $\M_*^D$ and $\M^{*D}$ denote, respectively, the lower and upper $D$-dimensional Minkowski content of $A$, see \S1.2), then the residue of $\zeta_A$ (or $\tilde\zeta_A$) at $D$ is shown to be squeezed between a positive multiple of $\M_*^D$ and $\M^{*D}$. (See Theorem \ref{pole1} and the discussion following it.)

In \S\ref{scarpet}, we then discuss in detail the important example of the Sierpi\'nski carpet $A\st\eR^2$, for which symmetry and scaling considerations help us determine the complex dimensions and the meromorphic continuation of $\zeta_A$ to all of~$\Ce$.


In \S\ref{merom_ext}, we provide an explicit construction of meromorphic extensions for a class of  Minkowski measurable subsets of $\eR^N$ (Theorem \ref{measurable}),
as well as for a class of Minkowski nonmeasurable sets of the lattice type (Theorem \ref{nonmeasurable});
we also determine the corresponding visible complex dimensions. 
Moreover, in \S\ref{gen}, we construct a class of self-similar fractal strings (in a more general sense) with principal complex dimensions of arbitrary multiplicities, and even with an infinite number of essential singularities along the critical line (see Theorem \ref{higher_order_dim} and the examples preceding it), by using iterated tensor products of fractal strings.

In \S\ref{hyperfractals}, we provide a construction of maximally hyperfractal bounded strings, as well as of bounded subsets of $\eR$, by means of a suitable infinite sequence of generalized Cantor sets depending on two auxilliary parameters; see Theorem \ref{hyper} of \S\ref{hyperfratalsr}. As a consequence, in Theorem \ref{mh} of \S\ref{hyperfratalsn},
we construct maximally hyperfractal subsets of $\eR^N$ of arbitrary Minkowski dimension in $(N-1,N)$, for any $N\ge1$. 
Recall that a bounded set $A\st\eR^N$ is said to be {\em maximally hyperfractal} if its associated fractal zeta function admits a nonremovable singularity at every point of the critical line $\{\re s=D\}$, where $D$ is the upper Minkowski dimension of $A$. 

\subsection{Basic notation and definitions}\label{notation} In the sequel, we use the following notation.
By $|E|_N$, we denote the $N$-dimensional Lebesgue measure of a measurable 
subset $E$ of $\eR^N$, where $N\ge1$.
(When no ambiguity may arise, we simply write $|E|$ instead of $|E|_N$.) 
Let $A$ be a bounded subset of $\eR^N$.
Given $r\ge0$, the {\em upper $r$-dimensional Minkowski content} $\M^{*r}(A)$ of $A$ is defined by 
\begin{equation}\label{mink}
\M^{*r}(A)=\limsup_{t\to 0^+}\frac{|A_t|}{t^{N-r}}, 
\end{equation}
and the {\em lower $r$-dimensional Minkowski content} of $A$, denoted by $\M_*^r(A)$, is defined analogously, with a lower limit instead of an upper limit in the counterpart of (\ref{mink}). 
Here and in the sequel, as in \S\ref{motivation},  $A_t:=\{x\in\eR^N:d(x,A)<t\}$ denotes the $t$-{\em neighborhood} (or {\em tubular neighborhood of radius} $t$) of $A$, and $d(x,A)$ is the Euclidean distance from $x$ to $A$.

The {\em upper Minkowski dimension} of $A$ is defined by
\begin{equation}\label{dim}
\ov\dim_BA:=\inf\{r>0:\M^{*r}(A)=0\}=\sup\{r>0:\M^{*r}(A)=+\ty\}.
\end{equation}
The lower Minkowski dimension of $A$, denoted by $\underline \dim_BA$,  is defined analogously, with $\M_*^r(A)$ instead of  $\M^{*r}(A)$ in the counterpart of (\ref{dim}). If both dimensions 
$\ov\dim_BA$ and $\underline\dim_BA$
coincide, their common value is denoted by $\dim_BA$, and is called the {\em Minkowski dimension} of $A$ (or Minkowski--Bouligand dimension), or else, simply, {\em box dimension}\label{MinkDim}. For general properties of box dimensions, see, e.g., \cite{falc}.

If there exists $D\ge0$ such that $0<\M_*^D(A)\le\M^{*D}(A)<\ty$, we say that $A$ is {\em Minkowski nondegenerate},\label{nondeg}
and {\em Minkowski degenerate} otherwise.
(Note that if $A$ is nondegenerate, it then follows that $\dim_BA$ exists and is equal to~$D$.)
 If $\M_*^D(A)=\M^{*D}(A)$, their common value is denoted by $\M^D(A)$ and called the {\em Minkowski content} of $A$.\label{minkc} If, in addition, $\M^D(A)\in(0,+\ty)$, then $A$ is said to be {\em Minkowski measurable}.\label{Minkowski_measurable}
 \medskip

We shall need the notion of {\em bounded fractal string} $\mathcal L$.  It is defined as a nonincreasing sequence $\mathcal L=(\g_k)_{k\in\eN}$ of positive real numbers such that $|{\mathcal L}|_1:=\sum_{k=1}^\ty\g_k<\ty$. A {\em tensor product}\label{otimes} ${\mathcal L}_1\otimes\mathcal{L}_2$ of two bounded fractal strings ${\mathcal L}_1$ and ${\mathcal L}_2$ is defined as the bounded fractal string consisting of all possible products $\g\cdot\mu$ with $\g\in {\mathcal L}_1$ and $\mu\in{\mathcal L}_2$, counting the multiplicities. It is clear that $|{\mathcal L}_1\otimes{\mathcal L}_2|_1=|{\mathcal L}_1|_1\cdot|{\mathcal L}_2|_1$. Similarly, we can define the {\em union} of two bounded fractal strings, ${\mathcal L}_1\sqcup{\mathcal L}_2$,\label{sqcup} as the union of the corresponding multisets; that is, as the usual union but also taking the multiplicities into account.

We obviously have that $|{\mathcal L}_1\sqcup{\mathcal L}_2|_1=|{\mathcal L}_1|_1+|{\mathcal L}_2|_1$. If we denote the collection of all bounded fractal strings by ${\mathcal L}_b$, it is easy to see that both $({\mathcal L}_b,\otimes)$ and $({\mathcal L}_b,\sqcup)$ are commutative semigroups, while ${\mathcal L}_b$ is a convex cone in the standard Banach space $(\ell_1(\eR),+)$ of absolutely summable sequences $(x_k)_{k\in\eN}$ of real numbers.

The union $\sqcup$ can be extended to include an infinite sequence of bounded fractal strings $({\mathcal L}_k)_{k\in\eN}$. 
More precisely, the union $\sqcup_{k=1}^\ty{\mathcal L}_k$ is a bounded fractal string, provided $\sum_{k=1}^\ty|{\mathcal L}_k|_1<\ty$. Finally, for any bounded fractal string $\mathcal L$ and $c>0$, we let 
$c\mathcal L:=(c\g)_{\g\in\mathcal L}$.

Throughout the paper, $s$ is a complex variable.
Given a meromorphic function $\varphi=\varphi(s)$ in a neighborhood of $s=\omega\in\Ce$, we denote by $\res(\varphi,\omega)$ its residue at $s=\omega$.
Furthermore, given $\a\in\eR\cup\{\pm\ty\}$, we let, for example, $\{\re s>\a\}$ denote the open right half-plane $\{s\in\Ce:\re s>\a\}$, with the obvious conventions when $\a=\pm\ty$.
In the sequel, all of the bounded subsets $A$ under consideration will be implicitly assumed to be nonempty.
Finally, if $A$ is a nonempty subset of $\eR^N$ and $\g$ a real number, we let $\g A:=\{\g a\in\eR^N:a\in A\}$.

\section{Distance and tube zeta functions}\label{ch_distance}
\subsection{Basic properties of the distance zeta functions of fractal sets}\label{properties}
In this subsection, we recall the definition and some basic properties of the distance zeta functions,
introduced by the first author in 2009, and studied in \cite{dtzf}; see Definition \ref{defn}.
They represent a natural extension of the notion of geometric zeta function of bounded fractal strings.


\begin{defn}\label{defn}  Let $\delta$ be a fixed positive number and let $A$ be a bounded set in $\eR^N$.
The {\em distance zeta function $\zeta_A$} of $A$ is defined by
\begin{equation}\label{z}
\zeta_A(s)=\int_{A_\delta}d(x,A)^{s-N}\D x,
\end{equation}
for all $s\in\Ce$ with $\re s$ sufficiently large. We shall also sometimes write $\zeta_{A,A_\d}(s)$
instead of $\zeta_A(s)$ (as in part $(d)$ of Theorem \ref{an} below), in order to stress that the distance zeta function depends on $\d$ as well.
\end{defn} 

We denote by $D(\zeta_A)$ the {\em abscissa of convergence} of $\zeta_A$ (really, the abscissa of absolute or Lebesgue convergence of $\zeta_A$). It is defined by 
\begin{equation}\label{Dzeta}
D(\zeta_A):=\inf\Bigg\{\a\ge0:\int_{A_\d}d(x,A)^{\a-N}<\ty\Bigg\}.
\end{equation}
Then, 
$\{\re s>D(\zeta_A)\}$, the {\em half-plane of convergence} of $\zeta_A$, is the largest open right half-plane on which the Lebesgue integral defining $\zeta_A$ in Eq.\  \eqref{z} is convergent (or, equivalently, absolutely convergent); see part (b) of Theorem~\ref{an}.

We point out that changing the value of $\d$ amounts to adding an entire function to $\zeta_A=\zeta_A(\,\cdot\,,A_\d)$. Hence, neither the existence of a meromorphic continuation to a domain $U\stq\Ce$, nor the poles of $\zeta_A$ in $U$ and the corresponding residues (or, more generally, principal parts), depend on the choice of $\d$. Exactly the same comment applies to $\tilde\zeta_A$, to be introduced in Definition \ref{zeta_tilde}.

In the sequel, we will need in an essential manner Theorem \ref{an} below, which is the first key result about distance zeta functions and is established in [LapRa\v Zu2].
 A systematic study of distance (and other fractal) zeta functions, containing many additional results, can be found in the monograph \cite{fzf}. In part $(c)$ of Theorem \ref{an} below, $D_{\rm hol}(\zeta_A)$, the {\em abscissa of holomorphic continuation} of $\zeta_A$, is defined exactly as in Definition \ref{Dmer} below, except with ``meromorphic'' replaced by ``holomorphic''. Furthermore, accordingly, $\{\re s>D_{\rm hol}(\zeta_A)\}$ is called the {\em half-plane of holomorphic continuation of $\zeta_A$} and is the largest open right half-plane to which $\zeta_A$ can be holomorphicaly continued.

\begin{theorem} 
\label{an} Let $A$ be an arbitrary bounded subset of $\eR^N$ and let $\delta>0$. Then$:$

\smallskip

$(a)$ The distance zeta function $\zeta_A$ defined by \eqref{z} is holomorphic in the 
half-plane $\{\re s>\ov\dim_BA\}$.

\medskip

$(b)$ The lower bound in the half-plane of convergence $\{\re s>\ov\dim_BA\}$ is optimal, in the sense that $\overline\dim_BA=D(\zeta_A)$, where $D(\zeta_A)$ is the abscissa of Lebesgue $($i.e., absolute$)$ convergence of $\zeta_A$. 


\medskip

$(c)$ If the box $($or Minkowski$)$ dimension $D:=\dim_BA$ exists, $D<N$, and $\M_*^D(A)>0$, then $\zeta_A(s)\to+\ty$ as $s\in\eR$
converges to $D$ from the right. So that, under these hypotheses, $\{\re s>\dim_BA\}$ is also the largest open right half-plane on which $\zeta_A$ is holomorphic; i.e., $\dim_BA=D(\zeta_A)=D_{\rm hol}(\zeta_A)$ and so, the half-planes of convergence and of holomorphic continuation of $\zeta_A$ coincide.

\medskip

$(d)$ For any $\g>0$, we have $D(\zeta_{\g A,\g(A_\d)})=D(\zeta_{A,A_\d})=\ov\dim_BA$ and
\begin{equation}\label{zetalA}
\zeta_{\g A,\g(A_\d)}(s)=\g^s\zeta_{A,A_\d}(s),
\end{equation}
for all $s\in\Ce$ with $\re s>\ov\dim_BA$. Furthermore, if $\o\in\Ce$ is a simple pole of the meromorphic extension of $\zeta_A(s,A_\d)$ to some open connected neighborhood of the critical line $\{\re s=\ov\dim_BA\}$ $($here and thereafter, we use the same notation for the meromophically extended function$)$, then
\begin{equation}\label{reslA}
\res(\zeta_{\g A,\g(A_\d)},\o)=\g^{\o}\res(\zeta_{A,A_\d},\o).
\end{equation}
\end{theorem}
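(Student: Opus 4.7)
The plan is to reduce the computation to a one-dimensional Mellin integral against the tube function $|A_t|$. Assuming $D:=\ov\dim_BA<N$ (the case $D=N$ being trivial, since then the integrand $d(x,A)^{s-N}$ is bounded on $A_\d$ for $\re s\ge N$), Cavalieri's principle applied to $f(x):=d(x,A)$ on $A_\d$, whose distribution function is precisely $|A_t|$ on $(0,\d]$, yields
\begin{equation*}
\zeta_A(s)=(N-s)\int_0^\d t^{s-N-1}|A_t|\,\D t+\d^{s-N}|A_\d|,
\end{equation*}
initially for $D<\re s<N$. The two regimes are then linked by analytic continuation, so every statement of the theorem can be read off from this integral representation.

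For part (a), fix $r>D$: by definition of $\ov\dim_BA$, one has $\M^{*r}(A)=0$, so $|A_t|\le Ct^{N-r}$ for small $t$, and this dominates the integrand by $Ct^{\re s-r-1}$, integrable near $0$ whenever $\re s>r$. Dominated convergence and Morera's theorem then give holomorphy of $\zeta_A$ on $\{\re s>r\}$; since $r>D$ was arbitrary, (a) follows. For part (b), I need the reverse inequality $D(\zeta_A)\ge D$, which I obtain by forcing divergence at every real $r<D$. Given such $r$, pick $\a\in(r,D)$; then $\M^{*\a}(A)=+\infty$, so there is a sequence $t_n\downarrow 0$ with $|A_{t_n}|\ge t_n^{N-\a}$, and after extracting a subsequence satisfying $2t_{n+1}\le t_n$ the intervals $[t_n,2t_n]$ are pairwise disjoint. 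Monotonicity of $t\mapsto|A_t|$ gives $|A_t|\ge(t/2)^{N-\a}$ on $[t_n,2t_n]$, whence
\begin{equation*}
\int_0^\d t^{r-N-1}|A_t|\,\D t\ge 2^{\a-N}\sum_n\int_{t_n}^{2t_n}t^{r-\a-1}\,\D t=c\sum_n t_n^{r-\a}=+\infty,
\end{equation*}
because $r-\a<0$. Combined with the integral representation, this yields $\zeta_A(r)=+\infty$ for every $r<D$, so $D(\zeta_A)\ge D$, completing (b).

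For part (c), the hypothesis $\M_*^D(A)>0$ supplies $|A_t|\ge Ct^{N-D}$ on some $(0,t_0)$. Restricting to real $s\in(D,N)$, the representation gives
\begin{equation*}
\zeta_A(s)\ge(N-s)C\int_0^{t_0}t^{s-D-1}\,\D t-O(1)=(N-s)C\,\frac{t_0^{s-D}}{s-D}-O(1),
\end{equation*}
which tends to $+\infty$ as $s\to D^+$; therefore $s=D$ is a singularity and no holomorphic extension past $D$ is possible, so $D_{\rm hol}(\zeta_A)=D$. Finally, (d) is obtained by the change of variable $y=\g x$ in the defining integral for $\zeta_{\g A,\g(A_\d)}$, using $d(\g x,\g A)=\g\,d(x,A)$ and $\D y=\g^N\D x$ to extract the scaling factor $\g^s$; the residue identity then follows by multiplying the Laurent expansion of $\zeta_{A,A_\d}$ at the simple pole $\o$ by $\g^s$ and reading off the coefficient of $(s-\o)^{-1}$, since $\g^s=\g^\o+O(s-\o)$ near $s=\o$.

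The main delicate point is (b): converting the lim-sup definition of $\ov\dim_BA$ into genuine divergence of a Lebesgue integral at every real $r<D$. The essential ingredient is the monotonicity of the tube function $t\mapsto|A_t|$, which allows one to thicken sparse lower bounds at the scales $t=t_n$ into uniform lower bounds on positive-length intervals $[t_n,2t_n]$ that integrate to $+\infty$; without this monotonicity, $\M^{*\a}(A)=+\infty$ alone would not force divergence of the Mellin integral.
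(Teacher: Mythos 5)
Your argument is correct and follows essentially the same route that underlies the result as quoted in the paper: the identity you derive by Cavalieri/Fubini is exactly the functional equation of Theorem \ref{equr} relating $\zeta_A$ to the tube function, and the paper itself defers the proof of Theorem \ref{an} to \cite{dtzf}, where this tube-function representation is the backbone of the argument. Two small points to tidy up: dismissing the case $\ov\dim_BA=N$ as trivial is accurate only for parts $(a)$ and $(d)$ --- for part $(b)$ the divergence at every real $r<N$ still needs proof, but your own thickening argument covers it verbatim, since it uses only $r<N$ and $\M^{*\a}(A)=+\ty$ for $\a<D$; and in part $(a)$ the domination bound $|A_t|\le Ct^{N-r}$ controls the tube integral rather than $\zeta_A$ itself, so to conclude holomorphy of $\zeta_A$ you should either apply Morera and Fubini directly to the defining integral $\int_{A_\d}d(x,A)^{s-N}\,\D x$ (whose absolute convergence for $\re s>r$ follows from the identity at real exponents, and trivially for $\re s\ge N$), or else carry holomorphy through the functional equation on the strip $r<\re s<N$ and treat $\{\re s\ge N\}$ separately.
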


\medskip

In the following definition, we have in mind, especially, meromophic functions in the form of Dirichlet series, or, more generally, defined as Dirichlet-type integrals, in the sense of [LapRa\v Zu1--2].


\begin{defn}\label{Dmer}
Let $f:U\to\Ce$ be a meromorphic function on a domain $U\subseteq\Ce$.
We define the {\em abscissa of meromorphic continuation}\label{a_Mer} $D_{\rm mer}(f)$ 
of $f$ as the infimum of all real numbers $\s$ such that $f$ possesses a meromorphic extension to the open right half-plane $\{\re s>\s\}$. Equivalently, 
$
\operatorname{Mer}(f):=\{\re s>D_{\rm mer}(f)\}
$
is the largest open right
half-plane to which $f$ can be meromorphically extended. We then call $\operatorname{Mer}(f)$ the {\em half-plane of meromorphic continuation} of $f$. By definition, we have $D_{\rm mer}(f)\in\eR\cup\{\pm\ty\}$ and, by convention, $\{\re s>D_{\rm mer}(f)\}$ is equal to $\Ce$ or $\emptyset$ when $D_{\rm mer}(f)=+\ty$ or $-\ty$, respectively.
\end{defn}

Clearly, we always have $D_{\rm mer}(f)\le D(f)$. It is not difficult to show that this inequality is sharp, in general, even for $f=\zeta_A$, $f=\tilde\zeta_A$, or $f=\zeta_{\mathcal L}$, as defined respectively in Definition \ref{defn} above, Definition \ref{zeta_tilde} below, or in \S\ref{zeta_s} just below.

\subsection{Zeta functions of fractal strings and of associated fractal sets}\label{zeta_s}\label{properties_zeta}
The following example shows that Definition \ref{defn} provides a natural extension of the zeta function associated with a (bounded) fractal string $\mathcal L=(\ell_j)_{j\ge1}$, where $(\ell_j)_{j\geq 1}$ is a nonincreasing infinite sequence of positive numbers such that $\sum_{j=1}^\infty \ell_j<\ty$:
\begin{equation}\label{string}
\zeta_{\mathcal L}(s)=\sum_{j=1}^\infty \ell_j^s,
\end{equation}
for all $s\in\Ce$ with $\re s$ sufficiently large. This zeta function $\zeta_{\mathcal L}$, called the {\em geometric zeta function} of $\mathcal L$.
In the sequel, the fractal string $\mathcal L=(\ell_j)_{j=1}^\ty$ is said to be {\em nontrivial} if the sequence $(\ell_j)_{j=1}^\ty$ is infinite; without loss of generality, we may then assume that $\ell_j\searrow0$.
Recall that for a (nontrivial) bounded fractal string $\mathcal{L}$ a key property of the Dirichlet series $\zeta_{\mathcal L}$ is that its abscissa of convergence $D(\zeta_{\mathcal L})$ coincides with the (inner) Minkowski dimension of $\mathcal{L}$.
In particular, the inner Minkowski dimension of $\mathcal{L}$ is defined in terms of its inner $\delta$-neighborhoods.
We refrain here from going into more details about fractal string theory and instead refer the reader to \cite{lapidusfrank12} and the relevant references therein.

\begin{example}\label{L} Let $\mathcal L=(\ell_j)_{j\ge1}$ be a given {\em nontrivial bounded fractal string}; that is, as above, it has finite {\em total length}:
$\sum_{j\ge1}\ell_j<\ty$. 
Let us define the set 
\begin{equation}\label{AL}
A_{\mathcal L}:=\Bigg\{a_k=\sum_{j\ge k}\ell_j:k\ge1\Bigg\}
\end{equation}
 corresponding to $\mathcal L$. Here, $A_{\mathcal L}$ is viewed as a subset of $\eR$.
Using (\ref{z}), it is possible to show by means of a direct computation (see the details in [LapRa\v Zu2]) that the distance zeta function of $A_{\mathcal L}$ is given by
\begin{equation}\label{cantor_string}
\zeta_{A_{\mathcal L}}(s)=s^{-1}2^{1-s}\zeta_{\mathcal L}(s)+2s^{-1}\delta^{s},
\end{equation}
provided that $\delta\ge l_1/2$. The case when $0<\delta<l_1/2$ yields an analogous relation, 
$\zeta_{A_{\mathcal L}}(s)=u(s)\zeta_{\mathcal L}(s)+v(s),$
where again $u(s):=s^{-1}2^{1-s}$, with a simple pole at $s=0$. 
Note that here, $u(s)$ and $v(s)=v(s,\delta)$ are holomorphic functions in the right half-plane $\{\re s>0\}$. 
Hence, since $\zeta_{\mathcal L}$ is holomorphic for $\re s>\ov\dim_BA$, the same relation still holds for a meromorphic extension of $\zeta_A$
within the right half-plane.
\end{example}

It follows from the above discussion that $D(\zeta_{A_{\cal L}})=D(\zeta_{\cal L})$ and the sets of poles of the meromorphic extensions of $\zeta_{A_{\mathcal L}}$ and $\zeta_{\mathcal L}$
to any open right half-plane $\{\re s>c\}$, with $c\ge0$ $($provided one, and therefore both, of the extensions exist$)$, coincide. The poles of $\zeta_{A_{\mathcal{L}}}$ and $\zeta_{\mathcal{L}}$ in such a domain then have the same multiplicities.

\subsection{Complex dimensions of fractal sets}\label{eqzf}

In this subsection, we introduce the key notion of complex dimensions and of principal complex dimensions (see Definition \ref{1.331/2} and Definition \ref{dimc}, respectively), following and adapting \cite[\S1.2.1 and \S5.1]{lapidusfrank12} to our present more general situation. 
In the following definitions, we implicitly assume that the nonempty bounded subset $A$ of $\eR^N$ has the property that
the associated distance zeta function $\zeta_A$ can be extended to a meromorphic function defined on a domain $U\stq\Ce$, 
where $U$ is an open and connected neighborhood of the open half-plane $\{\re s\geq D(\zeta_A)\}$.
%
 (Following the usual conventions, we still denote by $\zeta_A$ the meromorphic continuation of $\zeta_A$ to $U$, which is necessarily unique due to the principle of analytic continuation.)


\begin{defn}\label{dimc}
The {\em set of principal complex dimensions} of $A$, denoted by $\dim_{PC} A$,
is defined as the set of {\em principal poles} $\po_c(\zeta_A)$ of $\zeta_A$; that is, the set of poles of $\zeta_A$ which are located on the critical line $\{\re s=D(\zeta_A)\}$:
\begin{equation}
\dim_{PC} A=\po_c(\zeta_A)=\{\omega\in U:\mbox{$\omega$ is a pole of $\zeta_A$ and $\re \omega=D(\zeta_A)$}\}.
\end{equation}
Clearly, the above set is independent of the choice of the domain $U$.
\end{defn}

As we see, in Definition \ref{dimc}, if $A\subset\eR^N$ is bounded, the singularities of $\zeta_A$ we are interested in are located on the vertical line $\{\re s=\ov\dim_BA\}$. 

Extending the definition of complex dimensions of fractal strings (and other fractals), we also introduce the following natural higher-dimensional generalization in our context.

\begin{defn}\label{1.331/2}
The {\em set of visible complex dimensions} of $A$ {\em with respect to a given domain $U$} (often called, in short, the {\em set of complex dimensions} of $A$ if no ambiguity may arise or if $U=\Ce$), is defined as the set of all the poles of $\zeta_A$ which are located in the domain $U$:
\begin{equation}\label{1.401/2}
\cal P(\zeta_A)=\cal P(\zeta_A,U):=\{\omega\in U : \omega\textrm{ is a pole of } \zeta_A\}.
\end{equation}
\end{defn}

It is easy to check that if the domain $U$ is symmetric with respect to the real axis, then the real complex dimensions come in complex conjugate pairs.
Also, by Theorem \ref{an}$(a)$, $\cal P(\zeta_A)\subseteq\{\re s\leq D(\zeta_A)\}$.

\section{Residues of zeta functions and Minkowski contents}\label{residues_m}

In this section, we recall that some important information concerning the geometry of fractal sets in $\eR^N$ is encoded in their associated fractal zeta functions.
Therefore, the distance zeta functions, as well as the tube zeta functions which we introduce below (see Definition \ref{zeta_tilde}), can be considered as a useful tool in the study of the geometric properties of fractals. 

\subsection{Distance and tube zeta functions of fractal sets, and their residues}\label{residues_m_distance}
The residue of any meromorphic extension of the distance zeta function of a fractal set $A$ in $\eR^N$ is closely
related to the Minkowski content of the set; see Theorem~\ref{pole1}. We use the notation $\zeta_{A,A_\delta}(s)$ for the zeta function
instead of $\zeta_A(s)$, since we want to 
stress the dependence of the zeta function on~$\delta$. We start with a result which is interesting in itself and also leads to a new class of zeta functions, described by Definition \ref{zeta_tilde}; for the proof, see [LapRa\v Zu1--2].

\begin{theorem}\label{equr}
Let $A$ be a bounded set in $\eR^N$, and let $\delta$ be a fixed positive number. Then, for all $s\in\Ce$ such that $\re s>\ov\dim_BA$, 
the following identity holds$:$
\begin{equation}\label{equality}
\int_{A_\delta}d(x,A)^{s-N}\D x=\delta^{s-N}|A_\delta|+(N-s)\int_0^\delta t^{s-N-1}|A_t|\,\D t.
\end{equation}
Furthermore, the Dirichlet-type integral function $\tilde\zeta_A(s):=\int_0^\delta t^{s-N-1}|A_t|\,\D t$ is $($Le\-bes\-gue, i.e., absolutely$)$ convergent 
on $\{\re s>\ov\dim_BA\}$; that is, $D(\tilde\zeta_A)\le\ov\dim_BA$.
\end{theorem}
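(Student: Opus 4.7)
The plan is to read the identity \eqref{equality} as a Fubini rearrangement of a layer-cake representation of $d(x,A)^{s-N}$. Starting from the elementary antiderivative identity
\[
u^{s-N} \;=\; \delta^{s-N} + (N-s)\int_u^\delta t^{s-N-1}\,\D t
\qquad (0<u\le\delta),
\]
which follows from $\frac{d}{dt}t^{s-N}=(s-N)t^{s-N-1}$, I would substitute $u=d(x,A)$ and integrate the pointwise identity in $x$ over $A_\delta$. The first term produces $\delta^{s-N}|A_\delta|$ directly. For the double integral
\[
\int_{A_\delta}\int_{d(x,A)}^{\delta} t^{s-N-1}\,\D t\,\D x,
\]
I would swap the order of integration: for each fixed $t\in(0,\delta]$, the set of admissible $x$ is $\{x\in A_\delta:d(x,A)<t\}=A_t$ (as $A_t\subseteq A_\delta$ when $t\le\delta$), and up to sets of measure zero this yields $\int_0^\delta t^{s-N-1}|A_t|\,\D t$, giving \eqref{equality}.

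The technical heart of the argument is justifying (i) that the pointwise identity can be integrated (i.e., the exceptional set $\{d(x,A)=0\}=\overline A$ causes no trouble) and (ii) that Fubini applies. For (i), note that $\overline A$ only carries positive Lebesgue measure when $\overline\dim_B A=N$, in which case the hypothesis $\re s>\overline\dim_B A$ forces $\re s>N$, so $d(x,A)^{s-N}$ is a bounded, continuous function on $A_\delta$ (with the natural value $0$ on $\overline A$); in all other cases $|\overline A|=0$ and the identity holds a.e. For (ii), absolute convergence of the iterated integral reduces to checking that $\int_0^\delta t^{\re s-N-1}|A_t|\,\D t<\infty$ for $\re s>D:=\overline\dim_B A$.

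This last point is the only real work, and it is the same estimate that gives the second assertion of the theorem, $D(\tilde\zeta_A)\le\overline\dim_B A$. Since $D=\inf\{r>0:\M^{*r}(A)=0\}$, for any $r$ with $D<r<\re s$ one has $\M^{*r}(A)=0$, so $|A_t|=o(t^{N-r})$ as $t\to 0^+$; combined with the trivial bound $|A_t|\le |A_\delta|<\infty$ for $t\in[\delta/2,\delta]$, one gets
\[
\int_0^\delta t^{\re s-N-1}|A_t|\,\D t \;\le\; C\int_0^{\delta/2} t^{\re s-r-1}\,\D t + |A_\delta|\int_{\delta/2}^\delta t^{\re s-N-1}\,\D t \;<\;\infty,
\]
since $\re s-r>0$. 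This simultaneously validates Fubini and establishes that $\tilde\zeta_A$ converges absolutely on $\{\re s>\overline\dim_B A\}$, so $D(\tilde\zeta_A)\le \overline\dim_B A$.

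The main obstacle, such as it is, is purely bookkeeping: handling the potential coincidence $\overline\dim_B A=N$ (where $\overline A$ may have positive measure) and keeping the integration-by-parts identity consistent with the convention for $0^{s-N}$. No deeper issue should arise, because the crucial quantitative input—the estimate $|A_t|=O(t^{N-r})$ for every $r>D$—is built into the very definition of upper Minkowski dimension recalled in \eqref{dim}.
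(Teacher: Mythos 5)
Your argument is correct and follows essentially the same route as the source the paper relies on: the paper itself does not prove Theorem \ref{equr} but defers to [LapRa\v Zu1--2], where the identity is obtained in just this way, by writing $d(x,A)^{s-N}=\delta^{s-N}+(N-s)\int_{d(x,A)}^{\delta}t^{s-N-1}\,\D t$ for $x\in A_\delta$ (with the degenerate case $d(x,A)=0$ handled exactly as you do) and then applying Fubini--Tonelli, the needed absolute convergence coming from $\M^{*r}(A)=0$, i.e.\ $|A_t|=O(t^{N-r})$, for any $r$ with $\overline\dim_BA<r<\re s$. The only cosmetic point is that $|A_t|=o(t^{N-r})$ holds only as $t\to0^+$, so the constant in your bound on $(0,\delta/2]$ must be enlarged using $|A_t|\le|A_\delta|$ on the part of that interval bounded away from $0$, which is immediate.
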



The following result, also obtained in [LapRa\v Zu2], is important for the development of the theory. Note that it assumes that $A$ is such that the corresponding distance zeta function can be meromorphically extended. On of our goals is to significantly refine Theorem \ref{pole1}; see, especially, Theorem \ref{nonmeasurable} in \S\ref{merom_ext_meromorphic} below.

\begin{theorem}\label{pole1}
Assume that $A\st\eR^N$ is Minkowski nondegenerate; that is, $0<\M_*^D(A)\le\M^{*D}(A)<\ty$ $($in particular, $\dim_BA=D)$. If $D<N$ and $\zeta_A=\zeta_{A,A_\delta}$ can be extended meromorphically to a connected open neighborhood of $s= D$,
then $D$ is necessarily a simple pole of $\zeta_A$, and 
\begin{equation}\label{res}
(N-D)\M_*^D(A)\le\res(\zeta_A,D)\le(N-D)\M^{*D}(A).
\end{equation}
 Furthermore, the value of $\res(\zeta_{A,A_\delta}, D)$ does not depend on $\delta>0$.
In particular, if $A$ is Minkowski measurable, then 
\begin{equation}\label{pole1minkg1=}
\res(\zeta_A, D)=(N-D)\,\M^D(A).
\end{equation}
\end{theorem}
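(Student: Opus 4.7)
The plan is to reduce the problem to the analysis of the tube zeta function $\tilde\zeta_A$ via the functional equation from Theorem~\ref{equr}, which rearranges to
\[
\zeta_A(s)=\delta^{s-N}|A_\delta|+(N-s)\,\tilde\zeta_A(s).
\]
Since the first summand is entire in $s$ and $N-s$ does not vanish at $s=D$ (as $D<N$), the meromorphic extendability of $\zeta_A$ near $s=D$ is equivalent to that of $\tilde\zeta_A$, and
\[
\res(\zeta_A,D)=(N-D)\,\res(\tilde\zeta_A,D).
\]
It thus suffices to show that $D$ is a simple pole of $\tilde\zeta_A$ with residue in $[\M_*^D(A),\M^{*D}(A)]$, and to refine this in the Minkowski measurable case.

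Next I would perform a Mellin-type change of variable, rewriting
\[
\tilde\zeta_A(s)=\int_0^\delta t^{s-D-1}\,f(t)\,\D t,\qquad f(t):=\frac{|A_t|}{t^{N-D}},
\]
so that the nondegeneracy hypothesis becomes $\liminf_{t\to 0^+}f(t)=\M_*^D(A)>0$ and $\limsup_{t\to 0^+}f(t)=\M^{*D}(A)<\ty$. Fix $\e>0$ and pick $\eta\in(0,\delta)$ so small that $\M_*^D(A)-\e\le f(t)\le\M^{*D}(A)+\e$ on $(0,\eta)$. Split $\tilde\zeta_A=I_1+I_2$ at $\eta$; the tail $I_2$ extends to an entire function of $s$ because $f$ is bounded on $[\eta,\delta]$. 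For real $s>D$ one then has
\[
(\M_*^D(A)-\e)\,\frac{\eta^{s-D}}{s-D}\le I_1(s)\le (\M^{*D}(A)+\e)\,\frac{\eta^{s-D}}{s-D}.
\]

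Multiplying by $s-D$, sending $s\to D^+$ (so that $\eta^{s-D}\to 1$ and $(s-D)I_2(s)\to 0$), and then letting $\e\to 0^+$ yields
\[
\M_*^D(A)\le\liminf_{s\to D^+}(s-D)\tilde\zeta_A(s)\le\limsup_{s\to D^+}(s-D)\tilde\zeta_A(s)\le\M^{*D}(A).
\]
Since $\tilde\zeta_A$ is meromorphic at $D$, the quantity $(s-D)\tilde\zeta_A(s)$ admits a genuine limit in $\Ce\cup\{\ty\}$ as $s\to D$; the squeeze forces this limit to lie in $(0,\ty)$, which rules out both a removable singularity (the limit would be $0$) and a pole of order $\ge 2$ (the limit would be infinite). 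Hence $D$ is a simple pole, $\res(\tilde\zeta_A,D)\in[\M_*^D(A),\M^{*D}(A)]$, and the claimed bounds on $\res(\zeta_A,D)$ follow. Independence of $\delta$ is immediate, as altering $\delta$ modifies $\zeta_{A,A_\delta}$ only by an entire function. In the Minkowski measurable case, $f(t)\to\M^D(A)$, and the same squeeze with $\M_*^D(A)=\M^{*D}(A)=\M^D(A)$ sharpens the conclusion to $\res(\zeta_A,D)=(N-D)\,\M^D(A)$.

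The main obstacle is the passage from the \emph{asymptotic} liminf/limsup information on $f(t)$ as $t\to 0^+$ to the \emph{residue} of $\tilde\zeta_A$ at $s=D$: this is an Abelian-type squeeze, and the indispensable ingredient is the hypothesis of meromorphic continuation, which converts a mere two-sided bound on $(s-D)\tilde\zeta_A(s)$ into the existence of an honest limit, thereby also pinning the pole order at $1$. Without this assumption, one can only assert that $(s-D)\tilde\zeta_A(s)$ is squeezed, not that it converges.
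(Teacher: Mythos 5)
Your proof is correct, and there is in fact nothing in this paper to compare it against: Theorem \ref{pole1} is stated here without proof and attributed to \cite{dtzf}, so the argument must be judged on its own merits. Your route --- transferring the problem to $\tilde\zeta_A$ via the functional equation \eqref{equ_tilde} (legitimate because $D<N$, so $N-s$ is holomorphic and nonvanishing near $s=D$; the identity, initially valid on $\{\re s>D\}$, extends by analytic continuation since any connected open neighborhood of $D$ meets that half-plane, giving equality of pole orders and $\res(\zeta_A,D)=(N-D)\res(\tilde\zeta_A,D)$), then splitting $\int_0^\delta t^{s-D-1}f(t)\,\D t$ at a small $\eta=\eta(\e)$ and squeezing $(s-D)\tilde\zeta_A(s)$ for real $s\to D^+$ between $\M_*^D(A)-\e$ and $\M^{*D}(A)+\e$ --- is precisely in the spirit of the machinery the paper does develop (the functional equation, Remarks \ref{tilde_analog} and \ref{1.3.121/2}, and the decomposition used in the proof of Theorem \ref{measurable}), and it is essentially the original argument of \cite{dtzf}. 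You also handle the one genuinely delicate step correctly: the assumed meromorphy at $D$ guarantees that $(s-D)\tilde\zeta_A(s)$ has a limit in $\Ce\cup\{\ty\}$, and the two-sided bound together with $0<\M_*^D(A)$ and $\M^{*D}(A)<\ty$ excludes both a removable singularity and a pole of order at least two, forcing a simple pole with residue in $[\M_*^D(A),\M^{*D}(A)]$; the $\delta$-independence (change of $\delta$ adds an entire function, as noted in \S\ref{properties}) and the Minkowski measurable case then follow exactly as you state.
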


In light of Theorem~\ref{equr}, it is natural to introduce a new fractal zeta function of bounded sets $A$ in $\eR^N$.

\begin{defn}\label{zeta_tilde}  Let $\delta$ be a fixed positive number, and let $A$ be a bounded set in $\eR^N$. Then, the {\em tube zeta function} of $A$, denoted by $\tilde\zeta_A$, is defined (for $\re s$ large enough) by
\begin{equation}\label{tildz}
\tilde\zeta_A(s)=\int_0^\delta t^{s-N-1}|A_t|\,\D t.
\end{equation}

As we know from Theorem \ref{equr}, we have $D(\tilde\zeta_A)\le\ov\dim_BA$. In particular, since $D_{\rm hol}(\tilde\zeta_A)\le D(\zeta_A)$, we conclude that $D_{\rm hol}(\zeta_A)\le\ov\dim_BA$; that is, the tube zeta function is holomorphic on $\{\re s>\ov\dim_BA\}$.
\end{defn} 

We call $\tilde\zeta_A$ the tube zeta function of $A$ since its definition involves the tube function $(0,\delta)\ni t\mapsto |A_t|$. Relation (\ref{equality}) can be written as follows, for any $\d>0$ and for $\re s>\ov{\dim}_BA$:\footnote{Note that both, $\zeta_A$ and $\tilde\zeta_A$ depend on $\d$; see the comment following Eq.\ \eqref{Dzeta}.}
\begin{equation}\label{equ_tilde}
\zeta_{A}(s)=\delta^{s-N}|A_\delta|+(N-s)\tilde\zeta_A(s).
\end{equation}

Combining Theorem \ref{an} with Theorem \ref{equr} and using Eq.\  \eqref{equ_tilde}, we can derive the following useful result.

\begin{cor}\label{zetac}
Let $A$ be a bounded subset of $\eR^N$.
Then $($with the notation of \S\ref{properties}$)$,

\begin{equation}\label{Dles}
D_{\rm mer}(\zeta_A)\le D_{\rm hol}(\zeta_A)\le D(\zeta_A)=\ov\dim_BA.
\end{equation}
\smallskip

If, in addition, we assume that $\ov{\dim}_BA<N$, then

\begin{equation}
\begin{gathered}
D(\zeta_A)=D(\tilde\zeta_A)=\ov\dim_BA,\\
D_{\rm hol}(\zeta_A)=D_{\rm hol}(\tilde\zeta_A),\q D_{\rm mer}(\zeta_A)=D_{\rm mer}(\tilde\zeta_A),
\end{gathered}
\end{equation}
\end{cor}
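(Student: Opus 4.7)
The plan is to assemble this corollary from three ingredients already in place: the identification $D(\zeta_A)=\ov\dim_BA$ from Theorem~\ref{an}(b), the functional equation \eqref{equ_tilde} (equivalently, the integral identity \eqref{equality}) from Theorem~\ref{equr}, and the elementary chain $D_{\rm mer}\le D_{\rm hol}\le D$ that is valid for any function defined by an absolutely convergent Dirichlet-type integral.

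For the first chain \eqref{Dles}, the equality $D(\zeta_A)=\ov\dim_BA$ is precisely Theorem~\ref{an}(b). The inequality $D_{\rm hol}(\zeta_A)\le D(\zeta_A)$ holds because absolute convergence of the defining integral on $\{\re s>\sigma\}$ produces a holomorphic function there (the Morera/Fubini argument built into Theorem~\ref{an}(a)). Finally, $D_{\rm mer}(\zeta_A)\le D_{\rm hol}(\zeta_A)$ is immediate, since a holomorphic extension is \emph{a fortiori} a meromorphic one with empty pole set.

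Now assume $\ov\dim_BA<N$. I first establish $D(\tilde\zeta_A)=\ov\dim_BA$. The inequality $D(\tilde\zeta_A)\le\ov\dim_BA$ is supplied by Theorem~\ref{equr}, so only the reverse inequality is at stake. For any real $\alpha<\ov\dim_BA$ (in particular $\alpha<N$), the upper Minkowski content $\M^{*\alpha}(A)$ equals $+\infty$. Using the monotonicity of $t\mapsto|A_t|$, a dyadic comparison shows that $\M^{*\alpha}(A)$ is controlled up to a multiplicative constant by $\limsup_k|A_{u_k}|/u_k^{N-\alpha}$ with $u_k:=2^{-k-1}\delta$; hence the latter equals $+\infty$. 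The dyadic lower bound
\[
\tilde\zeta_A(\alpha)\ge\sum_{k\ge0}\int_{u_{k+1}}^{u_k}t^{\alpha-N-1}|A_t|\,\D t\ge C(\alpha)\sum_{k\ge0}\frac{|A_{u_k}|}{u_k^{N-\alpha}}
\]
with $C(\alpha)>0$ then forces $\tilde\zeta_A(\alpha)=+\infty$, so $D(\tilde\zeta_A)\ge\ov\dim_BA$ and the desired equality follows.

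Finally, for the equalities of the holomorphic and meromorphic abscissae, I exploit \eqref{equ_tilde} in both directions. Read as it stands, a holomorphic (resp.\ meromorphic) extension of $\tilde\zeta_A$ to a right half-plane $U$ immediately transports through the polynomial $(N-s)$ and the entire term $\delta^{s-N}|A_\delta|$ to give such an extension of $\zeta_A$ on the same $U$. Conversely, inverting the identity,
\[
\tilde\zeta_A(s)=\frac{\zeta_A(s)-\delta^{s-N}|A_\delta|}{N-s},
\]
any holomorphic (resp.\ meromorphic) extension of $\zeta_A$ produces one of $\tilde\zeta_A$ \emph{except possibly} for an apparent simple pole at $s=N$. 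The step requiring the most care, and the precise place where the hypothesis $\ov\dim_BA<N$ is used decisively, is verifying that this singularity is removable: because $N>\ov\dim_BA=D(\zeta_A)$, the point $s=N$ lies in the half-plane of holomorphy of $\zeta_A$, and direct evaluation of the defining integral gives $\zeta_A(N)=\int_{A_\delta}1\,\D x=|A_\delta|$; thus the numerator in the displayed equation vanishes at $s=N$, canceling $N-s$ and confirming that $\tilde\zeta_A$ extends there as well. Combining both directions yields $D_{\rm hol}(\zeta_A)=D_{\rm hol}(\tilde\zeta_A)$ and $D_{\rm mer}(\zeta_A)=D_{\rm mer}(\tilde\zeta_A)$, completing the proof.
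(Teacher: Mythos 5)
Your proposal is correct and follows essentially the derivation the paper intends for this corollary: combining Theorem~\ref{an} with Theorem~\ref{equr} and transporting holomorphic/meromorphic extensions through the functional equation \eqref{equ_tilde}, with the only delicate point---removability of the apparent singularity at $s=N$ when passing from $\zeta_A$ to $\tilde\zeta_A$, using $\zeta_A(N)=|A_\delta|$ (valid since $\ov\dim_BA<N$ forces $|\ov A|=0$)---handled correctly. Your dyadic estimate proving $D(\tilde\zeta_A)\ge\ov\dim_BA$ is a self-contained substitute for invoking the tube-zeta analog of Theorem~\ref{an}$(b)$ (cf.\ Remark~\ref{tilde_analog}); it is sound apart from a harmless index shift (the lower bound on $[u_{k+1},u_k]$ naturally yields $|A_{u_{k+1}}|/u_{k+1}^{N-\alpha}$, so the series starts at $k\ge1$, which does not affect the divergence conclusion).
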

\medskip


\begin{remark}\label{tilde_analog}
The functional equation \eqref{equ_tilde} and the principle of analytic continuation are crucial in deriving many results about the tube zeta function from the distance zeta function, and vice versa.
For instance, the exact analogs of Theorems \ref{an} and \ref{pole1} are valid for the tube zeta function $\tilde{\zeta}_A$ of a given bounded subset $A$ in $\eR^N$.
The only difference is that in the case of the analog of Theorem \ref{pole1} for the tube zeta function, the multiplicative constant $(N-D)$ is absent from the analogs of \eqref{res} and \eqref{pole1minkg1=}, and the condition $D<N$ can be omitted.
\end{remark}


\begin{remark}\label{1.3.121/2}
Assume that $\ov D=\ov{\dim}_BA<N$. In light of Eq.\ ~\eqref{equ_tilde}, for any $\delta>0$, the two zeta functions $\tilde{\zeta}_A(s):=\tilde\zeta_A(s,\d)$ and $\zeta_A(s):=\zeta_{A,A_\delta}(s)$ differ by an entire function. Therefore, if one of them has a meromorphic continuation to some open connected set $U\supseteq\{\re s=\ov D\}$, so does the other, and then they have exactly the same poles in $U$ (with the same multiplicities): $\po(\tilde{\zeta}_A)=\po(\zeta_A)$ or, more precisely, $\po(\tilde{\zeta}_A,U)=\po(\zeta_A,U)$. In particular, $\po_c(\tilde{\zeta}_A)=\po_c(\zeta_A)$.
Furthermore, in that case, if $\omega\in U$ is a simple pole of $\tilde{\zeta}_A(s)$, then it is also a simple pole of $\zeta_A(s)$ and the corresponding residues are linked as follows:
\begin{equation}\label{1.3.181/2}
\res(\tilde{\zeta}_A,\omega)=\frac{1}{N-\omega}\res(\zeta_A,\o).
\end{equation}
Note that we must have $\omega\neq N$ since $\re\omega\leq\ov D$ and $\ov D<N$.
\end{remark}

\begin{example}[$a$-strings]\label{a-string2}
Given $a>0$, let $A=\{j^{-a}:j\in\eN\}$. This set is Minkowski measurable,
\begin{equation}\label{a-string}
\M^D(A)=\frac{2^{1-D}}{D(1-D)}a^D,\quad D=D(a)=\frac1{1+a},
\end{equation} 
and the related string $\mathcal L=(\ell_j)_{j\ge1}$ defined by $\ell_j=j^{-a}-(j+1)^{-a}$ is called the {\em $a$-string}; see \cite{Lap1}, \cite{lapiduspom} and \cite[\S6.5.1]{lapidusfrank12} for the study of its various properties. Due to (\ref{pole1minkg1=}) and (\ref{1.3.181/2}), we know that
\begin{equation}\label{a-string-mink}
\res(\zeta_A,D)=(1-D)\M^D(A),\quad \res(\tilde\zeta_A,D)=\M^D(A).
\end{equation} 
The last equality is not fortuitous. In fact, it holds for a very general class of Minkowski measurable sets; see Theorem \ref{measurable} in \S\ref{merom_ext_meromorphic} just below.
\end{example}

\subsection{Distance zeta function of the Sierpi\'nski carpet}\label{scarpet}
 In this subsection, we compute the principal complex dimensions of the Sierpi\'nski carpet. In order to do this, we must first describe the computation of its distance zeta function $\zeta_A$.
The following proposition is the main result of \S\ref{scarpet}.

\begin{prop}\label{sierpinski_carpet0}
Let $A$ be the Sierpi\'nski carpet in $\eR^2$, constructed in the usual way inside the unit square, and let $\d>0$ be fixed. We assume without loss of generality that $\d>1/6$, so that the set $A_\d$ is connected.
Then, $\zeta_A$ ha a meromorphic continuation to all of $\Ce$ given by 
\begin{equation}\label{zeta_carpet}
\zeta_A(s)=\frac{8}{2^ss(s-1)(3^s-8)}+2\pi\frac{\d^s}s+4\frac{\d^{s-1}}{s-1}.
\end{equation}
Hence, the set of principal complex dimensions of $A$ is given by
\begin{equation}\label{princ}
\dim_{PC} A= \log_38+\frac{2\pi}{\log 3}\I\Ze
\end{equation}
and consists only of simple poles of $\zeta_A$. Also, the residues of $\zeta_A$ computed
at the principal poles $s_k:=\log_38+\frac{2\pi}{\log 3}k\I\in\dim_{PC} A$, with $k\in\Ze$, are given by
$$
\res(\zeta_A,s_k)=\frac{2^{-s_k}}{(\log3)s_k(s_k-1)}.
$$
\end{prop}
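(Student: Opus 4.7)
The plan is to split the integral defining $\zeta_A(s)=\int_{A_\d}d(x,A)^{s-2}\D x$ into two geometrically natural pieces: the contribution from the ``holes'' of the carpet (all points of $A_\d$ lying in the open unit square $(0,1)^2$) and the contribution from $A_\d\setminus[0,1]^2$. Since $\d>1/6$ guarantees that every point of $[0,1]^2$ is within $\d$ of $A$, the first region is exactly the union of all removed open squares across all generations, while the second consists of the four boundary strips of width $\d$ and the four quarter-disks of radius $\d$ at the corners.

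First I would compute the contribution of a single open square hole $Q_\ell$ of side $\ell$. Since the entire boundary $\partial Q_\ell$ lies in $A$, for any $x\in Q_\ell$ the nearest point of $A$ lies on $\partial Q_\ell$, so $d(x,A)=d(x,\partial Q_\ell)$. Splitting $Q_\ell$ into four congruent triangles by its diagonals and integrating $y^{s-2}$ (where $y$ is the distance to the nearest edge) over one such triangle yields, after an elementary one-variable computation,
\begin{equation*}
\int_{Q_\ell} d(x,A)^{s-2}\D x = \frac{2^{3-s}\ell^{s}}{s(s-1)}.
\end{equation*}
At the $k$-th generation there are $8^{k-1}$ holes of side $\ell=3^{-k}$, so summing the geometric series
\begin{equation*}
\sum_{k=1}^\infty 8^{k-1}\frac{2^{3-s}3^{-ks}}{s(s-1)}=\frac{2^{3-s}}{s(s-1)}\cdot\frac{1}{3^s-8}=\frac{8}{2^{s}s(s-1)(3^s-8)},
\end{equation*}
valid a priori for $\re s>\log_3 8$ and then by analytic continuation on all of $\mathbb C$. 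This yields the first term of \eqref{zeta_carpet}.

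For the exterior part, I would use that $\partial[0,1]^2\subset A$, so for $x\in A_\d\setminus[0,1]^2$ the distance $d(x,A)$ equals the distance to the nearest side of the unit square. Each of the four boundary strips contributes $\int_0^1\int_0^\d t^{s-2}\D t\,\D u = \d^{s-1}/(s-1)$, and each of the four corner quarter-disks contributes, in polar coordinates with the area element $r\,\D r\,\D\theta$, exactly $(\pi/2)\d^s/s$. Adding these gives $4\d^{s-1}/(s-1)+2\pi\d^{s}/s$, yielding the remaining two terms in \eqref{zeta_carpet}. The formula is manifestly meromorphic on all of $\mathbb C$.

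Finally, for the complex dimensions on the critical line $\{\re s=\log_3 8\}$: the factors $s$ and $s-1$ cannot contribute poles there since $\log_3 8\notin\{0,1\}$, so the only principal poles come from $3^s-8=0$, giving $s_k=\log_3 8+\frac{2\pi}{\log 3}\I k$ for $k\in\Ze$, all simple. The residue is then
\begin{equation*}
\res(\zeta_A,s_k)=\frac{8}{2^{s_k}s_k(s_k-1)\cdot\tfrac{d}{ds}(3^s-8)\big|_{s_k}}=\frac{2^{-s_k}}{(\log 3)\,s_k(s_k-1)},
\end{equation*}
using $3^{s_k}=8$. The main technical obstacle is the single-hole integral: one must carefully justify the triangular decomposition and the identification $d(x,A)=d(x,\partial Q_\ell)$, and check that the integral converges for $\re s$ sufficiently large so that the interchange of summation and integration used in assembling the geometric series is legitimate; the remainder of the proof is then a routine assembly of explicit pieces.
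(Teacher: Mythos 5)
Your proposal is correct and follows essentially the same route as the paper: the same splitting of $A_\d$ into the four corner quarter-disks and four boundary strips outside $[0,1]^2$ plus the generation-$k$ holes inside, the same diagonal decomposition of each hole into four triangles giving $\frac{2^{3-s}\ell^{s}}{s(s-1)}=\frac{8\cdot 2^{-s}\ell^{s}}{s(s-1)}$ per hole, the same geometric series yielding $\frac{8}{2^{s}s(s-1)(3^s-8)}$, and the same residue computation using $3^{s_k}=8$. Your explicit remark that $\d>1/6$ ensures $[0,1]^2\subset A_\d$ (so the holes are integrated in full) and that the series--integral interchange needs $\re s$ large are exactly the points the paper handles implicitly.
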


As we see from Eq.\  \eqref{zeta_carpet}, the set of complex dimensions (i.e., the set of poles of $\zeta_A$ in all of $\Ce$) of $A$ consists only of simple poles of $\zeta_A$ and is given by
\begin{equation}
\po(\zeta_A)=\{0,\,1\}\cup\Big(\log_38+\frac{2\pi}{\log 3}\I\Ze\Big).
\end{equation}

For the needs of the proof of Proposition \ref{sierpinski_carpet0}, it will be very convenient here to introduce some auxilliary notation. Let $A$ be a compact subset of $\eR^2$ and assume that $\O$ is a bounded open (or more generally, a bounded and Lebesgue measurable) subset of $\eR^2$. Then we define (for $\re s$ large enough)
\begin{equation}
\zeta_{A,\O}(s):=\int_{\O}d((x,y),A)^{s-2}\D x\,\D y,
\end{equation}
We call it the {\em relative distance zeta function} of $A$ with respect to $\O$. 
Such distance zeta functions and their generalizations, associated with suitable ordered pairs $(A,\O)$ of subsets of $\eR^N$ (called {\em relative fractal drums} or RFDs in $\eR^N$), are studied in detail in [LapRa\v Zu3].

\begin{proof}[Proof of Proposition \ref{sierpinski_carpet0}]
In order to evaluate 
$
\zeta_A(s)=\int_{A_\d} d((x,y),A)^{s-2}\D x\,\D y,
$ 
we integrate $(i)$~first over the set $A_\d\setminus [0,1]^2$, and then $(ii)$~over the unit square $[0,1]^2$. 

\bigskip

{\em Step} ($i$): The integration over the set $A_\d\setminus (0,1)^2$ leads us to the following result:
\begin{equation}
\begin{aligned}
\zeta_{A,A_\d\setminus[0,1]^2}(s)&=\int_0^{2\pi}\D\f \int_0^\d r^{s-2}r\,\D r+4\int_0^1\D x\int_0^\d y^{s-2}\D y\\
&=2\pi\frac{\d^s}s+4\frac{\d^{s-1}}{s-1},
\end{aligned}
\end{equation}
for $\re s>1$.
Indeed, it suffices to note that the (connected) set $A_\d\setminus [0,1]^2$ can be viewed as the disjoint union of 

\medskip

$\bullet$ four quarters of the corresponding disks of radius $\d$, with centers at the vertices of the unit square $[0,1]^2$  
and 
\smallskip

$\bullet$ of the remaining four rectangles, which are all isometrically isomorphic to $[0,1]\times(0,\d)$.

\bigskip

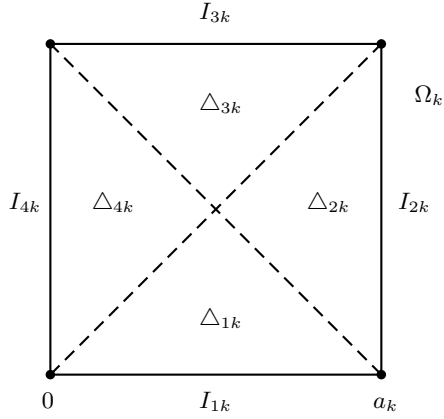
\begin{figure}[t]
\psset{unit=1.1}
\begin{pspicture}(-3.5,-0.5)(3.6,4.5)

\pstGeonode[PointName=none](-2,0){A}(2,0){B}(2,4){C}(-2,4){D}
\pstLineAB{A}{B}
\pstLineAB{B}{C}
\pstLineAB{C}{D}
\pstLineAB{D}{A}
\put(-2.1,-0.4){\small$0$}
\put(1.9,-0.4){\small$a_k$}
\put(2.4,3.3){\small$\O_k$}
\put(-0.2,-0.4){\small$I_{1k}$}
\put(-0.2,0.6){\small$\triangle_{1k}$}
\put(-0.2,4.3){\small$I_{3k}$}
\put(-0.2,3.2){\small$\triangle_{3k}$}
\put(-2.5,2){\small$I_{4k}$}
\put(-1.5,2){\small$\triangle_{4k}$}
\put(2.2,2){\small$I_{2k}$}
\put(1.1,2){\small$\triangle_{2k}$}

\pstLineAB[linestyle=dashed]{A}{C}
\pstLineAB[linestyle=dashed]{B}{D}

\end{pspicture}
\caption{The square $\O_k$ corresponds to any of the $8^{k-1}$ deleted open square in the $k$-th generation during the construction of the Sierpi\'nski carpet. It can be viewed as the union of four triangles, determined by its diagonals.
This figure explains a part of Step ($ii$) in the proof of Proposi\-ti\-on~\ref{sierpinski_carpet0}.}
\label{sierpinski_carpet1k}
\end{figure}

{\em Step} ($ii$): Now, let us consider $\zeta_{A,[0,1]^2}$. Since the boundary of the unit square $[0,1]^2$ is of $2$-dimensional Lebesgue measure zero, it suffices to consider $\zeta_{A,(0,1)^2}(s)$. Furthermore, since $|A|_2=0$, it suffices to consider
\begin{equation}\label{zeta_sierpinski_carpet}
\zeta_{A,[0,1]^2}(s)=\zeta_{A,(0,1)^2\setminus A}(s)=\sum_{k=1}^\ty 8^{k-1}\zeta_{A_k,\O_k}(s),
\end{equation}
where $\O_k$ is a fixed deleted square of sides of length $a_k:=3^{-k}$ in the $k$-th generation, $A_k=\pa\O_k$ is the boundary of $\O_k$ and $k$ is any positive integer.
Recall that the $k$-th generation of deleted squares contains precisely $8^{k-1}$ deleted squares which are all isometric to $\O_k$. 
Hence. it follows that all of the distance zeta functions corresponding to the deleted open squares $\O_k$ in the $k$-th generation coincide.   Now, if we denote the side length of $\O_k$ by $a_k:=3^{-k}$, it is easy to check that
\begin{equation}\label{8zeta}
\zeta_{A_k,\O_k}(s)=\frac{8\cdot 2^{-s}a_k^{s}}{s(s-1)},
\end{equation}
for $\re s>1$. 
It then follows from the principle of analytic continuation that \eqref{8zeta} continues to hold for all $s\in\Ce$.
Indeed, for any of the four sides $I_{1k}$, $I_{2k}$, $I_{3k}$ and $I_{4k}$ of the square $\O_k$, it is natural to consider the set of points $(x,y)\in\O_k$ such that 
$
d((x,y),\pa\O_k)=d((x,y),I_{ik}),$ 
for
$
i=1,2,3,4.
$
It is easy to see that this set is a triangle, and therefore we can decompose $\O_k$ into the union of four isosceles right triangles $\triangle_{ik}$, $i=1,2,3,4$ (each of them corresponding to one of the four sides of the square $\O_k$), as indicated in Figure \ref{sierpinski_carpet1k}. Note that the triangles are determined by the two diagonals of $\O_k$.
Obviously, 
$
\zeta_{A_k,\O_k}(s)=4\zeta_{I_{1k},\triangle_{1k}}(s),
$
while an easy computation in local Cartesian coordinates yields that
$
\zeta_{I_{1k},\triangle_{1k}}(s)={2(a_k/2)^{s}}/{(s(s-1))},
$
for $\re s>1$.
Therefore, we obtain \eqref{8zeta}. Substituting \eqref{8zeta} into \eqref{zeta_sierpinski_carpet}, we 
conclude that
$$
\zeta_{A,[0,1]^2}(s)=\frac{2^{-s}}{s(s-1)}\sum_{k=1}^\ty 8^{k}3^{-ks}=\frac{8}{2^ss(s-1)(3^s-8)},
$$
for $\re s>\log_28$. Naturally, by analytic continuation, this same equation continues to hold for all $s\in\Ce$.
\medskip

{\rm Step} ($iii$): The resulting expression for $\zeta_A$ stated in Eq.\  \eqref{zeta_carpet} follows from Steps ($i$) and ($ii$).
By the principle of analytic continuation, $\zeta_A$ can be meromorphically extended to all of $\Ce$ and is given by the same formula. Hence, the principal complex dimensions $s_k\in\dim_{PC} A$, $k\in\Ze$, are given by Eq.\ \eqref{princ}. Finally, we omit the easy computation of the residues $\res(\zeta_A,s_k)$.
\end{proof}

\bigskip

\section{Meromorphic extensions of distance and tube zeta functions}\label{merom_ext}

The aim of this section is to describe a construction of meromorphic extensions of distance and tube zeta functions, and obtain several refinements of Theorem~\ref{pole1}  
and its counterpart for the tube zeta function. The main results of this section are stated in Theorems~
\ref{measurable} and~\ref{nonmeasurable}, dealing with tube zeta functions of Minkowski measurable and Minkowski nonmeasurable sets, respectively.

\subsection{Meromorphic extensions of tube and distance zeta functions}\label{merom_ext_meromorphic}
The following theorem shows that the tube zeta function of a large class of Minkowski measurable sets possess a nontrivial meromorphic extension,
assuming a mild technical condition on the growth rate of the tube function $t\mapsto|A_t|$.
As we see from Theorems~\ref{measurable} and~\ref{nonmeasurable} below, the second term in the asymptotic expansion of the tube function plays a crucial role in order to reach such a conclusion. 
By using the functional equation \eqref{equ_tilde} connecting $\zeta_A$ and $\tilde\zeta_A$, it is easy to derive
the corresponding counterparts of these results for the distance zeta functions of bounded sets and for the geometric zeta functions of bounded fractal strings; see Theorem \ref{meas_nonmeas} below, which extends to $\zeta_A$ both Theorem \ref{measurable} and Theorem \ref{nonmeasurable}.

\begin{theorem}[Minkowski measurable case]\label{measurable}
Let $A$ be a subset of $\eR^N$ such that there exist  $\alpha>0$, $\mathcal M\in(0,+\ty)$ and $D\ge0$ satisfying
\begin{equation}\label{A_t}
|A_t|= t^{N-D}\left({\mathcal M}+O(t^\alpha)\right)\quad\mathrm{as}\quad t\to0^+.
\end{equation}
Then, $\dim_BA=D$ and $A$ is Minkowski measurable with $\mathcal M^D(A)=\cal M$. Furthermore, the tube zeta function $\tilde\zeta_A$ has for abscissa of convergence $D(\tilde\zeta_A)=D$ and possesses a unique meromorphic continuation $($still denoted by $\tilde\zeta_A)$ to $($at least$)$ the open right half-plane $\{\re s>D-\alpha\}$; that is,
\begin{equation}
D_{\rm mer}(\tilde\zeta_A)\le D-\a.
\end{equation}
The only pole of $\tilde\zeta_A$ in this half-plane is $s=D$; it is simple, and $\res(\tilde\zeta_A,D)=\M$.
\end{theorem}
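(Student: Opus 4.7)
\medskip

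\noindent\textbf{Proof proposal.} The Minkowski measurability assertions are immediate from hypothesis \eqref{A_t}: dividing by $t^{N-D}$ and letting $t\to 0^+$ gives $\mathcal{M}^D(A)=\mathcal{M}\in(0,\infty)$, which forces $\dim_B A=D$. The heart of the argument is a direct splitting of the tube zeta function into a ``principal'' part carrying the pole at $D$ and a ``remainder'' part holomorphic further to the left.

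Concretely, I would fix $\delta>0$ and $t_0\in(0,\delta)$ such that $\bigl||A_t|\,t^{-(N-D)}-\mathcal{M}\bigr|\le C\,t^{\alpha}$ for all $t\in(0,t_0]$, and write $|A_t|=\mathcal{M}\,t^{N-D}+R(t)$ with $R(t)=O(t^{N-D+\alpha})$ as $t\to 0^+$ (and $R$ bounded on $[t_0,\delta]$, since $|A_t|\le|A_\delta|<\infty$). Substituting into the definition \eqref{tildz} of $\tilde\zeta_A$ yields, for $\re s>D$,
\begin{equation*}
\tilde\zeta_A(s)=\mathcal{M}\int_0^\delta t^{s-D-1}\,\D t+\int_0^\delta t^{s-N-1}R(t)\,\D t=\mathcal{M}\,\frac{\delta^{s-D}}{s-D}+I(s),
\end{equation*}
where $I(s):=\int_0^\delta t^{s-N-1}R(t)\,\D t$. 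The first term is explicit and extends meromorphically to all of $\Ce$ with a single simple pole at $s=D$ of residue $\mathcal{M}$.

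The key step is to prove that $I(s)$ is holomorphic on the half-plane $\{\re s>D-\alpha\}$. I would split $I(s)=\int_0^{t_0}+\int_{t_0}^{\delta}$: the piece over $[t_0,\delta]$ is an entire function of $s$ by a standard application of Morera's/Fubini's theorem (the integrand is continuous in $t$, holomorphic in $s$, and uniformly bounded on compact subsets of $\Ce$). For the piece over $(0,t_0)$, the bound $|R(t)|\le C\,t^{N-D+\alpha}$ gives $|t^{s-N-1}R(t)|\le C\,t^{\re s-D+\alpha-1}$, which is Lebesgue-integrable on $(0,t_0)$ precisely when $\re s>D-\alpha$; local uniform integrability on compact subsets of $\{\re s>D-\alpha\}$ combined with holomorphicity in $s$ of the integrand yields holomorphicity of this piece on that half-plane.

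Putting these together, $\tilde\zeta_A$ admits a meromorphic continuation to $\{\re s>D-\alpha\}$ whose only pole there is the simple pole at $s=D$ with $\res(\tilde\zeta_A,D)=\mathcal{M}$, giving $D_{\rm mer}(\tilde\zeta_A)\le D-\alpha$. Finally, the equality $D(\tilde\zeta_A)=D$ follows by combining Theorem \ref{equr} (which gives $D(\tilde\zeta_A)\le\ov\dim_BA=D$) with the fact that $s=D$ is a genuine pole of the continuation, forcing $D(\tilde\zeta_A)\ge D$. I expect no serious obstacle: everything reduces to the standard dominated-convergence/Morera argument for holomorphicity of parameter integrals, the only mild care being the honest split at $t_0$ so that the $O(t^\alpha)$ control is used only where it was assumed.
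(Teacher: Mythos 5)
Your argument is correct and mirrors the paper's own proof: both decompose $\tilde\zeta_A$ (using hypothesis \eqref{A_t}) into the explicit term $\mathcal{M}\,\delta^{s-D}/(s-D)$ plus an integral remainder that is shown to converge and be holomorphic on $\{\re s>D-\alpha\}$, and then invoke uniqueness of analytic continuation. Your treatment is slightly more careful than the paper's in one respect — you explicitly split at $t_0$ to justify using the $O(t^\alpha)$ control only near $0$ and invoke Morera for the holomorphicity of the remainder, whereas the paper simply bounds the remainder integral and appeals to the general theory of Dirichlet-type integrals — but these are cosmetic differences, not a different route.
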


\begin{proof}
We have
\begin{equation}\nonumber
\begin{aligned}
\tilde\zeta_A(s)&=\int_0^\delta t^{s-N-1}|A_t|\,\D t=\int_0^\delta t^{s-N-1}t^{N-D}({\mathcal M}+O(t^\alpha))\,\D t\\
&=\underbrace{{\mathcal M}\frac{\delta^{s-D}}{s-D}}_{\zeta_1(s)}+\underbrace{\int_0^\delta t^sO(t^{-D+\alpha-1})\,\D t}_{\zeta_2(s)},
\end{aligned}
\end{equation}
provided $s\in\Ce$ is such that $\re s>D$.
The function $\zeta_1(s)$ is meromorphic in the entire complex plane and $D(\zeta_1)=D$, while for $\zeta_2(s)$ we have
\begin{equation}\nonumber
|\zeta_2(s)|\le K\int_0^\delta t^{\re s-D+\alpha-1}\D t<\ty
\end{equation}
for $\re s>D-\alpha$, where $K$ is a positive constant. Therefore, $D(\zeta_2)\le D-\alpha<D=D(\zeta_1)$, and the claim now follows by applying the principle of meromorphic continuation to the relation $\tilde\zeta_A(s)=\zeta_1(s)+\zeta_2(s)$ (valid for all $s\in\Ce$ such that $\re s>D$), by observing that the right-hand side possesses a (necessarily unique) meromorphic extension to the open right half-plane $\{\re s>D-\a\}$.
\end{proof}

It can be shown that, in some precise sense, the sufficient condition \eqref{A_t} for meromorphic extendibility of the tube zeta function, appearing in Theorem \ref{measurable}, is also necessary; see [LapRa\v Zu8, Thm.\ 5.7]. 

\begin{remark}\label{ext_measurable}
The function of order $O(t^\alpha)$ as $t\to0^+$, appearing in Theorem~\ref{measurable} or in Theorem \ref{nonmeasurable} below, can be replaced by a function of order $O(t^{\alpha)})$ as $t\to0^+$, in the precise sense of Definition~\ref{4.181/2}.
These more general functions include, for example, $t^\alpha\log(1/t)$ or 
$t^\alpha\log\ldots\log(1/t)$,
near $t=0^+$, where $\a>0$ and the last factor is the $q$-th iterated logarithm for any integer $q\geq 1$.
\end{remark}

\begin{defn}\label{4.181/2}
Let $f$ be defined on an interval $(0,\delta)$, for some $\delta>0$. Then, given $\alpha\in\eR$, $f$ is said to be {\em of order} $O(t^{\alpha)})$ {\em as} $t\to 0^+$ (which we write $f(t)=O(t^{\alpha)})$ as $t\to 0^+$) if for every $\alpha_0<\alpha$, it is of order $O(t^{\alpha_0})$ as $t\to 0^+$.
\end{defn}

The following theorem deals with a class of bounded sets in $\eR^N$ that are not Minkowski measurable.
More specifically, we deal with the sets $A$ such that $0\le\M_*^D(A)<\M^{*D}(A)<\ty$, where $D=\dim_BA$.

Let us first introduce some notation.
Given a $T$-periodic function $G:\eR\to\eR$, we denote by $G_0$ its truncation to $[0,T]$; that is,
\begin{equation}
G_0(\tau)=
\begin{cases}
G(t)& \mbox{if $\tau\in[0,T]$}\\
0& \mbox{if $\tau\notin[0,T]$}.
\end{cases}
\end{equation}
Furthermore, the Fourier transform of $G_0$
is denoted by $\hat G_0$:
\begin{equation}\label{fourier}
\hat G_0(t)=\int_{-\ty}^{+\ty}\E^{-2\pi \I \,t\tau}G_0(\tau)\,\D\tau=\int_0^T\E^{-2\pi \I \,t\tau}G(\tau)\,\D\tau.
\end{equation}
 
\medskip

\begin{theorem}[Minkowski nonmeasurable case]\label{nonmeasurable} 
Let $A$ be a bounded subset of $\eR^N$ such that there exist $D\ge0$, $\alpha>0$, and $G:\eR\to[0,+\ty)$ a nonconstant periodic function with minimal period $T>0$, satisfying
\begin{equation}\label{G}
|A_t|=t^{N-D}\left(G(\log t^{-1})+O(t^\alpha)\right)\quad\mbox{\rm as\q$t\to0^+$.}
\end{equation}
 Then $\dim_BA=D$, $G$ is continuous, and $A$ is Minkowski nondegenerate with upper and lower Minkowski contents respectively given by
\begin{equation}\label{1.4.201/2}
\M_*^D(A)=\min G,\quad \M^{*D}(A)=\max G.
\end{equation}
$($Hence, the range of $G|_{[0,T]}$ is equal to $[\M_*^D(A),\M^{*D}(A)]$.$)$ 
Furthermore, the tube zeta function $\tilde\zeta_A$ has for abscissa of convergence $D(\tilde\zeta_A)=D$ and possesses a unique meromorphic extension $($still denoted by $\tilde\zeta_A$$)$
to {\rm({\it at least})} the open right half-plane $\{\re s>D-\alpha\}$; that is,
\begin{equation}
D_{\rm mer}(\tilde\zeta_A)\le D-\a.
\end{equation}
Moreover, the set of all the poles of $\tilde\zeta_A$ located in this half-plane is given by
\begin{equation}\label{Dpoles}
\mathcal P(\tilde \zeta_A)=\left\{s_k=D+\frac{2\pi}T\I \,k:\hat G_0\Big(\frac kT\Big)\ne0,\,\,k\in\Ze\right\}
\end{equation} 
and this set coincides with the set of principal complex dimensions of $A$; that is, with $\dim_{PC}A:=\po_c(\tilde{\zeta}_A)$, in the notation of Definition~\ref{dimc}. 
The complex dimensions are all simple, and the residue at each $s_k\in\mathcal P(\tilde\zeta_A)$, $k\in\Ze$, is given by
\begin{equation}\label{res_fourier}
\res(\tilde\zeta_A,s_k)=\frac1T\hat G_0\Big(\frac kT\Big).
\end{equation}
If $s_k\in \mathcal P(\tilde \zeta_A)$, then $s_{-k}\in \mathcal P(\tilde \zeta_A)$, and
\begin{equation}\label{riemann_lebesgue}
|\res(\tilde\zeta_A,s_k)|\le \frac1T\int_0^TG(\tau)\,\D\tau,\quad \lim_{k\to\ty}\res(\tilde\zeta_A,s_k)=0.
\end{equation}
In addition, the set of poles $\mathcal P(\tilde\zeta_A)$ $($i.e., of complex dimensions of $A)$ contains $s_0=D$, and
\begin{equation}\label{avarage}
\res(\tilde\zeta_A,D)=\frac1T\int_0^TG(\tau)\,\D\tau.
\end{equation}
In particular, $A$ is {\rm not} Minkowski measurable and
\begin{equation}\label{res_inequalities}
\M_*^D(A)<\res(\tilde\zeta_A,D)<\M^{*D}(A)<\infty.
\end{equation}
\end{theorem}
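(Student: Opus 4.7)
The plan is to substitute the asymptotic expansion \eqref{G} into the definition of $\tilde\zeta_A$, peel off the error in the manner of the proof of Theorem~\ref{measurable}, and recognise the remaining integral as a geometric series in $s$ whose poles and residues can be read off directly in terms of the Fourier coefficients $\hat G_0(k/T)$.

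First I would dispose of the preliminary assertions about $G$ and the Minkowski contents. Setting $\tau:=\log t^{-1}$ and $F(\tau):=\E^{\tau(N-D)}|A_{\E^{-\tau}}|$, hypothesis \eqref{G} reads $F(\tau)-G(\tau)=O(\E^{-\a\tau})$ as $\tau\to+\ty$. Since $|A_t|$ is continuous in $t>0$ outside a countable set, $F$ is right-continuous with at most countably many jumps, and by $T$-periodicity of $G$ we have $G(\tau)=\lim_{n\to\ty}F(\tau+nT)$ pointwise for every $\tau$; from this, together with the explicit form of the (countably many) jumps of $F$, one deduces that $G$ is continuous. The formulas $\M^{*D}(A)=\max G$ and $\M_*^D(A)=\min G$ then follow by taking $\limsup$ and $\liminf$ of $|A_t|t^{D-N}$ as $t\to0^+$; hence $\dim_BA=D$ and $A$ is Minkowski nondegenerate. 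The identity $D(\tilde\zeta_A)=D$ is given by the tube-zeta analogs of Theorem~\ref{an}(b)--(c) (see Remark~\ref{tilde_analog}).

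For the meromorphic continuation, I would split, for $\re s>D$,
\[
\tilde\zeta_A(s)=\int_0^\delta t^{s-D-1}G(\log t^{-1})\,\D t+\int_0^\delta t^{s-D-1}\cdot O(t^{\a})\,\D t=:I(s)+J(s).
\]
Exactly as in the proof of Theorem~\ref{measurable}, the error $J(s)$ is holomorphic on $\{\re s>D-\a\}$. In $I(s)$ the substitution $\tau=\log t^{-1}$, followed by partitioning $[\log\delta^{-1},\ty)$ into successive intervals of length $T$ and invoking $T$-periodicity of $G$, gives (up to an entire ``tail'' $H(s)$ over the initial partial period)
\[
I(s)=\frac{C(s)}{1-\E^{-T(s-D)}}\int_0^T\E^{-\sigma(s-D)}G(\sigma)\,\D\sigma+H(s),
\]
where $C(s)$ is an entire prefactor arising from the shift at $\log\delta^{-1}$ and satisfies $C(s_k)=1$ at every $s_k:=D+\tfrac{2\pi}{T}\I k$, $k\in\Ze$. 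This closed form already furnishes the desired meromorphic extension of $\tilde\zeta_A$ to $\{\re s>D-\a\}$, hence $D_{\mathrm{mer}}(\tilde\zeta_A)\le D-\a$; in fact, the main term of $I(s)$ is meromorphic on all of $\Ce$.

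The poles and residues are now read off immediately. The zeros of $1-\E^{-T(s-D)}$ are precisely the $s_k$, all simple, with derivative of the denominator equal to $T$; the numerator at $s_k$ equals $\hat G_0(k/T)$ by \eqref{fourier}. Hence $s_k\in\po(\tilde\zeta_A)$ iff $\hat G_0(k/T)\neq0$, yielding \eqref{Dpoles}, and $\res(\tilde\zeta_A,s_k)=\tfrac1T\hat G_0(k/T)$. Since $G\ge0$ is nonconstant, $\hat G_0(0)=\int_0^TG>0$, so $s_0=D\in\po(\tilde\zeta_A)$, which both confirms $D(\tilde\zeta_A)=D$ and gives \eqref{avarage}. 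The bound $|\hat G_0(k/T)|\le\int_0^TG\,\D\sigma$ is immediate and the Riemann--Lebesgue lemma applied to $G_0\in L^1[0,T]$ gives $\res(\tilde\zeta_A,s_k)\to0$ as $|k|\to\ty$, yielding \eqref{riemann_lebesgue}; the symmetry $s_{-k}\in\po(\tilde\zeta_A)$ whenever $s_k\in\po(\tilde\zeta_A)$ follows from $\hat G_0(-k/T)=\overline{\hat G_0(k/T)}$ since $G$ is real-valued. Finally, \eqref{res_inequalities} reduces to the elementary fact that the mean of a continuous, nonnegative, nonconstant, $T$-periodic function strictly separates its minimum and maximum; in particular $\M_*^D(A)<\M^{*D}(A)$, so $A$ is not Minkowski measurable. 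The main obstacle is the geometric-series packaging of $I(s)$, in particular the clean bookkeeping of the ``tail'' at the lower limit $\log\delta^{-1}$ and of the entire shift prefactor $C(s)$, together with justifying the continuity of $G$ from \eqref{G}; once these are in place, the identification of poles and residues is essentially automatic.
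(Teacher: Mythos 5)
Your proposal follows essentially the same route as the paper: split $\tilde\zeta_A=I(s)+J(s)$ with the $O(t^\alpha)$ error $J$ holomorphic on $\{\re s>D-\alpha\}$ exactly as in Theorem~\ref{measurable}, obtain a closed form for the periodic main term, and read off the poles $s_k$ and residues $\tfrac1T\hat G_0(k/T)$ from the factor $1-\E^{-T(s-D)}$ in the denominator. The only cosmetic difference is that you sum a geometric series over the periods, whereas the paper solves the one-step functional equation $\zeta_1(s)=\E^{T(s-D)}\bigl(\zeta_1(s)+\int_\delta^{\E^{-T}\delta}\cdots\bigr)$; these are the same computation, and in fact the geometric series gives the closed form with prefactor $1$ and no tail if you keep the one-period integral over $[\log\delta^{-1},\log\delta^{-1}+T]$ and only shift it to $[0,T]$ at $s=s_k$ (where the integrand is $T$-periodic), so your bookkeeping of $C(s)$ and $H(s)$ is avoidable.

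The one genuinely weak spot is your argument for the continuity of $G$. Pointwise convergence $G(\tau)=\lim_{n\to\infty}F(\tau+nT)$ does not yield continuity of $G$, and the appeal to ``the explicit form of the (countably many) jumps of $F$'' is not a proof; moreover the hedge is unnecessary, since $t\mapsto|A_t|$ is continuous on $(0,\infty)$ (no countable exceptional set), which is exactly what the paper uses in Lemma~\ref{osc}. The correct repair is immediate: the error in \eqref{G} is $O(\E^{-\alpha\tau})$, so $F(\cdot+nT)\to G$ \emph{locally uniformly}, and a locally uniform limit of continuous functions is continuous; this is the content of Lemma~\ref{osc}. Separately, your citation of the tube analog of Theorem~\ref{an}$(c)$ for $D(\tilde\zeta_A)=D$ is slightly misplaced, since its hypothesis $\M_*^D(A)>0$ need not hold when $\min G=0$; either the analog of part $(b)$ alone, or the observation (as in the paper) that $D(\zeta_1)=D$ because $s=D$ is a pole of the continued main term with positive residue $\tfrac1T\int_0^TG>0$, settles this. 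With these two points repaired, your identification of \eqref{Dpoles}, \eqref{res_fourier}, \eqref{riemann_lebesgue}, \eqref{avarage} and \eqref{res_inequalities} coincides with the paper's proof.
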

\bigskip

\begin{remark}
It can be shown, much as in the proof of \cite[Thm.\ 8.30]{lapidusfrank12},
that under the hypotheses of Theorem \ref{nonmeasurable}, the residue of $\tilde\zeta_A$ at $s=D$
in Eq.\  \eqref{avarage} also coincides with the {\em average Minkowski content} of $A$ (defined as a suitable
Ces\`aro multiplicative average of $t\mapsto t^{-(N-D)}|A_t|$, exactly as in \cite[Def.\ 8.29]{lapidusfrank12}, except with $1$ replaced by $N$). A similar comment applies to the distance zeta function $\zeta_A$ in Theorem \ref{meas_nonmeas} below, with the obvious modifications; compare Eq.\  \eqref{avarage} above and Eq.\  \eqref{avarage1} below.
\end{remark}

In the proof of Theorem~\ref{nonmeasurable}, we shall need the following simple lemma, the proof of which is omitted.

\begin{lemma}\label{osc}
Let $F:(0,\delta)\to\eR$ be continuous, and assume that $G:\eR\to\eR$ is a $T$-periodic function, for some $T>0$.
If $F(t)=G(\log t^{-1})+o(1)$ as $t\to0^+$, then $G$ is continuous.
\end{lemma}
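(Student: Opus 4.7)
The plan is to transfer continuity from $F$ to $G$ by composing with the change of variables $\tau = \log(1/t)$ and then exploiting the periodicity of $G$ to push continuity from arbitrarily large $\tau$ back to any fixed point $\tau_0 \in \eR$. Concretely, set $H(\tau) := F(\E^{-\tau})$, which is defined for $\tau > \log(1/\delta)$ and continuous there as a composition of continuous maps. The hypothesis $F(t) = G(\log t^{-1}) + o(1)$ as $t \to 0^+$ translates into $H(\tau) - G(\tau) \to 0$ as $\tau \to +\infty$.

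Next, fix an arbitrary $\tau_0 \in \eR$ and $\varepsilon > 0$. The key idea is to choose an integer $k$ so large that $\tau_0 + kT$ lies well inside the domain of $H$ and $|H(\tau) - G(\tau)| < \varepsilon/3$ for all $\tau$ in some neighborhood of $\tau_0 + kT$. By $T$-periodicity of $G$, one has $G(\tau) = G(\tau + kT)$ for all $\tau$, so that
\begin{equation*}
G(\tau) - G(\tau_0) = \bigl[H(\tau + kT) - H(\tau_0 + kT)\bigr] + \bigl[G(\tau+kT) - H(\tau+kT)\bigr] - \bigl[G(\tau_0+kT) - H(\tau_0+kT)\bigr].
\end{equation*}
The last two bracketed terms are each bounded by $\varepsilon/3$ by the choice of $k$, while the first bracket tends to $0$ as $\tau \to \tau_0$ by continuity of $H$ at the point $\tau_0 + kT$. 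Hence $|G(\tau) - G(\tau_0)| < \varepsilon$ for $\tau$ sufficiently close to $\tau_0$, which establishes continuity of $G$ at $\tau_0$.

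Since $\tau_0$ was arbitrary, $G$ is continuous on all of $\eR$. There is no real obstacle here: the argument is essentially a routine exercise in pushing continuity from a half-line back to the whole real line using periodicity, and the only point requiring mild care is making sure the integer $k$ is chosen so that the neighborhood of $\tau_0 + kT$ on which one controls $|H - G|$ is genuinely contained in the domain of $H$, which is automatic for $k$ sufficiently large.
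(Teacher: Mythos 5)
Your proof is correct: the change of variables $H(\tau):=F(\E^{-\tau})$, the three-term decomposition using $G(\tau)=G(\tau+kT)$, and the choice of $k$ large enough that a whole neighborhood of $\tau_0+kT$ lies in the domain of $H$ and in the region where $|H-G|<\varepsilon/3$ are all sound. The paper explicitly omits its proof of this lemma, and your argument is precisely the natural one it presumably had in mind (transporting continuity from the half-line $\{\tau>\log(1/\delta)\}$ back to all of $\eR$ via periodicity), so there is nothing to add.
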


We are now ready to establish Theorem~\ref{nonmeasurable}.

\begin{proof}[Proof of Theorem~\ref{nonmeasurable}] To show that $G$ is continuous, it suffices to apply 
Lemma~\ref{osc} to $F(t):=|A_t|t^{D-N}$, which is defined and continuous for $t>0$.
We can write $\tilde\zeta_A(s)=\zeta_1(s)+\zeta_2(s)$, where
\begin{equation}\label{z1s}
\zeta_1(s)=\int_0^\delta t^{s-D-1}G(\log t^{-1})\,\D t,\quad \zeta_2(s)=\int_0^\delta t^sO(t^{-D+\alpha-1})\,\D t,
\end{equation}
for some $\delta>0$ fixed.
As in the proof of Theorem~\ref{measurable}, we have $D(\zeta_2)=D-\alpha$. Therefore, it suffices to prove that $\zeta_1(s)$ can be meromorphically extended to the whole complex plane. We will show this by computing $\zeta_1(s)$ in a closed form.
Since $G$ is $T$-periodic, we have
$
\zeta_1(s)=\int_0^\delta t^{s-D-1}G(\log t^{-1}+T)\,\D t.
$
Introducing a new variable $u$ defined by $\log u^{-1}=\log t^{-1}+T$, that is, $u=\E^{-T}t$, we obtain
\begin{equation}\nonumber
\begin{aligned}
\zeta_1(s)&=\E^{T(s-D)}\int_0^{\delta \E^{-T}}u^{s-D-1}G(\log u^{-1})\D u=\E^{T(s-D)}\left(\int_0^\delta+\int_\delta^{\E^{-T}\delta}\right)\\
&=\E^{T(s-D)}\left(\zeta_1(s)+\int_\delta^{\E^{-T}\delta}t^{s-D-1}G(\log t^{-1})\,\D t\right).
\end{aligned}
\end{equation}
From this, we immediately obtain $\zeta_1(s)$ in the following closed form:
\begin{equation}\label{z1}
\begin{aligned}
\zeta_1(s)&=\frac{\E^{T(s-D)}}{\E^{T(s-D)}-1}\int_{\E^{-T}\delta}^\delta t^{s-D-1}G(\log t^{-1})\,\D t\\
&=\frac{\E^{T(s-D)}}{\E^{T(s-D)}-1}\underbrace{\int_{\log \delta^{-1}}^{\log\delta^{-1}+T}\E^{-\tau(s-D)}G(\tau)\,\D\tau}_{I(s)},
\end{aligned}
\end{equation}
where in the last equality we passed to the new variable $\tau=\log t^{-1}$.
The last integral $I(s)$ is obviously an entire function of $s$, since $\delta$ is different from~$0$ and~$\ty$. 
This shows that the function $\zeta_1(s)$ is meromorphic on all of $\Ce$, and the set of its poles is equal to the set of complex solutions $s_k$ of $\exp(T(s-D))=1$
for which $I(s_k)\ne0$.
If $I(s_k)=0$, it is easy to see that $s_k$ is a removable singularity of $\zeta_1(s)$:
\begin{equation}\nonumber
\lim_{s\to s_k}\zeta_1(s)=\lim_{s\to s_k}\frac{s-s_k}{\E^{T(s-D)}-1} \E^{T(s-s_k)}\frac{I(s)}{s-s_k}=\frac1T I'(s_k),
\end{equation}
where $I'$ denotes the derivative of $I$.
Since
\begin{equation}\label{Isk}
\begin{aligned}
I(s_k)&=\int_{\log \delta^{-1}}^{\log\delta^{-1}+T} \E^{-2\pi \I\frac kT\cdot\tau}G(\tau)\,\D\tau=\int_0^T \E^{-2\pi \I\frac kT\cdot\tau}G(\tau)\,\D\tau=\hat G_0\Big(\frac kT\Big),
\end{aligned}
\end{equation}
where we have used the fact that both $\tau\mapsto \E^{2\pi \I\frac kT\cdot\tau}$ and $\tau\mapsto G(\tau)$ are $T$-periodic functions,
we conclude that the set of poles of $\tilde\zeta_A$ is described by~(\ref{Dpoles}). Note that it contains $D$, since for $k=0$ we have
$
I(D)=I(s_0)=\hat G_0(0)=\int_0^TG(\tau)\,\D\tau>0.
$
Indeed, the range of the function $G|_{[0,T]}$ is equal to the interval $[\M_*,\M^{*}]$, where 
$\M_*=\M_*^D(A)$ and $\M^{*}=\M^{*D}(A)$.
Since $G$ is assumed to be nonconstant, we deduce from (\ref{G}) that $0\le \M_*<\M^{*}<\ty$.

Therefore, we have $D(\zeta_1)=D>D-\alpha=D(\zeta_2)$, and, since $\tilde\zeta_A(s)=\zeta_1(s)+\zeta_2(s)$ for $\re s>D$, we also see that $\tilde\zeta_A$ possesses a (necessarily unique) meromorphic extension to the open right half-plane $\{\re s>D-\alpha\}$.

Next, we compute the residue of $\tilde\zeta_A$ at $s_k=D+\frac{2\pi}Tk\I \in \mathcal P(\tilde\zeta_A)$ by using l'Hospital's rule and~(\ref{Isk}):
\begin{equation}\nonumber
\begin{aligned}
\res (\tilde\zeta_A,s_k)&=\res (\zeta_1,s_k)=\lim_{s\to s_k}\frac{s-s_k}{\E^{T(s-D)}-1} \,\E^{T(s_k-D)} I(s_k)
=\frac1T\hat G_0\Big(\frac kT\Big).
\end{aligned}
\end{equation}
Substituting $k=0$, we obtain~(\ref{avarage}).
The inequalities in (\ref{res_inequalities}) follow from (\ref{avarage}).

As is well known, since $G_0\in L^1(\eR)$, we have $|\hat G_0(\tau)|\le\|G_0\|_{L^1(\eR)}=\|G\|_{L^1(0,T)}$ and $\lim_{|t|\to\ty}\hat G_0(t)=0$ (by the Riemann--Lebesgue lemma); hence, (\ref{riemann_lebesgue}) follows immediately.
\end{proof}

\begin{example}\label{cantor_nonmeasurable}
Let $A$ be the classic ternary Cantor set in $[0,1]$. According to \cite[Eq.~(1.11)]{lapidusfrank12} (which can also be deduced from the general pointwise tube formula in \cite{mm}), we have
\begin{equation}\label{s-cantor}
|A_t|=t^{1-D}2^{1-D}\left(2^{-\{\log_3(2t)^{-1}\}}+\left(3/2\right)^{\{\log_3(2t)^{-1}\}}\right)
\end{equation}
for all $t\in(0,1/2)$. Here, for $x\in\eR$, $\{x\}:=x-\lfloor x\rfloor\in[0,1)$\label{fractional} denotes the fractional part of $x$, where $\lfloor x\rfloor$ is the integer part or the `floor' of $x$.
Then the condition (\ref{G}) in Theorem~\ref{nonmeasurable} is satisfied for $N=1$, $D=\log_32$, $\alpha=D$, and
\begin{equation}\label{cantorG}
G(\tau):=2^{1-D}\Big(2^{-\left\{\frac{\tau-\log2}{\log3}\right\}}+
(3/2)^{\left\{\frac{\tau-\log2}{\log3}\right\}}\Big).
\end{equation}
It is easy to see that the function $G$ is periodic, with minimal period $T=\log3$, and is continuous. However, it is not of class $C^1$
since it is nondifferentiable at the points $\tau_k=\log2+Tk$, $k\in\Ze$; see \cite[Fig.\ 1.5]{lapidusfrank12}. It is therefore
 convenient to consider the restriction $G|_I$ of $G$
to the interval $I=[\log2,\log2+T]$, since the value of $\M^{*D}(A)$ is achieved at the endpoints of~$I$, and $G|_I$ is convex. 
An easy geometric analysis shows that $\M^{*D}(A)=2^{2-D}\approx2.583$, while the minimum value of $G$ is $\M_*^D(A)=2^{1-D}D^D(1-D)^{-(1-D)}\approx 2.495$, achieved at the minimum of $G|_I$, which is easy to compute. (See also \cite[Thm.\ 4.6]{lapiduspom} or \cite[\S1.1.2]{lapidusfrank12})
According to Theorem~\ref{nonmeasurable}, $\tilde\zeta_A$ has for abscissa of convergence $D(\tilde\zeta_A)=\log_32$; further, it can be meromorphically extended to the open right half-plane $\{\re s>\alpha\}$ for any $\alpha>0$, and hence, to the entire complex plane.
Also, 
\begin{equation}
\mathcal P(\tilde\zeta_A)=\Big\{s_k=D+\frac{2\pi}{\log3}k\I :k\in\Ze\Big\}\cup\{0\}=\Big(D+\frac{2\pi}{\log 3}\I\Ze\Big)\cup\{0\},
\end{equation}
(each being simple), and hence, the set of complex dimensions of the Cantor string (see Remark \ref{s=0} below) is given by
\begin{equation}
\dim_{PC}A=\Big\{s_k=D+\frac{2\pi}{\log3}k\I :k\in\Ze\Big\}=D+\frac{2\pi}{\log 3}\I\Ze,
\end{equation}
in agreement with \cite[Eq.\ (1.30)]{lapidusfrank12}. 
Computing the Fourier transform of $G_0$ directly, with its support shifted to $[\log2,\log2+T]$ (since, on this interval,
formula (\ref{cantorG}) holds without curly brackets), we obtain that for each pole $s_k\in\mathcal P(\tilde\zeta_A)$:
\begin{equation}\nonumber
\begin{aligned}
\res(\tilde\zeta_A,s_k)&=\frac1T\hat G_0\Big(\frac kT\Big)=\frac1T\int_{\log2}^{\log6}\E^{-2\pi i\frac kT\tau}G(\tau)\,\D\tau\\
&=\frac{2^{1-D}}T\Big(\frac{2^D}{s_k}(2^{-s_k}-6^{-s_k})+\frac{(1.5)^{-D}}{1-s_k}(6^{1-s_k}-2^{1-s_k})\Big)=\frac{2^{-s_k}}{Ts_k(1-s_k)},
\end{aligned}
\end{equation}
because $3^{s_k}=2$.
Since $|2^{-s_k}|=2^D$, we conclude that $\res(\tilde\zeta_A,s_k)\asymp k^{-2}$ as $k\to\ty$, which agrees with the limit in (\ref{riemann_lebesgue}).

The residues of $\zeta_A$ and of $\zeta_{\mathcal{L}}$ are obtained by using (\ref{equ_tilde}) (with $N=1$) and (\ref{cantor_string}), respectively:
$$
\res(\zeta_A,s_k)=(1-s_k)\res(\tilde\zeta_A,s_k)=\frac{2^{-s_k}}{Ts_k},\ \res(\zeta_{\mathcal L},s_k)=s_k2^{s_k-1}\res(\zeta_A,s_k)=\frac 1{2T}.
$$
For the Cantor string $\mathcal L=(l_j)_{j\ge1}$ corresponding to the set $A$, each $l_j=3^{-j}$ has multiplicity $2^{j-1}$, for each $j\ge1$ (where $(l_j)_{j\ge1}$ denotes the sequence of {\em distinct} lengths of $\mathcal L$). Therefore, as is well known,
\begin{equation}\label{4.351/2}
\zeta_{\mathcal L}(s)=\sum_{j=1}^\ty2^{j-1}3^{-js}=\frac{3^{-s}}{1-2\cdot 3^{-s}}.
\end{equation} 
 Hence, $\zeta_{\mathcal L}$ has a meromorphic extension to all of $\Ce$, given by the last expression in (\ref{4.351/2}). It now follows from (\ref{cantor_string}) and (\ref{equ_tilde}) that both the distance and the tube zeta functions of $A$ possess a meromorphic extension to the entire complex plane, with one additional simple pole at~$s=0$.
\end{example}

\begin{remark}\label{s=0}
Throughout this section (i.e., \S\ref{merom_ext}),
we use the original notion of complex dimension of a bounded fractal string, namely,
a pole of the meromorphic continuation of the geometric zeta function (see [Lap-vFr1--2]).
If, instead, we used the notion of complex dimension from the general theory developed in this paper
(and in [LapRa\v Zu1--8]), that is, a pole of the associated fractal zeta function (distance or tube zeta function), then we should add $s=0$ as a simple complex dimension throughout \S\ref{merom_ext} and \S\ref{hyperfractals} when discussing the set of complex dimensions of a fractal string. The principal complex dimensions, however, which are the main focus in the present discussion, would remain unchanged.
\end{remark}

\begin{example}\label{4.24}
The asymptotics of the tube function $t\mapsto|A_t|$, as given in (\ref{G}), arise naturally in the study of self-similar lattice strings; see \cite[\S8.4.4, esp.,\ Eq.\ (8.44)]{lapidusfrank12}.
In light of the results of \cite[Ch.\ 3 and \S8.4]{lapidusfrank12}, an analogous comment can be made about self-similar sprays or tilings in $\eR^N$, with $N\ge1$ arbitrary; see [LapPeWi].
Furthermore, it is expected to also hold for lattice self-similar sets in $\eR^N$ satisfying the open set condition.
\end{example}

Assuming that $D<N$ and using the functional equation \eqref{equ_tilde} connecting $\zeta_A$ and $\tilde\zeta_A$, Theorems \ref{measurable} and \ref{nonmeasurable} can be easily restated in terms of the distance zeta functions, thereby providing a significant extension of Theorem \ref{pole1}.

\begin{theorem}[Distance zeta functions of bounded sets: Minkowski measurable and nonmeasurable cases]\label{meas_nonmeas}%
Let $A$ be a bounded subset of $\eR^N$, with $N\geq 1$.
In the Minkowski measurable case, we assume that hypothesis~\eqref{A_t} of Theorem~\ref{measurable} holds, 
while in the Minkowski nonmeasurable case, we assume that hypothesis~\eqref{G} of Theorem~\ref{nonmeasurable} holds. Furthermore, let $D\geq 0$ be the real number occurring in~\eqref{A_t} or in~\eqref{G}, respectively.
Then, if $D<N$, the conclusions of Theorem~\ref{measurable} 
$($resp., of Theorem~\ref{nonmeasurable}$)$, concerning the tube zeta function $(\tilde{\zeta}_A=\tilde{\zeta}_A(\,\cdot\,,\delta)$ for any fixed $\delta>0)$, also hold for the 
distance zeta function $\zeta_A$,
except for the values of the residues. In the case of the counterpart of Theorem~\ref{measurable}, these values are given by
\begin{equation}\label{Mres}
\res(\zeta_A,D)=(N-D)\M^D(A)
\end{equation}
and, more generally, in the case of the counterpart 
of Theorem~\ref{nonmeasurable}, by
\begin{equation}\label{NMres}
\res(\zeta_A,s_k)=\frac{N-s_k}T\hat G_0\Big(\frac kT\Big),
\end{equation}
for each pole $s_k\in\mathcal {P}(\tilde\zeta_A)$, with $k\in\Ze$; see Eq.\ $(\ref{Dpoles})$.
Furthermore,
\begin{equation}\label{NMres_asymptotics}
|\res(\zeta_A,s_k)|= o(|k|)\q \mathrm{as}\q k\to\pm\ty. 
\end{equation}

Moreover,
\begin{equation}\label{avarage1}
\res(\zeta_A,D)=(N-D)\frac1T\int_0^T G(\tau)\,\D\tau
\end{equation}
and
\begin{equation}\label{NMres_ineq}
(N-D)\M_*^D(A)<\res(\zeta_A,D)<(N-D)\M^{*D}(A).
\end{equation}
\end{theorem}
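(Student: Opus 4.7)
The strategy is to leverage the functional equation \eqref{equ_tilde},
$$\zeta_A(s) = \delta^{s-N}|A_\delta| + (N-s)\,\tilde\zeta_A(s),$$
which is the precise bridge between the tube and distance zeta functions. The plan is to transfer Theorems \ref{measurable} and \ref{nonmeasurable} across this bridge, using in an essential way that $D<N$ to control both the entire prefactor and the multiplier $(N-s)$.

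First, I would observe that for fixed $\delta>0$, the function $s\mapsto \delta^{s-N}|A_\delta|$ is entire, and $(N-s)$ is entire as well. Since Theorems \ref{measurable} and \ref{nonmeasurable} each produce a meromorphic extension of $\tilde\zeta_A$ to the half-plane $\{\re s>D-\alpha\}$, the right-hand side of the functional equation furnishes the desired meromorphic extension of $\zeta_A$ to that same half-plane, establishing $D_{\rm mer}(\zeta_A)\le D-\alpha$. Because $\re s\le D<N$ throughout $\{\re s\le D\}$, the factor $(N-s)$ has no zeros in $\{\re s>D-\alpha\}$ for $\alpha$ small (and, more importantly, is nonzero at every candidate pole of $\tilde\zeta_A$ on the critical line $\{\re s=D\}$); hence the poles of $\zeta_A$ in $\{\re s>D-\alpha\}$ coincide in location and multiplicity with those of $\tilde\zeta_A$. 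This already gives $\mathcal P(\zeta_A)=\mathcal P(\tilde\zeta_A)$ in the relevant half-plane, and the principal complex dimensions agree with those determined in Theorems \ref{measurable} and \ref{nonmeasurable}.

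Next, for the residues at simple poles $s_k$ of $\tilde\zeta_A$ with $\re s_k\le D<N$, the functional equation yields
$$\res(\zeta_A,s_k) = (N-s_k)\,\res(\tilde\zeta_A,s_k),$$
since the entire term contributes nothing. Substituting the two explicit formulas from Theorems \ref{measurable} and \ref{nonmeasurable}, namely $\res(\tilde\zeta_A,D)=\M$ in the measurable case and $\res(\tilde\zeta_A,s_k)=T^{-1}\hat G_0(k/T)$ in the nonmeasurable case, yields Eq.\ \eqref{Mres} and Eq.\ \eqref{NMres}, respectively. The formula \eqref{avarage1} follows by plugging $k=0$ (equivalently, $s_0=D$) into \eqref{NMres} together with the definition of $\hat G_0(0)=\int_0^T G(\tau)\,\D\tau$.

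Finally, for the growth estimate \eqref{NMres_asymptotics}, I would combine the Riemann--Lebesgue decay $\hat G_0(k/T)\to 0$ (already recorded in Eq.\ \eqref{riemann_lebesgue}) with the bound $|N-s_k|=|N-D-\tfrac{2\pi\I k}{T}|=O(|k|)$ as $|k|\to\infty$; the product is therefore $o(|k|)$. The strict two-sided inequality \eqref{NMres_ineq} is immediate from the corresponding inequality \eqref{res_inequalities} for $\tilde\zeta_A$ upon multiplying by the positive factor $(N-D)$. There is no genuine obstacle here: the entire argument reduces to bookkeeping through the functional equation \eqref{equ_tilde}, with the single technical point being that the hypothesis $D<N$ must be invoked to ensure that $(N-s_k)\ne 0$ at every pole of $\tilde\zeta_A$ on the critical line (so that poles and residues transfer faithfully) and that the prefactor $\delta^{s-N}|A_\delta|$ is holomorphic everywhere.
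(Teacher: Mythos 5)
Your proof is correct and takes essentially the same route the paper intends: transferring Theorems \ref{measurable} and \ref{nonmeasurable} to $\zeta_A$ via the functional equation \eqref{equ_tilde}, using $D<N$ to ensure $N-s_k\neq 0$ at every pole on the critical line so that pole locations, multiplicities and residues carry over as in Remark \ref{1.3.121/2}. The only minor slip---asserting that $(N-s)$ has no zeros in $\{\re s>D-\alpha\}$, whereas it does vanish at $s=N$, which lies in that half-plane---is harmless, since $\tilde\zeta_A$ is holomorphic at $s=N$ and all relevant poles lie on the critical line, as your parenthetical correctly emphasizes.
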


\subsection{Generating principal complex dimensions of arbitrary order}\label{gen}

The next example shows how one can effectively construct bounded fractal strings having principal complex dimensions of any given order (i.e., multiplicity), and even an infinite number of essential singularities on the critical line.
In the latter case, we deduce that the corresponding geometric zeta function cannot be meromorphically extended to any domain of $\Ce$ containing the critical line.

\begin{example}[Cantor strings of higher order]\label{high_order}
We will provide an example of a bounded fractal string $\mathcal L$ in $\eR$ such that its geometric zeta function has an infinite sequence of principal complex dimensions of higher order and in arithmetic progression.
The construction is based on an `iterated' Cantor string.
Let $C$ be the standard middle-third Cantor set contained in $[0,1]$.
Then the open set $\O:=(0,1)\setminus C$ is a geometric realization of the Cantor string ${\mathcal L}={\mathcal L}_{CS}$; that is, the open connected components of $\O$ are intervals whose lengths (counting
multiplicities) are those listed in $\mathcal L$, namely, $1/3$, $1/9,1/9$, \dots, $1/3^n$ (with multiplicity $2^{n-1}$), for any integer $n\ge1$. The Cantor string ${\mathcal L}$ can be interpreted as a self-similar fractal spray with generator the interval $(0,1)$ and with two scaling ratios $r_1=r_2=1/3$.\footnote{See Definition \ref{ss_spray} below, where ${\mathcal L}_0$ is taken to be the trivial fractal string associated with the single interval $(0,1)$, for the definition of a self-similar spray.}

Now, let $\O=(0,1)\setminus C$ be our generator for the new self-similar fractal spray in $\eR$ with scaling ratios $r_1=r_2=1/3$.
We denote the corresponding fractal string by ${\mathcal L}_2$ (see Figure~\ref{second_cantor}), and its geometric realization by $\O_2$.
Then, by the scaling property of the geometric zeta function (see Eq.\  \eqref{zetalA} in Theorem \ref{an}$(d)$), we have
$$
\zeta_{{\mathcal L}_2}(s)=\zeta_{\mathcal L}(s)+2\,\zeta_{3^{-1}{\mathcal L}_2}(s)=\zeta_{\mathcal L}(s)+2\cdot 3^{-s}\,\zeta_{{\mathcal L}_2}(s).
$$
According to Eq.\  \eqref{4.351/2}, we conclude that
\begin{equation}
\zeta_{{\mathcal L}_2}(s)=\frac{3^s}{3^s-2}\zeta_{\mathcal L}(s)=\frac{3^s}{(3^s-2)^2}
\end{equation}
for $\re s>\log_32$.
Hence, $\zeta_{{\mathcal L}_2}$ can be meromorphically extended to all of $\Ce$ and
\begin{equation}\label{cantorn}
\po(\zeta_{{\mathcal L}_2})=\log_32+\frac{2\pi}{\log 3}\I\Ze.
\end{equation}
Furthermore, the poles $\omega_k:=\log_32+\frac{2\pi}{\log 3}k\I$ of $\zeta_{{\mathcal L}_2}$ (i.e., the complex dimensions of ${\mathcal L}_2$
) for $k\in\Ze$ are all of second order (i.e., of multiplicity two).
We deduce, in particular, that $\ov\dim_B{\mathcal L}_2=\log_32$.

%

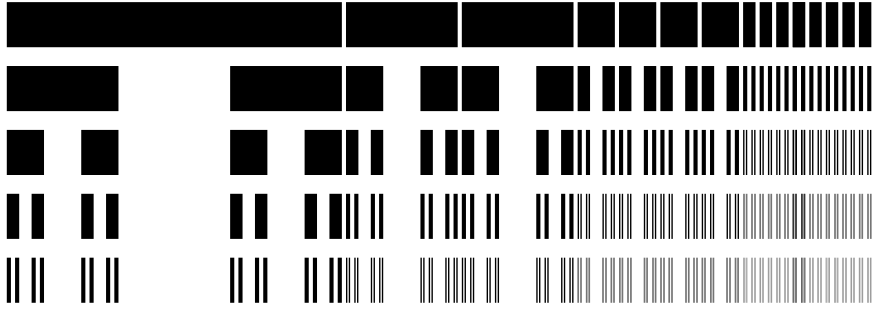
\begin{figure}[ht]
\begin{center}
\hspace{-2.5mm}
\begin{tikzpicture}[decoration=Cantor set,line width=6mm,xscale=0.055,yscale=1.7]
  \draw (0,0) -- (81,0);
  \draw decorate{ (0,-.5) -- (81,-.5) };
  \draw decorate{ decorate{ (0,-1) -- (81,-1) }};
  \draw decorate{ decorate{ decorate{ (0,-1.5) -- (81,-1.5) }}};
  \draw decorate{ decorate{ decorate{ decorate{ (0,-2) -- (81,-2) }}}};
    \draw (81+1,0) -- (108+1,0);
    \draw decorate{ (81+1,-.5) -- (81+27+1,-.5) };
  \draw decorate{ decorate{ (81+1,-1) -- (81+27+1,-1) }};
  \draw decorate{ decorate{ decorate{ (81+1,-1.5) -- (81+27+1,-1.5) }}};
  \draw decorate{ decorate{ decorate{ decorate{ (81+1,-2) -- (81+27+1,-2) }}}};
    \draw (81+27+2,0) -- (81+27+27+2,0);
    \draw decorate{ (81+27+2,-.5) -- (81+27+27+2,-.5) };
  \draw decorate{ decorate{ (81+27+2,-1) -- (81+27+27+2,-1) }};
  \draw decorate{ decorate{ decorate{ (81+27+2,-1.5) -- (81+27+27+2,-1.5) }}};
  \draw decorate{ decorate{ decorate{ decorate{ (81+27+2,-2) -- (81+27+27+2,-2) }}}};
    \draw (81+27+27+3,0) -- (81+27+27+9+3,0);
    \draw decorate{ (81+27+27+3,-.5) -- (81+27+27+9+3,-.5) };
  \draw decorate{ decorate{ (81+27+27+3,-1) -- (81+27+27+9+3,-1) }};
  \draw decorate{ decorate{ decorate{ (81+27+27+3,-1.5) -- (81+27+27+9+3,-1.5) }}};
  \draw decorate{ decorate{ decorate{ decorate{ (81+27+27+3,-2) -- (81+27+27+9+3,-2) }}}};
    \draw (81+27+27+9+4,0) -- (81+27+27+9+9+4,0);
    \draw decorate{ (81+27+27+9+4,-.5) -- (81+27+27+9+9+4,-.5) };
  \draw decorate{ decorate{ (81+27+27+9+4,-1) -- (81+27+27+9+9+4,-1) }};
  \draw decorate{ decorate{ decorate{ (81+27+27+9+4,-1.5) -- (81+27+27+9+9+4,-1.5) }}};
  \draw decorate{ decorate{ decorate{ decorate{ (81+27+27+9+4,-2) -- (81+27+27+9+9+4,-2) }}}};
    \draw (81+27+27+9+9+5,0) -- (81+27+27+9+9+9+5,0);
    \draw decorate{ (81+27+27+9+9+5,-.5) -- (81+27+27+9+9+9+5,-.5) };
  \draw decorate{ decorate{ (81+27+27+9+9+5,-1) -- (81+27+27+9+9+9+5,-1) }};
  \draw decorate{ decorate{ decorate{ (81+27+27+9+9+5,-1.5) -- (81+27+27+9+9+9+5,-1.5) }}};
  \draw decorate{ decorate{ decorate{ decorate{ (81+27+27+9+9+5,-2) -- (81+27+27+9+9+9+5,-2) }}}};
    \draw (81+27+27+9+9+9+6,0) -- (81+27+27+9+9+9+9+6,0);
    \draw decorate{ (81+27+27+9+9+9+6,-.5) -- (81+27+27+9+9+9+9+6,-.5) };
  \draw decorate{ decorate{ (81+27+27+9+9+9+6,-1) -- (81+27+27+9+9+9+9+6,-1) }};
  \draw decorate{ decorate{ decorate{ (81+27+27+9+9+9+6,-1.5) -- (81+27+27+9+9+9+9+6,-1.5) }}};
  \draw decorate{ decorate{ decorate{ decorate{ (81+27+27+9+9+9+6,-2) -- (81+27+27+9+9+9+9+6,-2) }}}};
    \draw (81+27+27+9+9+9+9+7,0) -- (81+27+27+9+9+9+9+3+7,0);
    \draw decorate{ (81+27+27+9+9+9+9+7,-.5) -- (81+27+27+9+9+9+9+3+7,-.5) };
  \draw decorate{ decorate{ (81+27+27+9+9+9+9+7,-1) -- (81+27+27+9+9+9+9+3+7,-1) }};
  \draw decorate{ decorate{ decorate{ (81+27+27+9+9+9+9+7,-1.5) -- (81+27+27+9+9+9+9+3+7,-1.5) }}};
  \draw decorate{ decorate{ decorate{ decorate{ (81+27+27+9+9+9+9+7,-2) -- (81+27+27+9+9+9+9+3+7,-2) }}}};
    \draw (81+27+27+9+9+9+9+3+8,0) -- (81+27+27+9+9+9+9+3+3+8,0);
    \draw decorate{ (81+27+27+9+9+9+9+3+8,-.5) -- (81+27+27+9+9+9+9+3+3+8,-.5) };
  \draw decorate{ decorate{ (81+27+27+9+9+9+9+3+8,-1) -- (81+27+27+9+9+9+9+3+3+8,-1) }};
  \draw decorate{ decorate{ decorate{ (81+27+27+9+9+9+9+3+8,-1.5) -- (81+27+27+9+9+9+9+3+3+8,-1.5) }}};
  \draw decorate{ decorate{ decorate{ decorate{ (81+27+27+9+9+9+9+3+8,-2) -- (81+27+27+9+9+9+9+3+3+8,-2) }}}};
    \draw (81+27+27+9+9+9+9+3+3+9,0) -- (81+27+27+9+9+9+9+3+3+3+9,0);
    \draw decorate{ (81+27+27+9+9+9+9+3+3+9,-.5) -- (81+27+27+9+9+9+9+3+3+3+9,-.5) };
  \draw decorate{ decorate{ (81+27+27+9+9+9+9+3+3+9,-1) -- (81+27+27+9+9+9+9+3+3+3+9,-1) }};
  \draw decorate{ decorate{ decorate{ (81+27+27+9+9+9+9+3+3+9,-1.5) -- (81+27+27+9+9+9+9+3+3+3+9,-1.5) }}};
  \draw decorate{ decorate{ decorate{ decorate{ (81+27+27+9+9+9+9+3+3+9,-2) -- (81+27+27+9+9+9+9+3+3+3+9,-2) }}}};
    \draw (81+27+27+9+9+9+9+3+3+3+10,0) -- (81+27+27+9+9+9+9+3+3+3+3+10,0);
    \draw decorate{ (81+27+27+9+9+9+9+3+3+3+10,-.5) -- (81+27+27+9+9+9+9+3+3+3+3+10,-.5) };
  \draw decorate{ decorate{ (81+27+27+9+9+9+9+3+3+3+10,-1) -- (81+27+27+9+9+9+9+3+3+3+3+10,-1) }};
  \draw decorate{ decorate{ decorate{ (81+27+27+9+9+9+9+3+3+3+10,-1.5) -- (81+27+27+9+9+9+9+3+3+3+3+10,-1.5) }}};
  \draw decorate{ decorate{ decorate{ decorate{ (81+27+27+9+9+9+9+3+3+3+10,-2) -- (81+27+27+9+9+9+9+3+3+3+3+10,-2) }}}};
    \draw (81+27+27+9+9+9+9+3+3+3+10,0) -- (81+27+27+9+9+9+9+3+3+3+3+10,0);
    \draw decorate{ (81+27+27+9+9+9+9+3+3+3+10,-.5) -- (81+27+27+9+9+9+9+3+3+3+3+10,-.5) };
  \draw decorate{ decorate{ (81+27+27+9+9+9+9+3+3+3+10,-1) -- (81+27+27+9+9+9+9+3+3+3+3+10,-1) }};
  \draw decorate{ decorate{ decorate{ (81+27+27+9+9+9+9+3+3+3+10,-1.5) -- (81+27+27+9+9+9+9+3+3+3+3+10,-1.5) }}};
  \draw decorate{ decorate{ decorate{ decorate{ (81+27+27+9+9+9+9+3+3+3+10,-2) -- (81+27+27+9+9+9+9+3+3+3+3+10,-2) }}}};
    \draw (81+27+27+9+9+9+9+3+3+3+3+11,0) -- (81+27+27+9+9+9+9+3+3+3+3+3+11,0);
    \draw decorate{ (81+27+27+9+9+9+9+3+3+3+3+11,-.5) -- (81+27+27+9+9+9+9+3+3+3+3+3+11,-.5) };
  \draw decorate{ decorate{ (81+27+27+9+9+9+9+3+3+3+3+11,-1) -- (81+27+27+9+9+9+9+3+3+3+3+3+11,-1) }};
  \draw decorate{ decorate{ decorate{ (81+27+27+9+9+9+9+3+3+3+3+11,-1.5) -- (81+27+27+9+9+9+9+3+3+3+3+3+11,-1.5) }}};
  \draw decorate{ decorate{ decorate{ decorate{ (81+27+27+9+9+9+9+3+3+3+3+11,-2) -- (81+27+27+9+9+9+9+3+3+3+3+3+11,-2) }}}};
    \draw (81+27+27+9+9+9+9+3+3+3+3+3+12,0) -- (81+27+27+9+9+9+9+3+3+3+3+3+3+12,0);
    \draw decorate{ (81+27+27+9+9+9+9+3+3+3+3+3+12,-.5) -- (81+27+27+9+9+9+9+3+3+3+3+3+3+12,-.5) };
  \draw decorate{ decorate{ (81+27+27+9+9+9+9+3+3+3+3+3+12,-1) -- (81+27+27+9+9+9+9+3+3+3+3+3+3+12,-1) }};
  \draw decorate{ decorate{ decorate{ (81+27+27+9+9+9+9+3+3+3+3+3+12,-1.5) -- (81+27+27+9+9+9+9+3+3+3+3+3+3+12,-1.5) }}};
  \draw decorate{ decorate{ decorate{ decorate{ (81+27+27+9+9+9+9+3+3+3+3+3+12,-2) -- (81+27+27+9+9+9+9+3+3+3+3+3+3+12,-2) }}}};
    \draw (81+27+27+9+9+9+9+3+3+3+3+3+3+13,0) -- (81+27+27+9+9+9+9+3+3+3+3+3+3+3+13,0);
    \draw decorate{ (81+27+27+9+9+9+9+3+3+3+3+3+3+13,-.5) -- (81+27+27+9+9+9+9+3+3+3+3+3+3+3+13,-.5) };
  \draw decorate{ decorate{ (81+27+27+9+9+9+9+3+3+3+3+3+3+13,-1) -- (81+27+27+9+9+9+9+3+3+3+3+3+3+3+13,-1) }};
  \draw decorate{ decorate{ decorate{ (81+27+27+9+9+9+9+3+3+3+3+3+3+13,-1.5) -- (81+27+27+9+9+9+9+3+3+3+3+3+3+3+13,-1.5) }}};
  \draw decorate{ decorate{ decorate{ decorate{ (81+27+27+9+9+9+9+3+3+3+3+3+3+13,-2) -- (81+27+27+9+9+9+9+3+3+3+3+3+3+3+13,-2) }}}};
    \draw (81+27+27+9+9+9+9+3+3+3+3+3+3+3+14,0) -- (81+27+27+9+9+9+9+3+3+3+3+3+3+3+3+14,0);
    \draw decorate{ (81+27+27+9+9+9+9+3+3+3+3+3+3+3+14,-.5) -- (81+27+27+9+9+9+9+3+3+3+3+3+3+3+3+14,-.5) };
  \draw decorate{ decorate{ (81+27+27+9+9+9+9+3+3+3+3+3+3+3+14,-1) -- (81+27+27+9+9+9+9+3+3+3+3+3+3+3+3+14,-1) }};
  \draw decorate{ decorate{ decorate{ (81+27+27+9+9+9+9+3+3+3+3+3+3+3+14,-1.5) -- (81+27+27+9+9+9+9+3+3+3+3+3+3+3+3+14,-1.5) }}};
  \draw decorate{ decorate{ decorate{ decorate{ (81+27+27+9+9+9+9+3+3+3+3+3+3+3+14,-2) -- (81+27+27+9+9+9+9+3+3+3+3+3+3+3+3+14,-2) }}}};
\end{tikzpicture}
\end{center}
\caption{The {\em second order Cantor set} from Example~\ref{high_order}. Only the first four iterations are shown here. More precisely, from left to right, we have the middle-third Cantor set $C$ in $[0,1]$, then two copies of $C$ scaled by $1/3$, and then four copies of $C$ scaled by $1/9$, and so on, ad infinitum.}\label{second_cantor}
\end{figure}

We can now repeat the above process inductively; that is, for any integer $n\ge2$, we define the $n$-{\em th order Cantor string} ${\mathcal L}_n$ via a self-similar fractal spray (or ``extended self-similar fractal string'', in the terminology of Definition \ref{ss_spray} below) generated by ${\mathcal L}_{n-1}$ and with the same scaling ratios $r_1=r_2=1/3$. (Note that by construction, we have ${\mathcal L}_1:=\mathcal L$, the Cantor string.) We denote the resulting geometric realization of the fractal string ${\mathcal L}_n$ by $\O_n$.
Similarly as before, we have that  $\zeta_{{\mathcal L}_n}$ has a meromophic extension to all of $\Ce$ satisfying
$$
\zeta_{{\mathcal L}_n}(s)=\frac{3^{(n-1)s}}{(3^s-2)^n}.
$$
The poles $\omega_k$ of $\zeta_{{\mathcal L}_n}$ (that is, the complex dimensions of 
${\mathcal L}_n$
) are of order $n$ and $D:=\ov\dim_B{\mathcal L}_n=\log_32$.
More specifically, viewed as a set, $\po(\zeta_{{\mathcal L}_n})$ is given by Eq.\  \eqref{cantorn}, but each complex dimension $\o_k=\log_32+\frac{2\pi}{\log}k\I$ (for $k\in\Ze$) is now of multiplicity~$n$. 



Finally, we can now use the $n$-th order Cantor strings $\mathcal{L}_n$ in order to construct an infinite order Cantor string ${\mathcal L}_\ty$ which will have infinitely many essential singularities on the critical line.
First of all, we let ${\mathcal L}_1:=\mathcal{L}$ be the Cantor string, as before, then scale down every fractal string ${\mathcal L}_n$ by the factor $3^{-n}/n!$, for each $n\ge1$, and we define ${\mathcal L}_\ty$ as the disjoint union of the resulting fractal strings (viewed as multisets; that is, taking multiplicities into account; see the end of \S\ref{notation}):
$$
{\mathcal L}_\ty:=\bigsqcup_{n=1}^{\ty} \frac{3^{-n}}{n!}{\mathcal L}_n.
$$
We then have
\begin{equation}\label{ess_zeta}
\begin{aligned}
\zeta_{{\mathcal L}_\ty}(s)&=\sum_{n=1}^{\ty}\zeta_{3^{-n}(n!)^{-1}{\mathcal L}_n}(s)=\sum_{n=1}^{\ty}\frac{3^{-ns}}{(n!)^s}\zeta_{{\mathcal L}_n}(s)\\
&=\frac{1}{3^s}\sum_{n=1}^{\ty}\frac{1}{(n!)^s(3^s-2)^n}.
\end{aligned}
\end{equation}
By the Weierstrass $M$-test, $\zeta_{{\mathcal L}_\ty}$ is holomorphic on the connected open set $\{\re s>0\}\setminus\big(\log_32+\frac{2\pi}{\log 3}\I\Ze\big)$.
Furthermore, it has an essential singularity at each point of the discrete set $\log_32+\frac{2\pi}{\log 3}\I\Ze$. In particular, we have that 
\begin{equation}
D_{\rm mer}(\zeta_{{\mathcal L}_\ty})=D_{\rm hol}(\zeta_{{\mathcal L}_\ty})=D(\zeta_{{\mathcal L}_\ty})=\log_32.
\end{equation}
The critical line $\{\re s=\log_32\}$ is clearly not a natural boundary for $\zeta_{{\mathcal L}_\ty}$ since $\zeta_{{\mathcal L}_\ty}$ (given by $\zeta_{{\mathcal L}_\ty}(s):=3^{-s}\sum_{n=1}^{\ty}(n!)^{-s}(3^s-2)^{-n}$; see \eqref{ess_zeta}) can be holomorphically continued to the connected open set $\{\re s>0\}\setminus\dim_{PC}{\mathcal L}_\ty$.

\end{example}

Each $n$-th order Cantor string ${\mathcal L}_n$ can be identified with the set $A_{{\mathcal L}_n}$ described in Eq.\  \eqref{AL} of Example~\ref{L}, which we call the $n$-{\em th order Cantor set}.

The above construction can be generalized verbatim to any (nontrivial) bounded fractal string $\mathcal{L}$, instead of the initial Cantor string ${\mathcal L}_1={\mathcal L}_{CS}$.
This suggests that the definition of complex dimensions should be extended to also include essential singularities of geometric zeta functions (or, more generally, of fractal zeta functions), in the spirit of \cite[\S12.1 and \S13.4.3]{lapidusfrank12} and \cite{fzf}). 

Let us now recall the definition of a self-similar fractal string (see 
\cite[\S2.1]{lapidusfrank12}).
In fact, we introduce a somewhat more general notion.

\begin{defn}\label{ss_spray} Let ${\mathcal L}_0$ be a bounded fractal string and $\{r_1,\ldots,r_J\}$ be a multiset of positive numbers (``ratio list'' or scaling ratios) such that 
$
\sum_{j=1}^Jr_j<1
$.
An {\em extended self-similar fractal string} $\ov{\mathcal L}=\ov{\mathcal L}({\mathcal L}_0;r_1,\dots,r_J)$, generated by ${\mathcal L}_0$ and $\{r_1,r_2,\ldots,r_J\}$, is the bounded fractal string defined by
\begin{equation}\label{ovL}
\ov{\mathcal L}:=\bigsqcup_{\a\in(\eN_0)^J}(r_1^{\a_1}\dots r_J^{\a_J})\mathcal L_0,
\end{equation}
where $\a:=(\a_1,\dots,\a_J)$ and the notation $\sqcup$ is described towards the end of \S\ref{notation}. Therefore, $\ov{\mathcal L}$ can be written as the following tensor product of fractal strings:
\begin{equation}
\ov{\mathcal L}={\mathcal L}_0\otimes{\mathcal L}(r_1,\dots,r_J),
\end{equation}
where the tensor product is defined at the end of \S\ref{notation} and the fractal string $\mathcal L(r_1,\dots,r_J)$ is defined by
\begin{equation}
\mathcal L(r_1,\dots,r_J):=\{r_1^{\a_1}\dots r_J^{\a_J}:\a\in(\eN_0)^J\},
\end{equation}
viewed as a multiset (i.e., taking multiplicities into account).
\end{defn}

The proof of the following simple lemma is omitted.

\begin{lemma}\label{tensorl} Let $\mathcal L_1$ and $\mathcal L_2$ be two bounded fractal strings. Then, the geometric zeta function of their tensor product is given by
\begin{equation}\label{tensorl=}
\zeta_{{\mathcal L_1}\otimes{\mathcal L_2}}(s)=\zeta_{\mathcal L_1}(s)\cdot\zeta_{\mathcal L_2}(s)
\end{equation}
for $\re s>\max\{\ov\dim_B {\mathcal L}_1,\ov\dim_B {\mathcal L}_2\}$. Furthermore,
\begin{equation}\label{tensorld}
\begin{aligned}
\ov\dim_B({\mathcal L}_1\otimes{\mathcal L}_2)&=\max\{\ov\dim_B {\mathcal L}_1,\ov\dim_B {\mathcal L}_2\};\ \ i.e.,\\
D({\mathcal L}_1\otimes{\mathcal L}_2)&=\max\{D({\mathcal L}_1),D({\mathcal L}_2)\}.
\end{aligned}
\end{equation}
\end{lemma}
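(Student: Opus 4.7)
The plan is to reduce both claims to the observation that, for a nontrivial bounded fractal string $\mathcal{L}$, the abscissa of absolute convergence of the Dirichlet series $\zeta_\mathcal{L}(s) = \sum_j \ell_j^s$ equals the upper box dimension $\overline{\dim}_B\mathcal{L}$, as recalled in \S\ref{properties_zeta}. Throughout, I will write $\mathcal{L}_1 = (\ell_j^{(1)})_{j\ge 1}$ and $\mathcal{L}_2 = (\ell_k^{(2)})_{k\ge 1}$ so that, by the definition of the tensor product recalled in \S\ref{notation}, the bounded fractal string $\mathcal{L}_1\otimes\mathcal{L}_2$ consists precisely of the numbers $\ell_j^{(1)}\ell_k^{(2)}$ with multiplicities taken into account.

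First, I will prove the factorization identity \eqref{tensorl=}. Let $\sigma_0 := \max\{\overline{\dim}_B\mathcal{L}_1,\overline{\dim}_B\mathcal{L}_2\}$ and fix any $s\in\Ce$ with $\re s > \sigma_0$. Then both $\zeta_{\mathcal{L}_1}(s)$ and $\zeta_{\mathcal{L}_2}(s)$ converge absolutely, and the double series
\begin{equation*}
\sum_{j,k\ge 1}\bigl|(\ell_j^{(1)}\ell_k^{(2)})^s\bigr| = \Bigl(\sum_j(\ell_j^{(1)})^{\re s}\Bigr)\Bigl(\sum_k(\ell_k^{(2)})^{\re s}\Bigr)
\end{equation*}
is finite by Tonelli's theorem. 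Fubini's theorem then licenses the interchange of summations, yielding $\zeta_{\mathcal{L}_1\otimes\mathcal{L}_2}(s) = \zeta_{\mathcal{L}_1}(s)\cdot\zeta_{\mathcal{L}_2}(s)$ on $\{\re s > \sigma_0\}$.

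Next, I will establish \eqref{tensorld} by showing both inequalities for the abscissae of convergence $D(\mathcal{L}_i) = \overline{\dim}_B\mathcal{L}_i$. The inequality $D(\mathcal{L}_1\otimes\mathcal{L}_2)\le \max\{D(\mathcal{L}_1),D(\mathcal{L}_2)\}$ is immediate from the absolute convergence of the double series on $\{\re s > \sigma_0\}$ established above. For the reverse inequality, I will pick any fixed element $\ell_0\in\mathcal{L}_1$; since every product $\ell_0\ell_k^{(2)}$ appears (with at least its original multiplicity) in $\mathcal{L}_1\otimes\mathcal{L}_2$, for any real $s > 0$ we have the termwise bound
\begin{equation*}
\ell_0^{\,s}\sum_{k\ge 1}(\ell_k^{(2)})^s \le \sum_{j,k\ge 1}(\ell_j^{(1)}\ell_k^{(2)})^s.
\end{equation*}
Convergence of the right-hand side therefore forces convergence of $\zeta_{\mathcal{L}_2}(s)$, giving $D(\mathcal{L}_2)\le D(\mathcal{L}_1\otimes\mathcal{L}_2)$. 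Swapping the roles of $\mathcal{L}_1$ and $\mathcal{L}_2$ yields $D(\mathcal{L}_1)\le D(\mathcal{L}_1\otimes\mathcal{L}_2)$, and combining the two bounds completes the proof.

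There is no substantial obstacle here; the only subtlety is the mild bookkeeping required by the multiset convention (making sure that when one forms $\ell_0\mathcal{L}_2$ inside $\mathcal{L}_1\otimes\mathcal{L}_2$, one does not overcount or undercount multiplicities), and the observation that since $\mathcal{L}_1$ and $\mathcal{L}_2$ are nontrivial bounded fractal strings, all their terms are strictly positive, so that picking such a fiber $\ell_0\mathcal{L}_2$ is legitimate. If one (or both) of the strings is trivial, i.e., finite, its zeta function is entire, its box dimension equals $0$, and the lemma reduces to a straightforward polynomial-in-$\zeta$ identity, which is easily checked directly.
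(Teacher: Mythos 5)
Your proof is correct. Note that the paper actually omits the proof of this lemma (calling it ``simple''), so there is nothing to diverge from: your argument---Tonelli/Fubini for the product of two absolutely convergent Dirichlet series on $\{\re s>\max\{\overline{\dim}_B\mathcal{L}_1,\overline{\dim}_B\mathcal{L}_2\}\}$, followed by the identification of $\overline{\dim}_B$ with the abscissa of convergence and the fixed-fiber sub-sum bound for the reverse inequality---is exactly the standard argument intended, and your remark that the identity $D(\zeta_{\mathcal{L}})=\overline{\dim}_B\mathcal{L}$ is only invoked for nontrivial strings (with the trivial case checked separately) closes the one small point of care.
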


\medskip

The following theorem extends \cite[Thm.\ 2.3]{lapidusfrank12} to the present more general context of extended self-similar strings.
\medskip

\begin{theorem}\label{tmab}
Let the assumptions of Definition \ref{ss_spray} be satisfied and let $D\in(0,1)$ be the $($necessarily unique$)$ {\rm real} solution of the Moran equation $\sum_{j=1}^Jr_j^s=1$. 


Then, the extended self-similar fractal string $\ov{\mathcal L}:={\mathcal L}_0\otimes{\mathcal L}(r_1,\dots,r_J)$ is bounded and has total length given by
$
|\ov{\mathcal L}|_1={|{\mathcal L}_0|_1}/{(1-\sum_{j=1}^Jr_j)}.
$
Further, its geometric zeta function has for abscissa of meromorphic continuation $D_{\rm mer}(\zeta_{\ov{\mathcal L}})=D_{\rm mer}(\zeta_{{\mathcal L}_0})$ and its meromorphic extension to all of $\Ce$ is given by
\begin{equation}\label{zovL}
\zeta_{\ov{\mathcal L}}(s)=\frac{\zeta_{{\mathcal L}_0}(s)}{1-\sum_{j=1}^J r_j^s}.
\end{equation}
Moreover, its abscissa of convergence is given by
$
D(\zeta_{\ov{\mathcal L}})=\max\{D(\zeta_{{\mathcal L}_0}),D\}.
$

Finally, for a given domain $U$ containing the critical line $\{\re s=D(\zeta_{\mathcal{L}_0})\}$ of $\zeta_{{\mathcal L}_0}$, the visible complex dimensions in $U$ of  $\ov{\mathcal L}$ satisfy
\begin{equation}\label{ss_spray_po}
\po(\zeta_{\ov{\mathcal L}},U)\subseteq\mathfrak{D}\cup\po(\zeta_{{\mathcal L}_0},U),
\end{equation}
where $\mathfrak{D}$ is the set of complex solutions in $U$ of the Moran equation $\sum_{j=1}^{J}r_j^s=1$.
Furthermore, if there are no zero-pole cancellations in \eqref{zovL},\footnote{This happens, for example, if ${\mathcal L}_0$ is the trivial fractal string ${\mathcal L}_0:=\{1\}$, so that $\zeta_{{\mathcal L}_0}(s)\equiv 1$.} then we have an equality in~ \eqref{ss_spray_po}.
\end{theorem}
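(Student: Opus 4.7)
The plan is to leverage the factorization $\overline{\mathcal{L}} = \mathcal{L}_0 \otimes \mathcal{L}(r_1, \ldots, r_J)$ from Definition \ref{ss_spray} together with Lemma \ref{tensorl}, reducing everything to a computation of the zeta function of the ``pure'' self-similar factor $\mathcal{L}(r_1,\ldots,r_J)$. First I would interpret the defining multiset of $\mathcal{L}(r_1,\ldots,r_J)$ as indexed by the free monoid of finite words over $\{1,\ldots,J\}$ (so that a multi-index $\alpha \in (\eN_0)^J$ carries its multinomial multiplicity), which gives
\[
|\mathcal{L}(r_1,\ldots,r_J)|_1 = \sum_{n=0}^{\ty}\Big(\sum_{j=1}^{J}r_j\Big)^n = \frac{1}{1-\sum_{j=1}^{J}r_j}
\]
and, by the same geometric-series calculation with $r_j$ replaced by $r_j^s$,
\[
\zeta_{\mathcal{L}(r_1,\ldots,r_J)}(s) = \sum_{n=0}^{\ty}\Big(\sum_{j=1}^{J}r_j^s\Big)^n = \frac{1}{1-\sum_{j=1}^{J}r_j^s},
\]
valid on $\{\re s > D\}$ since $s \mapsto \sum_j r_j^s$ is strictly decreasing for real $s$ and $D$ is the unique real solution of the Moran equation. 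In particular, $D(\zeta_{\mathcal{L}(r_1,\ldots,r_J)}) = D$, while the closed-form expression exhibits $\zeta_{\mathcal{L}(r_1,\ldots,r_J)}$ as a meromorphic function on all of $\Ce$.

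Having done this, the total length formula for $\overline{\mathcal{L}}$ (and its boundedness) follows immediately from the multiplicativity $|\mathcal{L}_1 \otimes \mathcal{L}_2|_1 = |\mathcal{L}_1|_1 \cdot |\mathcal{L}_2|_1$ recorded in \S\ref{notation}, while Lemma \ref{tensorl} yields both the product formula for $\zeta_{\overline{\mathcal{L}}}$ in terms of $\zeta_{\mathcal{L}_0}$ and the identity $D(\zeta_{\overline{\mathcal{L}}}) = \max\{D(\zeta_{\mathcal{L}_0}), D\}$ via Eq.\ \eqref{tensorld}. Substituting the closed form for $\zeta_{\mathcal{L}(r_1,\ldots,r_J)}$ produces exactly \eqref{zovL}. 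For the abscissa of meromorphic continuation, I would argue that since $s \mapsto 1-\sum_j r_j^s$ is entire and not identically zero, the reciprocal $1/(1-\sum_j r_j^s)$ is meromorphic on all of $\Ce$; hence multiplying $\zeta_{\mathcal{L}_0}$ by this factor neither shrinks nor enlarges the half-plane of meromorphicity, forcing $D_{\rm mer}(\zeta_{\overline{\mathcal{L}}}) = D_{\rm mer}(\zeta_{\mathcal{L}_0})$.

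Finally, for the visible complex dimensions in a domain $U$ containing the critical line, the representation \eqref{zovL} makes the inclusion $\po(\zeta_{\overline{\mathcal{L}}},U) \subseteq \mathfrak{D} \cup \po(\zeta_{\mathcal{L}_0},U)$ immediate: any pole of the quotient must arise either from a pole of the numerator $\zeta_{\mathcal{L}_0}$ or from a zero of the denominator (i.e., an element of $\mathfrak{D}$). The converse inclusion---which gives equality in the absence of zero-pole cancellations---requires checking that each $\omega \in \mathfrak{D}$ remains a genuine pole, i.e.\ that $\zeta_{\mathcal{L}_0}(\omega)\neq 0$, and that each pole of $\zeta_{\mathcal{L}_0}$ in $U$ is not cancelled by a zero of $1-\sum_j r_j^\omega$. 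In the trivial case ${\mathcal L}_0 = \{1\}$, $\zeta_{{\mathcal L}_0}\equiv 1$ has no zeros and the question is vacuous.

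The hard part, as I see it, is not analytic but combinatorial: the closed form $1/(1-\sum_j r_j^s)$ demands that $\mathcal{L}(r_1,\ldots,r_J)$ be counted with the multinomial multiplicities of the iterated self-similar construction rather than as a union indexed by bare multi-indices $\alpha \in (\eN_0)^J$, and this convention must be carefully reconciled with the formal definition in \eqref{ovL}. Once the multiplicity reading is fixed, the entire argument collapses to the geometric-series calculation above plus Lemma \ref{tensorl}, and the zero-pole analysis in the last statement is a routine check.
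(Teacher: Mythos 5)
Your proposal is correct and takes a genuinely different route from the paper's own proof. The paper's argument is a \emph{scaling/functional-equation} argument: it observes the self-referential decomposition $\ov{\mathcal L}={\mathcal L}_0\sqcup\bigsqcup_{j=1}^J r_j\ov{\mathcal L}$, applies the scaling property of geometric zeta functions (the analogue of Theorem \ref{an}$(d)$) to obtain $\zeta_{\ov{\mathcal L}}(s)=\zeta_{{\mathcal L}_0}(s)+\zeta_{\ov{\mathcal L}}(s)\sum_{j=1}^Jr_j^s$, and then solves algebraically for $\zeta_{\ov{\mathcal L}}$. You instead exploit the tensor factorization $\ov{\mathcal L}=\mathcal L_0\otimes\mathcal L(r_1,\dots,r_J)$, compute the pure self-similar factor explicitly as a geometric series $\sum_{n\ge0}\bigl(\sum_j r_j^s\bigr)^n=\bigl(1-\sum_j r_j^s\bigr)^{-1}$, and invoke Lemma \ref{tensorl}. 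The two proofs are mathematically equivalent (expanding the functional equation recursively yields exactly your geometric series), but yours is more structural: it isolates the self-similar factor, lets Lemma \ref{tensorl} handle the abscissa-of-convergence statement and the multiplicativity of total lengths in one stroke, and makes the origin of the denominator transparent. The paper's route is more economical if one does not want to dwell on the combinatorics of $\mathcal L(r_1,\dots,r_J)$. Your closing argument that $D_{\rm mer}(\zeta_{\ov{\mathcal L}})=D_{\rm mer}(\zeta_{\mathcal L_0})$ because $1-\sum_j r_j^s$ is entire and not identically zero (so one may pass freely between $f$ and $f\cdot g^{\pm1}$ for this $g$) is correct and slightly more careful than what the paper writes.

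One remark worth emphasizing: the ``hard part'' you identify --- reading the multiset $\mathcal L(r_1,\dots,r_J)$ with multinomial (word) multiplicities rather than as a bare union over $\alpha\in(\eN_0)^J$ --- is a genuine observation about the paper, not merely a quibble of your own. Under the bare multi-index reading, $\lvert\mathcal L(r_1,\dots,r_J)\rvert_1=\prod_{j}(1-r_j)^{-1}$, which is inconsistent with the stated total length $\lvert\ov{\mathcal L}\rvert_1=\lvert\mathcal L_0\rvert_1/(1-\sum_j r_j)$ and with the functional decomposition $\ov{\mathcal L}=\mathcal L_0\sqcup\bigsqcup_j r_j\ov{\mathcal L}$ on which the paper's own proof rests. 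The decomposition forces indexing by finite words over $\{1,\dots,J\}$, i.e.\ the multinomial multiplicity $\binom{\lvert\alpha\rvert}{\alpha_1,\dots,\alpha_J}$ for each value $r_1^{\alpha_1}\cdots r_J^{\alpha_J}$. You are right to flag this and right about how to resolve it.
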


\begin{proof}
The proof is based on a scaling argument, as follows.
Clearly, we have
\begin{equation}
\ov{\mathcal L}={\mathcal L}_0\sqcup\bigsqcup_{j=1}^J r_j\ov{\mathcal L}.
\end{equation}
Hence, the geometric zeta function of $\ov{\mathcal L}$ satisfies the following functional equation:
\begin{equation}
\zeta_{\ov{\mathcal L}}(s)=\zeta_{{\mathcal L}_0}(s)+\sum_{j=1}^J\zeta_{r_j\ov{\mathcal L}}(s).
\end{equation}
Further, by using the scaling property of the geometric zeta function (see Eq.\  \eqref{zetalA}  of Theorem \ref{an}$(d)$ above, stated there for the distance zeta function but also valid for the geometric zeta function, as is well known), the above equation becomes
\begin{equation}\label{4.455}
\zeta_{\ov{\mathcal L}}(s)=\zeta_{\ov{\mathcal L}_0}(s)+\zeta_{\ov{\mathcal L}}(s)\sum_{j=1}^Jr_j^s.
\end{equation}
Since the series defining $\zeta_{\ov{\mathcal L}}(s)$ 
is convergent for $\re s>1$, Eq.\ \eqref{4.455} yields Eq.\  \eqref{zovL} directly for $\re s>D_{\rm mer}(\zeta_{{\mathcal L}_0})$. Indeed, upon meromorphic continuation, each of the meromophic functions involved in the above scaling argument can be interpreted as the meromorphic continuation of the corresponding zeta function.
%
\end{proof}





The next theorem gives a general construction of complex dimensions of higher order generated by means of extended self-similar strings.

\begin{theorem}\label{higher_order_dim}
Let $\ov{\mathcal L}:={\mathcal L}_0\otimes{\mathcal L}(r_1,\dots,r_J)$ be an extended self-similar fractal string in $\eR$ generated by a bounded fractal string ${\mathcal L}_0$ and the set of scaling ratios $\{r_1,r_2,\ldots,r_J\}$ with $0<r_j<1$, for $j=1,\dots,J$, such that $\sum_{j=1}^J r_j<1$.
Furthermore, assume that $\zeta_{{\mathcal L}_0}$ is meromorphic on $\Ce$ and that there are no zero-pole cancellations in \eqref{zovL}.
Let $\mathfrak{D}$ be the set of complex solutions of the Moran equation $\sum_{j=1}^{J}r_j^s=1$ and let $m$ be an arbitrary positive integer.
Then, one can explicitly construct an extended self-similar fractal string $\ov{\mathcal L}_m$ which has exactly the same complex dimensions as $\ov{\mathcal L}$ but with the orders $($i.e., the multiplicities$)$ of the complex dimensions located in $\mathfrak{D}$ multiplied by $m$.

Moreover, if we let $\mathfrak{D}^+:=\mathfrak{D}\cap\{\re s>0\}$, then one can explicitly construct an extended self-similar fractal string $\ov{\mathcal L}_{\ty}$ such that all of its complex dimensions contained in $\mathfrak{D}^+$ are of infinite order; that is, they are essential singularities of its geometric zeta function $\zeta_{\ov{\mathcal L}_{\ty}}$. In particular, we have that $D_{\rm mer}(\zeta_{\ov{\mathcal L}_{\ty}})=D(\zeta_{\ov{\mathcal L}_{\ty}})$.
\end{theorem}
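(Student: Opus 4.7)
The plan is to imitate, in this more general setting, the iterated Cantor construction carried out in Example~\ref{high_order}.

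For the first claim, the natural candidate for $\ov{\mathcal L}_m$ is the $m$-fold iterate of the extended self-similar construction. Starting from $\ov{\mathcal L}_1 := \ov{\mathcal L}$, I would define $\ov{\mathcal L}_k$ (for $k \ge 2$) to be the extended self-similar fractal string generated by $\ov{\mathcal L}_{k-1}$ with the same scaling ratios $\{r_1,\dots,r_J\}$; equivalently, $\ov{\mathcal L}_m = \mathcal L_0 \otimes \mathcal L(r_1,\dots,r_J)^{\otimes m}$. Applying Theorem~\ref{tmab} inductively (or Lemma~\ref{tensorl} directly) yields the meromorphic identity
\begin{equation*}
\zeta_{\ov{\mathcal L}_m}(s) \;=\; \frac{\zeta_{\mathcal L_0}(s)}{\bigl(1 - \sum_{j=1}^J r_j^s\bigr)^m}
\end{equation*}
on all of $\Ce$. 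Since by the no-cancellation hypothesis no zero of $\zeta_{\mathcal L_0}$ lies in $\mathfrak{D}$, the poles of $\zeta_{\ov{\mathcal L}_m}$ located in $\mathfrak{D}$ are exactly those of $\zeta_{\ov{\mathcal L}}$ with multiplicities multiplied by $m$, while the poles of $\zeta_{\mathcal L_0}$ (outside $\mathfrak{D}$) retain their original location and order. This gives the claim.

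For the second claim, I would imitate the construction of ${\mathcal L}_\ty$ in Example~\ref{high_order} and set
\begin{equation*}
\ov{\mathcal L}_\ty \;:=\; \bigsqcup_{m=1}^{\ty} c_m \, \ov{\mathcal L}_m, \qquad c_m := \rho^m / m! ,
\end{equation*}
with $\rho > 0$ chosen small enough so that $\sum_m c_m |\ov{\mathcal L}_m|_1 = |\mathcal L_0|_1 \sum_m \rho^m / \bigl(m! (1 - \sum_j r_j)^m\bigr) < \ty$, ensuring that $\ov{\mathcal L}_\ty$ is indeed a bounded fractal string. The scaling property of the geometric zeta function, together with the first part, then gives formally
\begin{equation*}
\zeta_{\ov{\mathcal L}_\ty}(s) \;=\; \zeta_{\mathcal L_0}(s) \sum_{m=1}^{\ty} \frac{c_m^s}{\bigl(1 - \sum_{j=1}^J r_j^s\bigr)^m}.
\end{equation*}
The factor $(m!)^{-s}$ in $c_m^s$ guarantees, via the Weierstrass $M$-test, that this series converges locally uniformly on every compact subset of $\{\re s > 0\} \setminus \mathfrak{D}^+$, so $\zeta_{\ov{\mathcal L}_\ty}$ is holomorphic on that set.

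The main obstacle will be to verify that each $s_0 \in \mathfrak{D}^+$ is genuinely an isolated essential singularity of $\zeta_{\ov{\mathcal L}_\ty}$, and not merely a pole of infinite formal order nor a removable singularity produced by some hidden cancellation with $\zeta_{\mathcal L_0}$. To this end, I would factor $1 - \sum_j r_j^s = (s - s_0)^k h(s)$ near $s_0$, with $h$ holomorphic and $h(s_0) \ne 0$ and $k \ge 1$ the local order of vanishing; the $m$-th term of the series then contributes a principal part starting with $\zeta_{\mathcal L_0}(s_0) \, c_m^{s_0} h(s_0)^{-m} (s - s_0)^{-km}$. The no-cancellation hypothesis forces $\zeta_{\mathcal L_0}(s_0) \ne 0$, so each of these leading coefficients is nonzero; and the factorial decay in $c_m$ gives $|c_m^{s_0} / h(s_0)^m|^{1/m} \to 0$, which makes the resulting Laurent series in $(s - s_0)^{-1}$ converge in a full punctured neighborhood of $s_0$. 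Thus $s_0$ is an isolated singularity whose principal part has infinitely many nonzero terms, i.e., an essential singularity. Finally, since the real Moran exponent $D$ lies in $\mathfrak{D}^+$ and is now an essential singularity of $\zeta_{\ov{\mathcal L}_\ty}$, no meromorphic extension past the critical line is possible, yielding $D_{\rm mer}(\zeta_{\ov{\mathcal L}_\ty}) = D(\zeta_{\ov{\mathcal L}_\ty})$, as claimed.
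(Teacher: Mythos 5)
Your proposal follows essentially the same route as the paper's proof: you iterate the extended self-similar construction to obtain $\ov{\mathcal L}_m = \mathcal L_0 \otimes \mathcal L(r_1,\dots,r_J)^{\otimes m}$, invoking Theorem~\ref{tmab} (or Lemma~\ref{tensorl}) to get the identity $\zeta_{\ov{\mathcal L}_m}(s) = \zeta_{\mathcal L_0}(s)\big(1-\sum_{j}r_j^s\big)^{-m}$, and form $\ov{\mathcal L}_\ty = \bigsqcup_m c_m\,\ov{\mathcal L}_m$ with factorially decaying scales. The paper takes $c_m = (m!)^{-1}$ directly, which already suffices since $\sum_m \frac{1}{m!\,(1-\sum_j r_j)^m}$ is a convergent exponential series; your auxiliary parameter $\rho$ is a harmless but unnecessary safety margin (and $\rho=1$ recovers the paper's choice exactly).

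One step deserves more scrutiny than you give it. To conclude that each $s_0\in\mathfrak{D}^+$ is an essential singularity, you argue that the $m$-th summand contributes the nonzero leading coefficient $\zeta_{\mathcal L_0}(s_0)\,c_m^{s_0}h(s_0)^{-m}$ at Laurent order $-km$, and infer that the principal part of $\zeta_{\ov{\mathcal L}_\ty}$ has infinitely many nonzero terms. But the full Laurent coefficient of $(s-s_0)^{-km}$ also receives contributions from the sub-leading Laurent data of the summands with $m'>m$ (whose poles are deeper), so a priori these could conspire to cancel any given coefficient; your chain of reasoning does not rule this out. To be fair, the paper's proof treats this point with the same brevity, asserting the essential-singularity claim immediately after the Weierstrass $M$-test. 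A clean way to close the gap is via Casorati--Weierstrass: setting $w:=(s-s_0)^{-k}$, near $s_0$ the sum is, to leading order, the non-polynomial entire function $w\mapsto \zeta_{\mathcal L_0}(s_0)\sum_m c_m^{s_0}h(s_0)^{-m}w^m$; approaching $s_0$ along suitable rays makes $|w|\to\ty$ with prescribed argument, and since a non-polynomial entire function has dense range near infinity, the singularity at $s_0$ can be neither removable nor a pole.
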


\begin{proof}
Let ${\mathcal L}_0$ be the generator and let $\{r_1,r_2,\ldots,r_J\}$ be the associated scaling ratios, as
in the statement of the theorem.
Furthermore, we define $\ov{\mathcal L}:=\sqcup_{k=1}^\ty{\mathcal L}_k$,  as in Theorem \ref{tmab}, and we now let this be our new generator; that is, we define a new extended self-similar fractal string $\ov{\mathcal L}_2$ as the disjoint union of scaled copies of $\ov{\mathcal L}$ by scaling factors built by all possible words of multiples of the ratios $r_j$.
This construction implies that
$
\ov{\mathcal L}_2=\ov{\mathcal L}\sqcup \bigsqcup_{j=1}^J r_j \ov{\mathcal L}_2;
$
similarly as before, by the scaling property of the geometric zeta function and, in light of Theorem \ref{tmab}, we then have
\begin{equation}\label{zovL1}
\zeta_{\ov{\mathcal L}_2}(s)=\frac{\zeta_{\ov{\mathcal L}}(s)}{1-\sum_{j=1}^{J}r_j^s}=\frac{\zeta_{{\mathcal L}_0}(s)}{\big(1-\sum_{j=1}^{J}r_j^s\big)^2}.
\end{equation}
As is apparent in Eq.\  \eqref{zovL1}, the fractal string $\ov{\mathcal L}_2$ has exactly the same complex dimensions as $\ov{\mathcal L}$, except for the fact that the orders of the ones contained in $\mathfrak D$ are multiplied by $2$.

We can next proceed inductively by using $\ov{\mathcal L}_2$ as our new base fractal string and, for each $n\in\eN$, we thus obtain a fractal string $\ov{\mathcal L}_n$ such that
\begin{equation}
\zeta_{\ov{\mathcal L}_n}(s)=\frac{\zeta_{{\mathcal L}_0}(s)}{\big(1-\sum_{j=1}^{J}r_j^s\big)^n};
\end{equation}
hence, $\ov{\mathcal L}_n$ has exactly the same complex dimensions as $\ov{\mathcal L}$, except for the fact that the ones contained in $\mathfrak D$ have their orders multiplied by $n$.

In order to generate essential singularities, we take a disjoint union of the fractal strings $\ov{\mathcal L}_n$ scaled by $(n!)^{-1}$.
More specifically, we define $\ov{\mathcal L}_\ty$ as
\begin{equation}
\ov{\mathcal L}_\ty:=\bigsqcup_{n=1}^{\ty}(n!)^{-1}\ov{\mathcal L}_n.
\end{equation}
The construction of $\ov{\mathcal L}_\ty$ and the scaling property of the geometric zeta function then imply that
\begin{equation}
\zeta_{\ov{\mathcal L}_\ty}(s)=\zeta_{{\mathcal L}_0}(s)\sum_{n=1}^{\ty}\frac{1}{(n!)^s\big(1-\sum_{j=1}^{J}r_j^s\big)^n}.
\end{equation}
By the Weierstrass $M$-test, the above sum is seen to define a holomorphic function on $\{\re s>0\}\setminus\mathfrak{D}^+$  and $\mathfrak{D}^+$ is the set of essential singularities of the function defined by this sum, as desired.
\end{proof}


\section{A construction of a class of maximally hyperfractal sets}\label{hyperfractals}

A subset $A$ of $\eR^N$ is said to be {\em maximally hyperfractal} if the critical line $\{\re s=D(\zeta_A)\}$ of the corresponding distance zeta function $\zeta_A$ consists solely of nonremovable singularities of $\zeta_A$. In particular, this implies that $D_{\rm mer}(\zeta_A)=D_{\rm hol}(\zeta_A)=D(\zeta_A)=\ov\dim_BA$, since $\zeta_A$ cannot be meromorphically extended to an open right half-plane $\{\re s>\a\}$ with $\a<D(\zeta_A)$ and for any bounded set $A$, we have $D_{\rm mer}(\zeta_A)\le D_{\rm hol}(\zeta_A)\le D(\zeta_A)=\ov\dim_BA$; see Corollary \ref{zetac}$(a)$. 

We shall need a class of generalized Cantor sets depending on two parameters, which we now introduce.

\begin{defn}\label{Cma}
The generalized Cantor sets $C^{(m,a)}$ are determined by an integer $m\ge2$ and a positive real number $a$ such that $ma<1$.
In the first step of the analog of Cantor's construction, we start with $m$ equidistant, closed intervals in $[0,1]$ of length $a$, with $m-1$ holes, each of length $(1-ma)/(m-1)$. In the second step, we continue by scaling by the factor $a$ each of the $m$ intervals of length $a$; and so on, ad infinitum.
The  $($two-parameter$)$ {\em generalized Cantor set} $C^{(m,a)}$ is then defined as the intersection of the decreasing sequence of compact sets constructed in this way.
\end{defn}

In the following proposition, we collect some of the basic properties of generalized Cantor sets $C^{(m,a)}$.
Apart from the proof of \eqref{zetaCma}, which is easily obtained via a direct computation, the proof of the proposition
is similar to that for the standard ternary Cantor set or string (see \cite[Eq.\ (1.11)]{lapidusfrank12}), and therefore, we omit it.

\begin{prop}\label{Cmap}
 If $A:=C^{(m,a)}\subseteq\eR$ is the generalized Cantor set introduced in Definition~\ref{Cma}, then
$
D:=\dim_B C^{(m,a)}=D(\zeta_A)=\log_{1/a}m.
$
Also, $A$ is Minkowski nondegenerate but is not Minkowski measurable. 
If we assume that $\delta\ge\frac{1-ma}{2(m-1)}$, then
\begin{equation}\label{zetaCma}
\zeta_A(s):=\int_{-\delta}^{1+\delta}d(x,A)^{s-2}\D x=\left(\frac{1-ma}{2(m-1)}\right)^{s-1}\frac{1-ma}{s(1-ma^s)}+\frac{2\delta^s}s.
\end{equation}
As a result, $\zeta_A(s)$ admits a meromorphic continuation to all of $\Ce$, given by the last expression in $(\ref{zetaCma})$.
Hence, the set of poles of $\zeta_A$ $($in $\Ce)$ and the residue of $\zeta_A$ at $s=D$ are respectively given by
\begin{equation}\label{2.1.6}
\po(\zeta_A)=(D+\mathbf p{\I}\Ze)\cup\{0\},\ \ \res(\zeta_A,D)=\frac{1-ma}{DT}\left(\frac{1-ma}{2(m-1)}\right)^{D-1},
\end{equation}
where $T:=\log(1/a)$
and $\mathbf p:=2\pi/T$ is the oscillatory period of $C^{(m,a)}$ $($in the sense of \cite{lapidusfrank12}$)$. Finally, each pole in $\po(\zeta_A)$ is simple.
\end{prop}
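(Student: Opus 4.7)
The plan is to deduce every assertion of the proposition from the self-similar structure of $A:=C^{(m,a)}$ combined with an explicit evaluation of the distance integral. First, I would observe that $A$ is the attractor of an iterated function system consisting of $m$ affinities of contraction ratio $a$ satisfying the open set condition, so Moran's theorem yields $\dim_B A=\log_{1/a}m$; Theorem \ref{an}(b) then gives $D(\zeta_A)=\overline{\dim}_B A=D$.

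For Minkowski nondegeneracy and nonmeasurability, I would analyze the tube function by computing $|A_t|$ explicitly on each scale $t\in[\tfrac{a^{k+1}(1-ma)}{2(m-1)},\tfrac{a^k(1-ma)}{2(m-1)})$ for $k\ge 0$. On each such interval the self-similarity makes the $t$-neighborhood structure a rescaled copy of the structure one generation up, yielding (after a short case analysis) a representation $|A_t|=t^{1-D}G(\log t^{-1})$ valid for all sufficiently small $t>0$, where $G$ is continuous, $T$-periodic with $T=\log(1/a)$, and nonconstant. Nondegeneracy then follows from the positivity of $\min G$, while nonmeasurability follows from the nonconstancy of $G$, exactly as in Theorem \ref{nonmeasurable}.

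The explicit formula in Eq.~(\ref{zetaCma}) would be derived by direct integration once one observes that the hypothesis $\delta\ge(1-ma)/(2(m-1))$ forces the first-generation gaps of length $(1-ma)/(m-1)$ to be covered by $A_\delta$, so $A_\delta=[-\delta,1+\delta]$. Splitting the integral into contributions from $[-\delta,0]\cup[1,1+\delta]$ and from $[0,1]$, the outside piece is $2\int_0^\delta u^{s-1}\,\D u=2\delta^s/s$. For the inside piece, the construction produces at generation $k\ge 1$ exactly $m^{k-1}(m-1)$ disjoint open gaps, each of length $h_k=a^{k-1}(1-ma)/(m-1)$; on each gap $\int\min(x-x_L,x_R-x)^{s-1}\,\D x = h_k^s/(s\cdot 2^{s-1})$. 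Summing over $k$ produces a geometric series of common ratio $ma^s$, convergent for $\re s>D$ to $(1-ma^s)^{-1}$, and algebraic rearrangement yields the claimed closed form.

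Since the right-hand side of (\ref{zetaCma}) is already meromorphic on $\Ce$, the remaining assertions follow by inspection. The poles correspond to zeros of $s(1-ma^s)$, namely $s=0$ together with $s_k=D+\mathbf p k\I$, $k\in\Ze$, where $\mathbf p=2\pi/T$. Simplicity of each $s_k$ is immediate from $\tfrac{d}{ds}(1-ma^s)|_{s=s_k}=-ma^{s_k}\log a=T\ne 0$, and L'H\^opital's rule gives $\lim_{s\to D}(s-D)/(1-ma^s)=1/T$, from which the residue formula in (\ref{2.1.6}) follows by direct substitution. The main subtlety in this program is not any individual step but the careful bookkeeping required in analyzing $|A_t|$ to pin down the nonconstant periodic profile $G$; everything else is a routine consequence of the closed form.
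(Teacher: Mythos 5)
Your proof is correct and follows essentially the same route as the paper, which itself omits the details and only indicates that \eqref{zetaCma} follows by a direct computation analogous to the one for the ternary Cantor string: decompose $[0,1]\setminus A$ into the $m^{k-1}(m-1)$ generation-$k$ gaps of length $a^{k-1}(1-ma)/(m-1)$, sum the resulting geometric series with ratio $ma^s$, and read off the poles and residues from the closed form. (Note that the exponent $s-2$ in the displayed integral of the statement is a typo for $s-1$, consistent with $N=1$ in Definition \ref{defn}; your integrand $d(x,A)^{s-1}$ is the correct one and reproduces the stated right-hand side.)
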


\subsection{Maximally hyperfractal strings}\label{hyperfratalsr}

We say that a bounded fractal string $\mathcal L=(\ell_j)_{j\in\eN}$ is {\em maximally hyperfractal}, if the corresponding geometric zeta function $\zeta_{\mathcal L}(s):=\sum_{j=1}^\ty\ell_j^s$ is such that
every point on the critical line $\{\re s=D\}$ is a nonremovable singularity of the corresponding geometric zeta function $\zeta_{\mathcal L}$.

\begin{theorem}\label{hyper} Let $D$ be a given real number in $(0,1)$.
Then there is an explicitly constructible bounded fractal string $\mathcal L=(\ell_j)_{j\in\eN}$  which is maximally hyperfractal and such that $\ov\dim_B\mathcal L=D$. The corresponding subset $A_{\mathcal L}:=\{\sum_{j\ge k}\ell_j:k\in\eN\}$ of the real line is also maximally hyperfractal and $\ov\dim_BA_{\mathcal L}=D$; see \S\ref{zeta_s}.
\end{theorem}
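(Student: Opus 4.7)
The plan is to construct $\mathcal{L}$ as a superposition of the geometric fractal strings attached to a sequence of generalized Cantor sets $C^{(m_k,a_k)}$, all of box dimension $D$ but with oscillatory periods tending to zero, so that the union of the corresponding poles is dense on the critical line $\{\re s=D\}$. Concretely, for each integer $k\geq 1$, let $m_k$ be the $k$-th prime and set $a_k:=m_k^{-1/D}$, so that $m_ka_k=m_k^{1-1/D}<1$ (since $D<1$) and $\log_{1/a_k}m_k=D$. Let $\mathcal{L}_k$ denote the fractal string of lengths of the open intervals removed in the construction of $C^{(m_k,a_k)}$. A direct computation analogous to the one underlying Proposition~\ref{Cmap} yields
\begin{equation*}
\zeta_{\mathcal{L}_k}(s)=\left(\frac{1-m_ka_k}{m_k-1}\right)^{s}\cdot\frac{m_k-1}{1-m_ka_k^{s}},
\end{equation*}
meromorphic on $\Ce$ with simple poles precisely at $D+\I\mathbf{p}_k\Ze$, where $\mathbf{p}_k:=2\pi D/\log m_k\to 0$; moreover $|\mathcal{L}_k|_1=1$.

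Pick positive scaling factors $c_k>0$ with super-polynomial decay in $m_k$ (say $c_k:=\exp(-(\log m_k)^3)$) and set $\mathcal{L}:=\bigsqcup_{k\geq 1}c_k\mathcal{L}_k$. Since $\sum_kc_k|\mathcal{L}_k|_1=\sum_kc_k<\infty$, $\mathcal{L}$ is a bounded fractal string; by the scaling property of geometric zeta functions,
\begin{equation*}
\zeta_{\mathcal{L}}(s)=\sum_{k\geq 1}c_k^{s}\zeta_{\mathcal{L}_k}(s)\qquad(\re s>D).
\end{equation*}
The estimate $|\zeta_{\mathcal{L}_k}(s)|\leq 2m_k^{1-\re s}$, valid for $k$ large and $\re s\geq D+\eta$ (any $\eta>0$), combined with the super-polynomial decay of $c_k$, yields uniform convergence of this series on compact subsets of $\{\re s>D\}$; in particular, $\zeta_{\mathcal{L}}$ is holomorphic on $\{\re s>D\}$ and $D(\zeta_{\mathcal{L}})\leq D$.

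The decisive step is to show that every point of $\{\re s=D\}$ is a nonremovable singularity of $\zeta_{\mathcal{L}}$. By unique factorization in $\Ze$, the sequence $\{\log m_k\}_{k\geq 1}$ is $\Qu$-linearly independent, so the arithmetic progressions $\mathbf{p}_k\Ze$ are pairwise disjoint apart from $0$; consequently each $\omega=D+\I n\mathbf{p}_k$ with $n\ne 0$ is a pole of $\zeta_{\mathcal{L}_k}$ alone among the family, and since $\mathbf{p}_k\to 0$ the set $E:=\bigcup_k(D+\I\mathbf{p}_k\Ze)$ is dense in $\{\re s=D\}$. Fix an arbitrary $s_0\in\{\re s=D\}$ and suppose, for contradiction, that $\zeta_{\mathcal{L}}$ admits a meromorphic extension to some open neighborhood $U$ of $s_0$. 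Pick any pole $\omega=D+\I n\mathbf{p}_k\in U\cap E$ with $n\ne 0$ and approach $\omega$ radially via $s(\epsilon):=\omega+\epsilon$, $\epsilon\searrow 0^+$. The single summand $c_k^{s(\epsilon)}\zeta_{\mathcal{L}_k}(s(\epsilon))$ diverges like $1/\epsilon$. For each $k'\ne k$, Baker's theorem on linear forms in logarithms of primes supplies a Diophantine lower bound on $|n\log m_{k'}-n'\log m_k|$, uniform in the integer $n'$, which translates into a lower bound on the resonant denominator $|1-m_{k'}a_{k'}^{s(\epsilon)}|$ that, in the limit $\epsilon\to 0^+$, is only polynomially small in $\log m_{k'}$. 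Combined with the super-polynomial decay of $c_{k'}$, this makes the tail $\sum_{k'\ne k}c_{k'}^{s(\epsilon)}\zeta_{\mathcal{L}_{k'}}(s(\epsilon))$ uniformly bounded as $\epsilon\searrow 0^+$. Hence $|\zeta_{\mathcal{L}}(s(\epsilon))|\to\infty$, forcing the putative extension to have a pole at $\omega$; since such $\omega$ are dense in $U\cap\{\re s=D\}$, the extension would have a non-discrete pole set, contradicting its meromorphy on $U$. Therefore $s_0$ is nonremovable; in particular $D(\zeta_{\mathcal{L}})=D$ and $\ov\dim_B\mathcal{L}=D$, so $\mathcal{L}$ is maximally hyperfractal.

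The main technical obstacle is the tail estimate: since $\mathbf{p}_{k'}\to 0$, the poles of $\zeta_{\mathcal{L}_{k'}}$ cluster arbitrarily close to $\omega$ and no fixed neighborhood of $\omega$ is free of them, so the mere qualitative irrationality of $\log m_{k'}/\log m_k$ (which only yields $|1-m_{k'}a_{k'}^\omega|>0$) is insufficient; the quantitative Diophantine bound from Baker's theorem, being only polynomial in $\log m_{k'}$, is precisely what can be absorbed by the super-polynomial decay of $c_{k'}$. Finally, the statement about $A_{\mathcal{L}}$ follows immediately from Example~\ref{L}: the identity $\zeta_{A_{\mathcal{L}}}(s)=u(s)\zeta_{\mathcal{L}}(s)+v(s)$, with $u(s)=s^{-1}2^{1-s}$ and $v$ holomorphic on $\{\re s>0\}$ and $u$ nowhere vanishing there, transfers the maximal hyperfractality of $\zeta_{\mathcal{L}}$ on $\{\re s=D\}\subset\{\re s>0\}$ intact to $\zeta_{A_{\mathcal{L}}}$, and $\ov\dim_BA_{\mathcal{L}}=D(\zeta_{A_{\mathcal{L}}})=D(\zeta_{\mathcal{L}})=D$.
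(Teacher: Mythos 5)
Your construction is essentially the paper's: a superposition $\mathcal L=\bigsqcup_k c_k\mathcal L_k$ of scaled generalized Cantor strings with $a_k=m_k^{-1/D}$, so that all $\mathcal L_k$ share the critical line $\{\re s=D\}$ while their oscillatory periods $\mathbf p_k=2\pi D/\log m_k$ tend to $0$; then density of the singularities on the critical line forces every point of the line to be nonremovable, and the passage to $A_{\mathcal L}$ goes through the functional equation of Example \ref{L}, exactly as in the paper. Where you genuinely differ is at the one step the paper leaves implicit, namely that the poles $D+\I n\mathbf p_k$ of the individual summands survive as nonremovable singularities of the infinite sum $\sum_k c_k^s\zeta_{\mathcal L_k}(s)$, whose later terms have poles clustering arbitrarily close to any such point. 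The paper (which takes $c_k=2^{-k}$ and an arbitrary increasing sequence of integers $m_k$) simply asserts this; you make it rigorous by taking the $m_k$ to be distinct primes (so the progressions $\mathbf p_k\Ze$ meet only at $0$ and the relevant ratios of logarithms are irrational), choosing $c_k$ with super-polynomial decay in $m_k$, and invoking an effective Baker-type lower bound on $|n\log m_{k'}-n'\log m_k|$ to keep the off-resonant denominators $|1-m_{k'}a_{k'}^{s}|$ from degenerating too fast, so that the tail stays bounded along the horizontal approach while the distinguished term blows up like $1/\epsilon$; the dense-pole/meromorphy contradiction then finishes the argument as in the paper. This buys a self-contained verification of the key step (at the cost of a deep Diophantine input), and it also quietly fixes a compatibility issue the paper glosses over: for $\ov\dim_B\mathcal L=D$ one needs $c_k$ to decay fast enough relative to $m_k^{1-\s}$, which your choice guarantees, whereas ``any increasing $(m_k)$'' with $c_k=2^{-k}$ does not. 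One small correction: the effective two-logarithm bound is not ``polynomially small in $\log m_{k'}$''; it has the shape $\exp\bigl(-C\log m_k\log m_{k'}(\log(|n|\log m_{k'}))^2\bigr)$, hence super-polynomially small in $m_{k'}$. This does not harm your argument, since $c_{k'}^{D}=\exp\bigl(-D(\log m_{k'})^3\bigr)$ still dominates it and the tail estimate goes through, but the phrase should be amended.
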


\begin{proof}
The fractal string $\mathcal L$ is obtained as a (suitably defined)  union of an infinite sequence of bounded fractal strings ${\mathcal L}_k:=(\ell_{kj})_{j\ge1}$, with carefully chosen values of the parameters $m_k$ and $a_k$ appearing in Definition \ref{Cma}, and where $(c_k)_{k\ge1}$ is an appropriate summable sequence of positive real numbers. For example, we may choose $c_k:=2^{-k}$, for every $k\ge1$. 

More precisely, we define ${\mathcal L}_k$, where $k\in\eN$, as the bounded fractal string corresponding to the generalized Cantor set $c_kC^{(m_k,a_k)}$, where $(m_k)_{k\in\eN}$ is any increasing sequence of integers, with $m_1\ge2$, and $a_k=m_k^{-1/D}$. (In other words, $c_kC^{(m_k,a_k)}$ is a geometric realization of $\mathcal L_k$, contained in the interval $[0,c_k]$.) Note that $C^{(m_k,a_k)}$ is well defined, since $m_ka_k=m_k^{1-1/D}<1$. Furthermore, $|{\mathcal L}_k|_1=c_k$ and, in light of Proposition \ref{Cmap}, we have that 
$
\dim_B{\mathcal L}_k=\dim_B C^{(a_k,m_k)}=\log_{1/a_k}m_k=D,
$
for any $k\in\eN$. In other words, the geometric zeta function of each fractal string $\mathcal L_k$ has the same critical line $\{\re s=D\}$.

Let 
${\mathcal L}:=\bigsqcup_{k=1}^\ty {\mathcal L}_k$ be the (disjoint) union 
of the sequence of bounded fractal strings ${\mathcal L}_k$. Since 
$
|\mathcal L|_1=\sum_{k=1}^\ty|\mathcal L_k|_1=\sum_{k=1}^\ty c_k<\ty,
$ 
we conclude that the fractal string $\mathcal L$ is bounded. 

Observe that the oscillatory period of ${\mathcal L}_k$, 
which is defined by
${\mathbf p}_k:=\frac{2\pi}{\log(1/a_k)}$,
provides valuable information about the density of the set $D+{\mathbf p}_k\Ze\I$ of principal complex dimensions of ${\mathcal L}_k$ on the critical line $\{\re s=D\}$; see Eq.\  \eqref{2.1.6} in Proposition \ref{Cmap}. More specifically,
since $a_k=m_k^{-1/D}\to0$ as $k\to\ty$, we see that for the set of principal complex dimensions of the generalized Cantor string ${\mathcal L}_k$ (i.e, the set of the principal poles of $\zeta_{{\mathcal L}_k}$),
$
\dim_{PC}{\mathcal L}_k=\dim_{PC} C^{(m_k,a_k)}=D+{\mathbf p}_k\Ze\I,
$
becomes denser and denser on the critical line, as $k\to\ty$, because then  the oscillatory period 
$
{\mathbf p}_k:=\frac{2\pi}{\log(1/a_k)}=\frac{2\pi D}{\log m_k}
$
of $\mathcal L_k$ tends to zero as $k\to\ty$, since $m_k\to\ty$.
Thus, the distance zeta function of the fractal string ${\mathcal L}:=\sqcup_{k=1}^\ty{\mathcal L}_k$ will have 
$
D+\Big(\bigcup_{k=1}^\ty{\mathbf p}_k\Ze\Big)\I
$
as a set of nonremovable singularities, which is densely packed on the critical line 
$\{\re s=D\}=D+\eR\I$,  
since the set 
$\cup_{k=1}^\ty{\mathbf p}_k\Ze$
is clearly dense in $\eR$.

Since we have a dense set of nonremovable singularities on the critical line, then in fact, each point on the line is a nonremovable singularity. 
To see this, let us reason by contradiction. Assume that there exists $s_0$ on the critical line, which is a removable singularity. By definition, and upon resolution of the singularity, this means that there exists an open disk $U:=B_\rho(s_0)$ in $\Ce$ centered at $s_0$ such that the fractal zeta function $\zeta_{\mathcal L}$ is holomorphic in the 
open disk $U$. Therefore, if $I:=U\cap \{\re s=D\}$ is the corresponding open interval along the critical line, then $\zeta_{\mathcal L}$ cannot have any nonremovable singularity in the open
interval $I$. However, this is clearly impossible, since it would then contradict the fact that the set of nonremovable singularities of $\zeta_{\mathcal L}$ is dense in the critical line. 




Note that, in essence, the above argument shows that the set of nonremovable singularities of $\zeta_{\mathcal L}$ is closed in $L:=\{\re s=D\}$, and since it is dense in $L$, it must then coincide with all of $L$.

In conclusion, we deduce that the entire critical line $\{\re s=D\}$ consists of nonremovable singularities of $\zeta_{\mathcal L}$, which means that $\mathcal L$ is maximally hyperfractal. 
In light of Eq.\  \eqref{cantor_string} and the text following it, the subset $A_{\mathcal L}$ of $\eR$ corresponding to the fractal string $\mathcal L$ is also maximally hyperfractal, as desired.
\end{proof}

\subsection{Maximal hyperfractals in higher-dimensional Euclidean spaces}\label{hyperfratalsn}
The aim of this subsection is to show that, given a maximal hyperfractal set $A$ in $\eR^N$, the sets of the form $A\times[0,1]^m$ will also be maximally hyperfractal for any positive integer $m$. The main result is stated in Theorem \ref{mh} below. It will enable us, in particular, to obtain an $N$-dimensional analog 
of Theorem \ref{hyper}; see Corollary \ref{hypN}.
\medskip

\begin{lemma}\label{line_sing}
Assume that $f=f(s)$ is a fractal zeta function such that $D_{\rm hol}(f)\in\eR$ and the corresponding critical line $\{\re s=D_{\rm hol}(f)\}$ of holomorphic continuation
$\{\re s=D_{\rm hol}(f)\}$ consists entirely of nonremovable singularities. Assume that $g=g(s)$ is holomorphic on $\{\re s>\a\}$,
where $\a<D_{\rm hol}(f)$. Then $D_{\rm hol}(f+g)=D_{\rm hol}(f)$ and the holomorphy critical line $\{\re s=D_{\rm hol}(f+g)\}$ of $f+g$ also consists entirely of nonremovable singularities.
\end{lemma}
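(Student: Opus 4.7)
The plan is to use the linearity of the holomorphy property combined with a simple contrapositive argument: adding a function that is holomorphic in a full neighborhood of the critical line cannot cancel, create, or hide singularities there.

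First I would prove the upper bound $D_{\rm hol}(f+g)\le D_{\rm hol}(f)$. Since $f$ is holomorphic on the open right half-plane $\{\re s>D_{\rm hol}(f)\}$ and since $g$ is holomorphic on $\{\re s>\a\}\supsetneq\{\re s>D_{\rm hol}(f)\}$ (because $\a<D_{\rm hol}(f)$), the sum $f+g$ is holomorphic on $\{\re s>D_{\rm hol}(f)\}$. By the very definition of the abscissa of holomorphic continuation, this yields $D_{\rm hol}(f+g)\le D_{\rm hol}(f)$.

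Next I would establish the key step: every point $s_0$ on the critical line $\{\re s=D_{\rm hol}(f)\}$ is a nonremovable singularity of $f+g$. The argument is by contradiction. Suppose $s_0$ is a removable singularity of $f+g$. Then $f+g$ admits a holomorphic extension to some open disk $B_{\rho}(s_0)$. Since $\re s_0=D_{\rm hol}(f)>\a$, the function $g$ is already holomorphic on an open neighborhood of $s_0$ (possibly shrinking $\rho$). Hence the difference
\begin{equation*}
f=(f+g)-g
\end{equation*}
would be holomorphic on $B_{\rho}(s_0)$, contradicting the hypothesis that $s_0$ is a nonremovable singularity of $f$. Therefore every point of $\{\re s=D_{\rm hol}(f)\}$ is a nonremovable singularity of $f+g$.

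Finally, this nonremovability on the whole line $\{\re s=D_{\rm hol}(f)\}$ prevents $f+g$ from being holomorphically continued to any open right half-plane strictly larger than $\{\re s>D_{\rm hol}(f)\}$, which gives the matching lower bound $D_{\rm hol}(f+g)\ge D_{\rm hol}(f)$. Combined with the upper bound from the first step, we conclude $D_{\rm hol}(f+g)=D_{\rm hol}(f)$, and the critical line $\{\re s=D_{\rm hol}(f+g)\}$ consists entirely of nonremovable singularities of $f+g$. There is no real obstacle in this proof; the only subtle point is making sure one uses that $\a$ is \emph{strictly} less than $D_{\rm hol}(f)$, so that $g$ is genuinely holomorphic in a two-sided open neighborhood of each point of the critical line, which is exactly what allows the subtraction $(f+g)-g$ to recover holomorphy of $f$ near such a point.
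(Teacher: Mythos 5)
Your proof is correct and follows essentially the same route as the paper: the key step in both is the subtraction argument $f=(f+g)-g$ at a putative removable singularity on the critical line, using that $g$ is holomorphic in a full neighborhood there because $\a<D_{\rm hol}(f)$. Your version is if anything slightly more carefully structured, since you separate the two inequalities $D_{\rm hol}(f+g)\le D_{\rm hol}(f)$ and $D_{\rm hol}(f+g)\ge D_{\rm hol}(f)$ explicitly, whereas the paper states the equality directly and then runs the contradiction argument on the line $\{\re s=D_{\rm hol}(f+g)\}$.
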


\begin{proof} Since $f$ is holomorphic on $\{\re s>D_{\rm hol}(f)\}$, and by definition, the holomorphicity lower bound $D_{\rm hol}(f)$ is optimal (i.e., it is the infimum of all $\b\in\eR$ such that $f$ is holomorphic on $\{\re s>\b\}$), it then follows that $D_{\rm hol}(f)=D_{\rm hol}(f+g)$; indeed, $g$ is holomorphic on $\{\re s>\a\}\supseteq\{\re s>D_{\rm hol}(f)\}$.

In order to prove the second claim, we argue by contradiction and assume that some $s_0\in\Ce$ with $\re s_0=D_{\rm hol}(f+g)$ is a removable singularity of $f+g$. Then, since $g$ is holomorphic at $s_0$ (because $\a<D_{\rm hol}(g)$), it would follow that $s_0$ is a removable singularity of the function $f=(f+g)-g$ as well. However, this would contradict the assumption according to which the holomorphy critical line $\{\re s=D_{\rm hol}(f)\}$ consists of nonremovable singularities. 
\end{proof}

\begin{theorem}\label{mh}
Assume that $A$ is a maximally hyperfractal subset of $\eR^N$ and let $m\in\eN$.
Then the `fractal grill' $A\times[0,1]^m$ is also maximally hyperfractal.
\end{theorem}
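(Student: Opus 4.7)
The strategy is to work with the tube zeta function of $B := A\times[0,1]^m \subset \eR^{N+m}$, establish a ``shift plus holomorphic remainder'' decomposition, and then apply Lemma~\ref{line_sing}. Throughout, I assume $D:=\ov{\dim}_B A<N$ (the case of interest, and the only case consistent with the constructions of \S\ref{hyperfratalsr}), so that $\ov{\dim}_B B = D+m < N+m$ and the functional equation~\eqref{equ_tilde} preserves nonremovable singularities on the critical lines in question.

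The first step is a product formula for the tube function. Since $d((x,y),B)^2 = d(x,A)^2 + d(y,[0,1]^m)^2$, Fubini's theorem together with $|[0,1]^m|_m = 1$ yields
\begin{equation*}
|B_t|_{N+m} = |A_t|_N + \int_0^t \phi(u)\,|A_{\sqrt{t^2-u^2}}|_N\,\D u,
\end{equation*}
where $\phi(u)$ is the ``surface density'' of the shell $\{y\notin[0,1]^m : d(y,[0,1]^m)=u\}$; since $[0,1]^m$ is a convex body, Steiner's formula shows that $\phi$ is a polynomial of degree at most $m-1$. Multiplying by $t^{s-N-m-1}$ and integrating over $(0,\delta)$ produces the decomposition $\tilde\zeta_B(s) = \tilde\zeta_A(s-m) + H(s)$, where $\tilde\zeta_A(s-m) = \int_0^\delta t^{(s-m)-N-1}|A_t|_N\,\D t$ is the tube zeta function of $A\subset\eR^N$ evaluated at $s-m$, and $H(s)$ collects the contribution of the second summand in $|B_t|_{N+m}$.

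The main technical step is to prove that $H$ is holomorphic on the strictly larger half-plane $\{\re s > D+m-1\}$. Substituting $u = t\sin\theta$, interchanging the order of integration, and then setting $v = t\cos\theta$ in the $t$-integral produces a finite sum indexed by $j=0,\dots,m-1$ of terms of the form
\begin{equation*}
\int_0^{\pi/2}\sin^j\theta\,(\cos\theta)^{N+m-j-s}\int_0^{\delta\cos\theta} v^{s-N-m+j}\,|A_v|_N\,\D v\,\D\theta.
\end{equation*}
Using the bound $|A_v|_N \le C_\varepsilon v^{N-D-\varepsilon}$ (valid for any $\varepsilon>0$ and $v$ small), the inner $v$-integral is absolutely convergent on $\{\re s > D+m-j-1+\varepsilon\}$, with the worst case $j=0$ forcing $\re s > D+m-1$. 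The $\theta$-integrand then behaves like $(\cos\theta)^{N-D+1-\varepsilon}$ near $\theta = \pi/2$, which is integrable because $D<N$. A standard Morera argument with uniform convergence on compacta then gives holomorphy on $\{\re s > D+m-1\}$.

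To conclude, the functional equation~\eqref{equ_tilde} (together with $D<N$) transfers the maximal hyperfractality of $\zeta_A$ to $\tilde\zeta_A$, so every point of $\{\re s=D\}$ is a nonremovable singularity of $\tilde\zeta_A$. Shifting by $m$, the function $f(s):=\tilde\zeta_A(s-m)$ satisfies $D_{\rm hol}(f)=D+m$ and has every point of $\{\re s = D+m\}$ as a nonremovable singularity. Applying Lemma~\ref{line_sing} to $\tilde\zeta_B = f + H$, with $g:=H$ holomorphic on $\{\re s > D+m-1\}$, shows that every point of the critical line $\{\re s = \ov{\dim}_B B\}$ is a nonremovable singularity of $\tilde\zeta_B$. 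A final application of~\eqref{equ_tilde} (legitimate because $D+m < N+m$, so the factor $N+m-s$ is nonzero on the critical line) transfers the property to $\zeta_B$, establishing the maximal hyperfractality of $B=A\times[0,1]^m$. The main obstacle is the holomorphy estimate for $H$: crucially, Steiner's polynomial structure for $[0,1]^m$ contributes at least one extra power of $t$ in each term of the remainder, which is precisely what lets one improve the abscissa from $D+m$ strictly down to $D+m-1$ so that Lemma~\ref{line_sing} applies.
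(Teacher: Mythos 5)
Your proof is correct and reaches the conclusion by the same strategic route as the paper: express a fractal zeta function of $A\times[0,1]^m$ as the shift $\zeta_A(s-m)$ (or $\tilde\zeta_A(s-m)$) plus a remainder holomorphic on the strictly larger half-plane $\{\re s>\ov\dim_BA+m-1\}$, and then invoke Lemma~\ref{line_sing}. Where the two differ is in how the decomposition is produced. The paper simply cites the identity
\[
\zeta_{A\times[0,1]^m}(s)=\zeta_A(s-m)+\sum_{k=1}^m\binom mk\,\zeta_A(s-m+k),\qquad \re s>\ov\dim_BA+m,
\]
from [LapRa\v Zu2, Thm.\ 3.15$(a)$], in which the remainder is an explicit finite linear combination of shifts of $\zeta_A$ itself and is therefore immediately holomorphic on $\{\re s>\ov\dim_BA+m-1\}$; after that, one line with Lemma~\ref{line_sing} finishes the argument. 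You instead work with the tube zeta function and rederive an analogous decomposition from scratch, using the orthogonal decomposition $d((x,y),B)^2=d(x,A)^2+d(y,[0,1]^m)^2$, Fubini, and the Steiner polynomial structure of the cube's tube function to exhibit the remainder $H$ as a sum of $\theta$-weighted truncated Mellin integrals and to verify its holomorphy on $\{\re s>D+m-1\}$. This buys you self-containment at the price of considerably more analysis (the change of variables, the Fubini/Tonelli justification, the bound $|A_v|\le C_\varepsilon v^{N-D-\varepsilon}$, integrability near $\theta=\pi/2$, and a Morera argument), and it also makes you rely on the assumption $D<N$ and on the functional equation~\eqref{equ_tilde} to pass back and forth between $\zeta$ and $\tilde\zeta$, whereas the paper's route via distance zeta functions sidesteps both. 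The underlying mechanism producing the crucial one-unit gain in the abscissa is the same in both arguments: the $k=0$ (resp.\ $j=m$) term of the binomial (resp.\ Steiner) expansion is the principal shift, and every remaining term carries at least one extra power and is therefore pushed one unit to the left.
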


\begin{proof}
By \cite[Thm.\ 3.15$(a)$]{dtzf}, we can write
$
\zeta_{A\times[0,1]^m}(s)=\zeta_A(s-m)+g(s)
$
for $\re s>\ov\dim_BA+m$,
where 
$
g(s):=\sum_{k=1}^m\binom mk\zeta_A(s-m+k)
$
is holomorphic on $\{\re s>\ov\dim_BA+m-1\}$.
By hypothesis, the holomorphy critical line of the function $f(s):=\zeta_A(s-m)$ is the vertical line
$\{\re s=\ov\dim_BA+m\}$ and consists entirely of nonremovable singularities. On the other hand, the function $g(s)$ is holomorphic on $\{\re s>\a:=\ov\dim_BA+m-1\}$, since this is the case of the functions $\zeta_A(s-m+k)$ for $k=1,2,\dots,m$.
(Here, we have also used the easily verified fact that $\ov\dim_BA$ does not depend on $N$, the embedding dimension.) Since $\a<\ov\dim_BA+m$, the claim now follows from Lemma \ref{line_sing}.
\end{proof}


\begin{cor}\label{hypN}
Let $N\in\eN$. Then, for any $D\in(N-1,N)$, there is an explicitly constructible maximally hyperfractal subset $A$ of $\eR^N$ such that $\dim_BA= D$.
\end{cor}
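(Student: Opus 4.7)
The plan is to reduce the $N$-dimensional statement to the one-dimensional result already established in Theorem \ref{hyper}, by means of the Cartesian product construction of Theorem \ref{mh}. Set $D_0 := D - (N-1)$, so that $D_0 \in (0,1)$ by hypothesis on $D$.

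First, I would apply Theorem \ref{hyper} to produce an explicitly constructible maximally hyperfractal subset $A_0 \subset \eR$ with $\ov\dim_B A_0 = D_0$, namely the set $A_{\mathcal L}$ associated to the fractal string $\mathcal L = \bigsqcup_{k\ge 1}\mathcal L_k$ built from suitably scaled generalized Cantor sets $c_k C^{(m_k,a_k)}$, exactly as in the proof of Theorem \ref{hyper}. Next, I would form the fractal grill $A := A_0 \times [0,1]^{N-1} \subset \eR^N$ and invoke Theorem \ref{mh} (with $m = N-1$) to conclude at once that $A$ is maximally hyperfractal in $\eR^N$.

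To finish, it remains to check that $\ov\dim_B A = D$. Here I would appeal to the standard product inequality for the upper box dimension, which becomes an equality whenever one of the two factors has coincident upper and lower box dimensions. Since $\dim_B [0,1]^{N-1} = N-1$, this yields
\[
\ov\dim_B\bigl(A_0 \times [0,1]^{N-1}\bigr) = \ov\dim_B A_0 + (N-1) = D_0 + (N-1) = D,
\]
as required; see, e.g., \cite{falc}.

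I do not anticipate any genuine obstacle, because the two substantive ingredients, namely the dense packing of nonremovable singularities along the critical line provided by Theorem \ref{hyper}, and the propagation of maximal hyperfractality under taking a unit-cube product provided by Theorem \ref{mh}, are already in place. The only residual point is the additivity of the upper box dimension under taking a product with a smooth factor, which is classical and requires no further work.
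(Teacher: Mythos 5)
Your proposal is correct and follows exactly the route the paper intends: the remark immediately preceding the corollary explicitly announces that Theorem \ref{mh} is designed to upgrade Theorem \ref{hyper} to $\eR^N$ by taking a product with $[0,1]^{N-1}$, and that is precisely what you do. Shifting $D$ down by $N-1$ to get $D_0\in(0,1)$, invoking Theorem \ref{hyper} for $A_0=A_{\mathcal L}\subset\eR$, forming the fractal grill $A_0\times[0,1]^{N-1}$, and invoking Theorem \ref{mh} with $m=N-1$ is the whole argument; the box-dimension bookkeeping via the product rule for a factor with coincident upper and lower box dimensions is the standard fact and is all that is needed.

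One small point worth noting: the corollary asserts $\dim_B A=D$, whereas Theorem \ref{hyper} as stated only yields $\ov\dim_B A_0=D_0$, and your product computation correspondingly gives $\ov\dim_B A=D$. To conclude that the box dimension of $A$ actually \emph{exists} and equals $D$, you would want to observe (from the construction in the proof of Theorem \ref{hyper}) that $\dim_B A_0=D_0$: since the union $\mathcal L=\bigsqcup_k\mathcal L_k$ contains $\mathcal L_1$, whose realization has full box dimension $D_0$ by Proposition \ref{Cmap}, the lower box dimension of $A_0$ is also $D_0$, and then $\underline\dim_B\bigl(A_0\times[0,1]^{N-1}\bigr)\ge\underline\dim_B A_0+(N-1)=D$ gives the matching lower bound. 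This is a genuinely minor addendum rather than a gap, but it is needed to land exactly on the stated conclusion.
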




\bigskip
 

{ 

} 




\begin{thebibliography}{NaPaTaZu}\label{refs}

\thispagestyle{empty}

\let\small=\rm
\let\normalsize=\rm







\bibitem[DubSep]{DubSep} E.\ Dubon and J.\ M.\ Sepulcre, On the complex dimensions of nonlattice fractal strings in connection with Dirichlet polynomials, {\em J.\ Experimental Math.}\ No.\ 1, {\bf 23} (2014), 13--24.




\bibitem[Edw]{Edw} H.\ M.\ Edwards, {\em Riemann's Zeta Function}, Academic Press,
New York, 1974.

\bibitem[Es1]{es1} D.\ Essouabri, 
Singularit\'es des s\'eries de Dirichlet associ\'ees \`a des polyn\^omes de plusieurs variables et applications en th\'eorie analytique des nombres,
{\em Ann.\ Inst.\ Fourier} ({\em Grenoble}) {\bf 47} (1996), 429--484.

\bibitem[Es2]{es2} D.\ Essouabri, Zeta functions associated to Pascal's triangle mod $p$, 
{\em Japan J.\ Math.} {\bf 31} (2005), 157--174.

\bibitem[EsLi1]{esli1} D.\ Essouabri and B.\ Lichtin, Zeta functions of discrete self-similar sets, {\em Adv.\ in Math.} {\bf 232} (2013), 142--187.

\bibitem[EsLi2]{esli2} D.\ Essouabri and B.\ Lichtin, $k$-point configurations of discrete self-similar sets, in: {\em Fractal Geometry and Dynamical Systems in Pure and Applied Mathematics~I$:$ Fractals in Pure Mathematics} (D.~Carf\`i, M.\ L.\ Lapidus, E.\ P.\ J.\ Pearse and M.\ van Frankenhuijsen, eds.), Contemporary Mathematics, vol.\ {\bf 600}, Amer.\ Math.\ Soc., Providence, R.\ I., 2013, pp.\ 21--50.\ (dx.doi.org/10.1090/conm/600/11947.)



\bibitem[Fal1]{falc} K.\ J.\ Falconer, {\em Fractal Geometry$:$ Mathematical Foundations and Applications}, third edition, John Wiley and Sons, Chichester,
2014.\ (First and second editions: 1990 and 2003.)

\bibitem[Fal2]{fal2}
K.\ J.\ Falconer, On the Minkowski measurability of fractals, {\it Proc. Amer. Math. Soc.} {\bf 123} (1995), 1115--1124.


\bibitem[Fed]{federer} H.\ Federer, {\em Geometric Measure Theory}, Springer-Verlag, New-York, 1969.

\bibitem[Fr]{Fr}
M.\ Frantz, Lacunarity, Minkowski content, and self-similar sets in $\eR$, in: {\it Fractal Geometry and Applications$:$ A Jubilee of Benoit Mandelbrot} (M. L. Lapidus and M. van Frankenhuijsen, eds.), Proc.\ Sympos.\ Pure Math., vol.\ {\bf 72}, Part 1, Amer.\ Math.\ Soc., Providence, R.\ I., 2004, pp.\ 77--91.




\bibitem[HeLap]{lapidushe}
C.\ Q.\ He and M.\ L.\ Lapidus,
   {\em Generalized Minkowski Content, Spectrum of Fractal Drums, Fractal
   Strings and the Riemann Zeta-Function},
   {\it Memoirs Amer.\ Math.\ Soc.}\ No. 608, {\bf 127} (1997), 1--97.
	





\bibitem[KeKom]{kekom} M.\ Kesseb\"ohmer and S.\ Kombrink, Fractal curvature measures and Minkowski content of self-conformal subsets of the real line, {\em Adv.\ in Math.} {\bf 230} (2012), 2474--2512.

\bibitem[Kom]{kom} S.\ Kombrink, A survey on Minkowski measurability of self-similar sets and self-conformal fractals in $\eR^d$, survey article, in: {\em Fractal Geometry and Dynamical Systems in Pure and Applied Mathematics I$:$ Fractals in Pure Mathematics} (D.\ Carfi, M.\ L.\ Lapidus, E.\ P.\ J.\ Pearse and M.\ van Frankenhuijsen, eds.), Contemporary Mathematics, vol.\ {\bf600}, Amer.\ Math.\ Soc., Providence, R.\ I., 2013, pp.\ 135--159.\ (dx.doi.org/10.1090/conm/600/11931.)





\bibitem[Lap1]{Lap1} M.\ L.\ Lapidus, Fractal drum, inverse spectral problems for elliptic operators
and a partial resolution of the Weyl--Berry conjecture, {\it Trans.\ Amer.\ Math.\ Soc.} {\bf 325}
(1991), 465--529.

\bibitem[Lap2]{Lap3} M.\ L.\ Lapidus, Vibrations of fractal drums, the Riemann hypothesis,
waves in fractal media, and the Weyl--Berry conjecture,
in: {\em Ordinary and Partial Differential Eq.\ s} (B.
D.\ Sleeman and R.\ J.\ Jarvis, eds.), vol.\ IV, Proc.\ Twelfth
Internat.\ Conf.\ (Dundee, Scotland, UK, June 1992), Pitman
Research Notes in Math. Series, vol.\ {\bf 289}, Longman Scientific
and Technical, London, 1993, pp.\ 126--209.



\bibitem[LapMa]{LapMa2} M.\ L.\ Lapidus and H.\ Maier, The Riemann hypothesis and inverse spectral problems for fractal strings,
{\it J.\ London Math.\ Soc.} (2) {\bf 52} (1995), 15--34.



\bibitem[LapPeWi]{lappewi1}
M.\ L.\ Lapidus, E.\ P.\ J.\ Pearse and  S.\ Winter, Pointwise tube
formulas for fractal sprays and self-similar tilings with arbitrary
generators, {\it Adv.\ in Math.} {\bf 227} (2011), 1349--1398. (Also:
e-print, {\tt arXiv:1006.3807v3} [math.MG], 2011.)




\bibitem[LapPo1]{lapiduspom}
M.\ L.\ Lapidus and C.\ Pomerance, The Riemann zeta-function and the one-dimensional Weyl--Berry conjecture for fractal drums,
{\it Proc.\ London Math.\ Soc.} (3) No.\ 1, {\bf 66} (1993), 41--69.


\bibitem[LapPo2]{LapPo2} M.\ L.\ Lapidus and C.\ Pomerance, Counterexamples to the modified Weyl-Berry conjecture on fractal drums, {\it Math.\ Proc.\ Cambridge Philos.\ Soc.} {\bf 119} (1996), 167--178.




\bibitem[LapRa\v{Z}u1]{fzf} M.\ L.\ Lapidus, G.\ Radunovi\'c and D.\ \v Zubrini\'c, {\em Fractal Zeta Functions and Fractal Drums$:$ Higher-Dimensional Theory of Complex Dimensions}, Springer Monographs in Mathematics, 
Springer, New York, 2016, to appear, approx.\ 670 pages. 







\bibitem[LapRa\v{Z}u2]{dtzf} M.\ L.\ Lapidus, G.\ Radunovi\'c and D.\ \v Zubrini\'c, Distance and tube zeta functions of fractals and arbitrary compact sets, preprint, 2016.\ (Also: e-print, {\tt arXiv:1506.03525v2 [math-ph]}, 2016; IHES preprint, 
{\tt IHES/M/15/15}, 2015.)


\bibitem[LapRa\v{Z}u3]{rfds} M.\ L.\ Lapidus, G.\ Radunovi\'c and D.\ \v{Z}ubrini\'c, {\em Zeta functions and complex dimensions of relative fractal drums$:$ Theory, examples and applications}, preprint, 2016.\ 
(Also: e-print, {\tt arXiv:1603.00946v3 [math-ph]}, 2016.)


\bibitem[LapRa\v Zu4]{cras1} M.\ L.\ Lapidus, G.\ Radunovi\'c and D.\ \v Zubrini\'c, Fractal tube formulas and a Minkowski measurability criterion for compact subsets of Euclidean spaces, preprint, 2016. (Also: e-print, {\tt arXiv:1411.5733v4 [math-ph]}, 2016; IHES preprint, {\tt IHES M/15/17}, 2015.)

\bibitem[LapRa\v Zu5]{cras2} M.\ L.\ Lapidus, G.\ Radunovi\'c and D.\ \v Zubrini\'c, Fractal tube formulas for compact sets and relative fractal drums:
Oscillations, complex dimensions and fractality,
preprint, 2016.\ (Also: e-print, {\tt arXiv:1603.00946v2 [math-ph]}, 2016.)

\bibitem[LapRa\v{Z}u6]{brezish} M.\ L.\ Lapidus, G.\ Radunovi\'c and D.\ \v Zubrini\'c, 
Fractal zeta functions and complex dimensions of relative fractal drums, {\em Journal of Fixed Point Theory and Applications} No.\ 2, {\bf 15} (2014), 321--378.\ Festschrift issue in honor of Haim Brezis' 70th birthday. (doi: 10.1007/s11784-014-0207-y.) (Also: e-print, {\tt arXiv:1407.8094v3 [math-ph]}, 2014; IHES preprint, 
{\tt IHES/M/15/14}, 2015.)


\bibitem[LapRa\v{Z}u7]{tabarz} M.\ L.\ Lapidus, G.\ Radunovi\'c and D.\ \v{Z}ubrini\'c, Fractal zeta functions and complex dimensions: A general higher-dimensional theory, survey article, in: {\em 
Fractal Geometry and Stochastics V} (C.\ Bandt, K.\ Falconer and M.\ Z\"ahle, eds.), Proc.\ Fifth Internat.\ Conf.\ (Tabarz, Germany, March 2014), {\em Progress in Probability}, vol.\ {\bf70}, Birkh\"auser/Springer Internat., Basel, Boston and Berlin, 2015, pp.\ 229--257; doi:10.1007/978-3-319-18660-3${}_-$\kern-1pt13.\ (Based on a plenary lecture given by the first author at that conference.) (Also: e-print, {\tt arXiv:1502.00878v3 [math.CV]}, 2015; IHES preprint, 
{\tt IHES/M/15/16}, 2015.)


\bibitem[LapRa\v{Z}u8]{mm} M.\ L.\ Lapidus, G.\ Radunovi\'c and D.\ \v{Z}ubrini\'c, Minkowski measurability criteria for compact sets and relative fractal drums in Euclidean spaces, preprint, 2016.



\bibitem[Lap-vFr1]{lapidusfrank}
M.\ L.\ Lapidus and M. van Frankenhuijsen, {\em Fractal Geometry and Number Theory}: {\em Complex Dimensions of
Fractal Strings and Zeros of Zeta Functions}, Birkh\"auser, Boston, 2000.



\bibitem[Lap-vFr2]{lapidusfrank12} M.\ L.\ Lapidus and M.\ van Frankenhuijsen,
{\em Fractal Geometry, Complex Dimensions
and Zeta Functions$:$
Geometry and Spectra of Fractal Strings}, second revised and enlarged edition (of the 2006 edn.), Springer Monographs in Mathematics, Springer, New York, 2013.



\bibitem[L\'eMen]{lemen}
J.\ L\'evy-V\'ehel and F.\ Mendivil, Multifractal and higher-dimensional
zeta functions, {\em Nonlinearity} No.\ 1, {\bf24} (2011),
259--276.


\bibitem[MorSep]{MorSep} G.\ Mora and J.\ M.\ Sepulcre, Privileged regions in critical strips of non-lattice Dirichlet polynomials, {\em Complex Anal.\ Oper.\ Theory} No.\ 4, {\bf 7} (2013), 1417--1426.


\bibitem[MorSepVi]{MorSepVi} G.\ Mora, J.\ M.\ Sepulcre and T.\ Vidal, On the existence of exponential polynomials with prefixed gaps, {\em Bull.\ London Math.\ Soc.}\ 
No.\ 6, {\bf 45} (2013), 1148--1162.






\bibitem[Ol1]{Ol1} L.\ Olsen, Multifractal tubes: Multifractal zeta functions, multifractal Steiner tube formulas and explicit formulas, in: {\em Fractal Geometry and Dynamical Systems in Pure and Applied Mathematics I$:$ Fractals in Pure Mathematics} (D.\ Carf\`i, M.\ L.\ Lapidus and M.\ van Frankenhuijsen, eds.), Contemporary Mathematics, vol.\ {\bf 600}, Amer.\ Math.\ Soc., Providence, R.\ I., 2013, 
pp.\ 291--326.\ (dx.doi.org/10.1090/conm/600/11920.)

\bibitem[Ol2]{Ol2} L.\ Olsen, Multifractal tubes, in: {\em Further Developments in Fractals and Related Fields}, Trends in Mathematics, Birkh\"auser/Springer, New York, 2013, pp.\ 161--191.


\bibitem[ParrPol1]{parrpol1} W.\ Parry and M.\ Pollicott, An analogue of the prime number
theorem and closed orbits of Axiom A flows, {\em Annals of Math.}
{\bf 118} (1983), 573--591.

\bibitem[ParrPol2]{parrpol2} W.\ Parry and M.\ Pollicott, {\em Zeta Functions and the Periodic
Orbit Structure of Hyperbolic Dynamics}, {\em Ast\'erisque}, vols.\
{\bf 187--188},  Soc.\ Math.\ France, Paris, 1990.

\bibitem[ParsSh1]{parsh1} A.\ N.\ Parshin and I.\ R.\ Shafarevich (eds.), {\em Number Theory},
vol.\ I, {\em Introduction to Number Theory}, Encyclopedia of Mathematical
Sciences, vol. {\bf 49}, Springer-Verlag, Berlin, 1995.\
(Written by Yu.\ I.\ Manin and A.\ A.\ Panchishkin.)

\bibitem[ParsSh2]{parsh2} A.\ N.\ Parshin and I.\ R.\ Shafarevich (eds.), {\em Number Theory},
vol.\ II, {\em Algebraic Number Fields}, Encyclopedia of Mathematical
Sciences, vol.\ {\bf 62}, Springer-Verlag, Berlin, 1992.\ (Written
by H.\ Koch.)


\bibitem[Pe]{pe2}  E.\ P.\ J.\ Pearse, Canonical self-affine tilings by iterated function
systems, {\it Indiana Univ.\ Math.\ J.} No. 6, {\bf 56} (2007), 3151--3169. (Also: e-print, {\tt arXiv:math.MG/0606111}, 2006.)

\bibitem[PeWi]{pewi}
E.\ P.\ J.\ Pearse and S.\ Winter, Geometry of canonical self-similar
tilings, {\it Rocky Mountain J.\ Math.} {\bf 42} (2012), 1327--1357. (Also:
e-print, {\tt arXiv:0811.2187}, 2009.)


\bibitem[Ra1]{ra1} G.\ Radunovi\'c, {\em Fractal Analysis of Unbounded Sets in Euclidean Spaces and Lapidus Zeta Functions}, Ph.\ D.\ Thesis, University of Zagreb, Croatia, 2015.

\bibitem[Ra2]{ra2} G.\ Radunovi\'c, Fractality and Lapidus zeta functions at infinity, {\em Mathematical Communications}, in press, 2016.\ (Also: e-print, {\tt arXiv:1510.06449v2 [math-ph]}, 2015.)





\bibitem[RatWi]{winter} J.\ Rataj and S.\ Winter, Characterization of Minkowski measurability in terms of surface area, {\it J.\ Math.\ Anal.\ Appl.} {\bf 400} (2013), 120--132.\ (Also:
e-print, {\tt arXiv:\,\,1111.1825v2 [math.CA]}, 2012.)




\bibitem[Rue1]{ruelle2} D.\ Ruelle, Zeta functions for expanding maps and Anosov
flows, {\em Invent.\ Math.} {\bf 34} (1978), 231--242.

\bibitem[Rue2]{ruelle3} D.\ Ruelle, {\em Thermodynamic Formalism}, Addison-Wesley,
Reading, 1978.

\bibitem[Rue3]{ruelle4} D.\ Ruelle, {\em Dynamical Zeta Functions for Piecewise Monotone
Maps of the Interval}, CRM Monographs Ser.\ (Centre de Recherches Math\'ematiques, Universit\'e de Montr\'eal), vol.\ {\bf 4},
Amer.\ Math.\ Soc., Providence, R.\ I., 1994.




\bibitem[Ser]{serre} J.-P.\ Serre, {\em A Course in Arithmetic}, English translation,
Springer-Verlag, Berlin, 1973.




\bibitem[Tep1]{tep1} A.\ Teplyaev, Spectral zeta functions of symmetric fractals,
in: {\em Fractal Geometry and Stochastics III}, Progress in Probability, vol.\ {\bf 57}, Birkh\"auser-Verlag, Basel,
2004, pp.\ 245--262.


\bibitem[Tep2]{tep2} A.\ Teplyaev, Spectral zeta functions of fractals and the complex
dynamics of polynomials, {\em Trans.\ Amer.\ Math.\ Soc.}\ {\bf 359}
(2007), 4339--4358. (Also: e-print, {\tt arXiv:math.SP/0505546},
2005.)




\end{thebibliography}
\end{document}